\newtheorem{theorem}{Theorem}
\newtheorem{lemma}{Lemma}
\newtheorem{redrule}{Rule}
\newtheorem{corollary}{Corollary}
\newtheorem{cla}{Claim}
\theoremstyle{definition}
\newtheorem{definition}{Definition}
\theoremstyle{definition}
\newtheorem*{claimproof}{\normalfont{\textit{Proof}}}
\newcommand{\ko}{k_1}
\newcommand{\Oh}{\mathcal{O}}
\newcommand{\cC}{\mathcal{C}}
\newcommand{\sn}{\lceil \sqrt{n} \, \rceil}
\newcommand{\badStuffHappens}{\text{\normalfont{NP}} \subseteq \text{\normalfont{coNP}} / \text{\normalfont{poly}}}
\DeclareMathOperator{\down}{down}
\DeclareMathOperator{\up}{up}
\DeclareMathOperator{\midd}{mid}
\DeclareMathOperator{\Confl}{Confl}
\DeclareMathOperator{\Seq}{Seq}
\newif\iflong\longtrue
\newcommand{\proofparagraph}[1]{\par\smallskip\textit{#1}}
\title{Your Rugby Mates Don't Need to Know your Colleagues: Triadic Closure with Edge~Colors\footnote{A preliminary version of this work appeared in \emph{Proceedings of the 11th International Conference on Algorithms and Complexity (CIAC'19)}  held in Rome, Italy in May 2019. The long version contains full proofs of all statements, and a slightly stronger NP-hardness result for~\textsc{Multi-STC} with two strong colors.}}
\author[1]{Laurent Bulteau}
\author[2]{Niels Grüttemeier}
\author[2]{Christian Komusiewicz}
\author[3]{Manuel Sorge\footnote{Supported by the People Programme
  (Marie Curie Actions) of the European Union's Seventh Framework
  Programme (FP7/2007-2013) under REA grant agreement number
  {631163.11}, the Israel Science Foundation (grant no. 551145/14),
  and by the European Research Council (ERC)  
  under the European Union’s 
  Horizon 2020 research and innovation programme under grant
  agreement number~714704.}}
\affil[1]{CNRS, Université Paris-Est Marne-la-Vallée, Paris,
  France\\ \texttt{laurent.bulteau@u-pem.fr}}
\affil[2]{Fachbereich
Mathematik und Informatik, Philipps-Universität Marburg, Marburg,
Germany\\ \texttt{\{niegru,komusiewicz\}@informatik.uni-marburg.de}}
\affil[3]{Faculty of Mathematics, Informatics and
  Mechanics, University of Warsaw, Warsaw, Poland\\
  \texttt{manuel.sorge@mimuw.edu.pl}}
\date{}
\begin{document}

\maketitle

\begin{abstract} {Given an undirected graph~$G=(V,E)$ the NP-hard \textsc{Strong Triadic
      Closure (STC)} problem asks for a labeling of the edges as \emph{weak} and \emph{strong}
    such that at most~$k$ edges are weak and for each induced~$P_3$ in~$G$ at least one edge is
     weak. In this work, we study the following generalizations of \textsc{STC} with~$c$
    different strong edge colors. In \textsc{Multi-STC} an induced~$P_3$
    may receive two strong labels as long as they are different. In
    \textsc{Edge-List Multi-STC} and \textsc{Vertex-List Multi-STC} we may additionally
    restrict the set of permitted colors for each edge of~$G$. We show that, under the
    Exponential Time Hypothesis (ETH), \textsc{Edge-List Multi-STC} and \textsc{Vertex-List
      Multi-STC} cannot be solved in time~$2^{o(|V|^2)}$. We then \iflong proceed with a parameterized complexity analysis in which we \fi  extend previous fixed-parameter tractability results and kernelizations for STC~[Golovach
    et al., Algorithmica~'20, Grüttemeier and Komusiewicz, Algorithmica~'20] to the three variants with multiple
    edge colors or outline the limits of such an extension.}
\end{abstract}

\marginpar{\vspace{7cm}\includegraphics[width=30px]{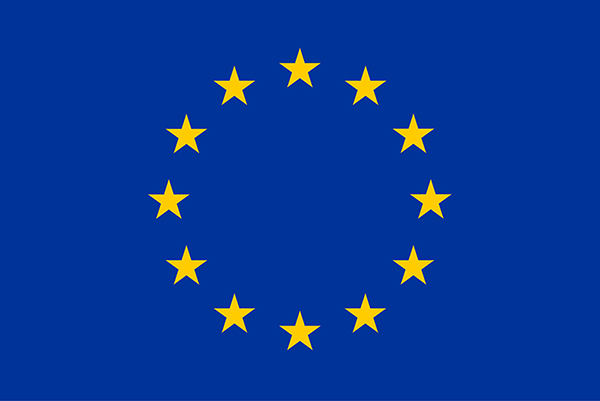}}
\marginpar{\includegraphics[width=30px]{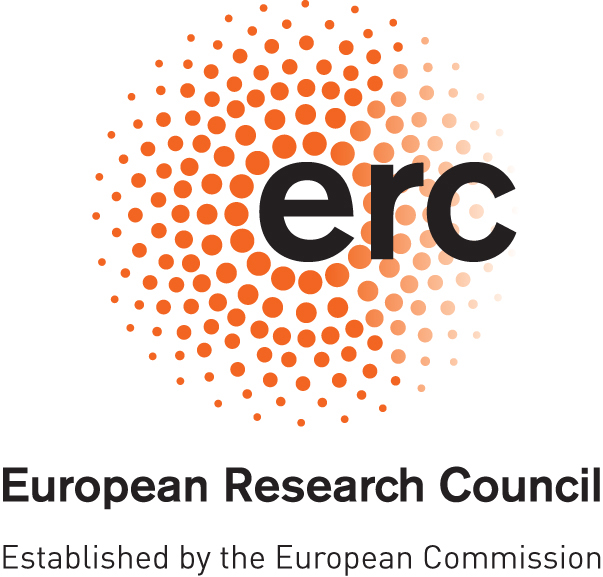}}

\subsection*{Acknowledgment}

We would like to thank the reviewers of \textit{Journal of Computer and System Sciences} for their helpful comments. In particular, one reviewer pointed out a simpler proof for Theorem~\ref{thm:allcNPh}~b).

This study was initiated at the 7th annual research retreat of the Algorithmics and
Computational Complexity group of TU Berlin, Darlingerode, Germany, March 18th--23rd,
2018.

\section{Introduction}
\label{sec:intro}
Social networks represent relationships between humans such as 
 acquaintance and friendship in online social networks. 
One task in social network analysis is to
determine the strength\iflong~\cite{Gra83,RTG17,ST14,XNR10} \else~\cite{RTG17,ST14} \fi and type~\cite{DNG07,TLK12, YL12} of the relationship signified by each 
edge of the network.
One approach to infer strong ties goes back to the notion of \emph{strong triadic closure}~\iflong due to
Granovetter~\cite{Gra73,Gra83} \else~\cite{Gra73} \fi which postulates that, if an agent has strong relations to
two other agents, then these two 
should have at least a weak relation. Following
this assertion, Sintos and Tsaparas~\cite{ST14} proposed to find strong ties in
social networks by labeling the edges as weak or strong such
that the strong
triadic closure property is fulfilled and the number of strong edges is maximized.

Sintos and Tsaparas~\cite{ST14} also formulated an extension where agents may have~$c$ different types of strong relationships. In this 
model,
the strong triadic closure property only applies to edges of the same strong type. This is
motivated by the observation that agents may very well have close relations to agents
that do not know each other if these relations themselves arise in segregated contexts. For
example, it is quite likely that one's rugby teammates do not know all of one's close
colleagues. The edge labelings\iflong\ with up to~$c$ strong colors\fi\ that model this variant of
strong triadic closure and the corresponding problem are defined as follows.

\begin{definition}   A $c$-\emph{labeling}
  $L=(S^1_L,\ldots,S^c_L,W_L)$ of an undirected graph~$G=(V,E)$ is a partition of the edge set~$E$ into~$c+1$ color classes. The edges in~$S^i_L$,~$i\in \{1, \dots, c\}$, are
   \emph{strong} and the edges in~$W_L$ are \emph{weak};
  $L$ is an \emph{STC-labeling} if there exists no pair of edges
  $\{ u, v \} \in S^i_L$ and $\{ v, w\} \in S^i_L$ such that~$\{ u, w \}\not \in E$.
  We say that such a pair of edges \emph{violates STC} for a \(c\)-labeling understood from the~context.
\end{definition}
\begin{quote}
  \textsc{Multi Strong Triadic Closure (Multi-STC)}\\
  \textbf{Input}: An undirected graph~$G=(V, E)$ and
  integers~$c \in \mathds{N}$ and~$k \in \mathds{N}$.\\
  \textbf{Question}: Is there a $c$-colored STC-labeling~$L$ with~$|W_L| \leq k$?
\end{quote}
We  refer to the special case~$c=1$ as \textsc{Strong Triadic Closure} (STC). STC, and thus \textsc{Multi-STC}, is NP-hard~\cite{ST14}. We study the complexity of \textsc{Multi-STC} and two generalizations of \textsc{Multi-STC} which are defined as follows.

The first generalization deals with the case where one restricts the set of possible
relations for some agents. Assume, for example, that strong edges
correspond to family relations or professional relations. If one
knows the profession of some agents, then this knowledge can be modeled by
introducing different strong colors for each profession and constraining the sought edge
labeling in such a way that each agent may receive only a strong edge corresponding to a
familial relation or to his profession. In other words, for
each agent we are given a list of\iflong\ allowed\fi\ strong colors that may be assigned to
incident relationships. \iflong Formally, we arrive at the following extension of STC-labelings.\fi
\begin{definition}
Let $G=(V,E)$ be a graph, $\Lambda: V \rightarrow 2^{\lbrace 1,2, \dots, c\rbrace}$ a mapping for some $c \in \mathds{N}$, and $L=(S^1_L, \dots, S^c_L,W_L)$ a $c$-colored STC-labeling. We say that an edge $\{v, w\} \in E$ \emph{satisfies the $\Lambda$-list property under $L$} if $\{ v, w \} \in W_L$ or $ \{v, w \} \in S^{\alpha}_L$ for some $\alpha \in \Lambda(v) \cap \Lambda(w)$. We call a $c$-colored STC-labeling \emph{$\Lambda$\nobreakdash-satisfying} if every edge $e \in E$ satisfies the $\Lambda$-list property under $L$.
\end{definition}
\begin{quote}
  \textsc{Vertex-List Multi Strong Triadic Closure (VL-Multi-STC)}\\
  \textbf{Input}: An undirected graph~$G=(V, E)$,
  integers~$c \in \mathds{N}$ and~$k \in \mathds{N}$, and vertex lists~$\Lambda: V \rightarrow 2^{\lbrace 1,2, \dots, c\rbrace}$.\\
  \textbf{Question}: Is there a $\Lambda$-satisfying STC-labeling~$L$ with~$|W_L| \leq k$?
\end{quote}
\textsc{Multi-STC} is the special case where~$\Lambda(v)=\{1,\dots ,c\}$ for all~$v\in V$.

The second generalization deals with the case where one restricts the set of possible relationship types for each relation. For example, if two rugby players live far apart, it is unlikely that they play\iflong\ rugby\fi\ together. We might model this knowledge by restricting the number of relationship types for this specific relationship. In other words, for each relationship we are given a list of possible strong colors that may be assigned with to this relationship.
Observe that this is an actual generalization of~\textsc{VL-Multi-STC}: Consider three agents~$v_1$, $v_2$, and~$v_3$ that are pairwise related in the network. Assume the relationship between~$v_1$ and~$v_2$ and the relationship between~$v_1$ and~$v_3$ are both restricted to `rugby' and `colleagues'. If now the relationship between~$v_2$ and~$v_3$ is restricted to `ballet class' and `drinking buddies', this situation cannot be expressed with vertex lists. This more general constraint is formalized \iflong as follows.
\begin{definition}
Let $G=(V,E)$ be a graph, $\Psi: E \rightarrow 2^{\lbrace 1,2, \dots, c\rbrace}$ a mapping for some value $c \in \mathds{N}$ and $L=(S^1_L, \dots, S^c_L,W_L)$ a $c$-colored STC-labeling. We say that an edge $e \in E$ \emph{satisfies the $\Psi$-list property under $L$} if $e \in W_L$ or $ e \in S^{\alpha}_L$ for some $\alpha \in \Psi(e)$. We call a $c$-colored STC-labeling \emph{$\Psi$-satisfying} if every edge $e \in E$ satisfies the $\Psi$-list property under $L$.
\end{definition}
This leads to the most general problem of this work.
\else 
as for vertex lists with two differences: we are given edge lists~$\Psi: E \rightarrow 2^{\lbrace 1,2, \dots, c\rbrace}$ and for each edge~$e$ we have~$e \in W_L$ or~$e \in S^{\alpha}_L$ for some $\alpha \in \Psi(e)$.
\fi
\begin{quote}
  \textsc{Edge-List Multi Strong Triadic Closure (EL-Multi-STC)}\\
  \textbf{Input}: An undirected graph~$G=(V, E)$,
  integers~$c \in \mathds{N}$ and~$k \in \mathds{N}$ and edge lists $\Psi: E \rightarrow 2^{\lbrace 1,2, \dots, c\rbrace}$.\\
  \textbf{Question}: Is there a $\Psi$-satisfying STC-labeling~$L$ with~$|W_L| \leq k$?
\end{quote}

From a more abstract point of view, in STC we need to cover all induced~$P_3$s, the paths on
three vertices, in a graph by selecting at most~$k$ edges, a natural graph-theoretic
task. Moreover, as we discuss later, all STC-problems studied here have close ties
to 
finding proper vertex colorings in a related graph, the Gallai graph~\cite{Gal67,Le96,Sun91} of
the input graph~$G$. Finally, in triangle-free graphs every pair of incident edges forms an induced~$P_3$. Consequently, on triangle-free graphs \textsc{Multi-STC} is equivalent to the \textsc{Edge-Colorable Subgraph} problem which asks to determine whether we can delete at least~$k$ edges from~$G$ such that the resulting graph has a proper edge-coloring with at most~$c$ colors. Hence, a study of \textsc{Multi-STC} and its two proposed generalizations is motivated not only by the known applications of
\textsc{Multi-STC} in social network analysis~\cite{ST14,HTW+14} or plausible applications of
the two generalizations, but also from a pure combinatorial
and computational complexity point of view. 

\paragraph{Related Work.}

So far, most algorithmic work has focused on STC~\cite{GHK+20,GK20,KNP18,ST14,KP17}. For
example, STC is NP-hard even on graphs with maximum degree
four~\cite{KNP18}. Motivated by this NP-hardness, the parameterized complexity of STC was
studied. The two main parameters \iflong under consideration \fi so far are the number~$k$ of weak edges and
the number~$\ell:=|E|-k$ of strong edges in an STC-labeling with a minimal number of weak
edges. The fixed-parameter tractability for~$k$ follows from a reduction to
\textsc{Vertex Cover}~\cite{ST14}. Moreover, STC admits a~$4k$-vertex kernel~\cite{GK20}. For~$\ell$,
STC is fixed-parameter tractable but does not admit a polynomial
kernel~\cite{GHK+20,GK20}. Golovach et al.~\cite{GHK+20} considered a further 
generalization of STC where the aim is to color at most~$k$ edges weak such that each induced
subgraph isomorphic to a fixed graph~$F$ has at least one weak edge. Another variant of STC
asks for a labeling in which some prespecified communities are connected via strong
edges~\cite{HTW+14,BTZ19}. In recent work, it was shown that~\textsc{Multi-STC} admits kernels for parameters that measure the distance of the input graph to a class of low-degree graphs for which \textsc{Multi-STC} can be solved in polynomial time~\cite{GKM20}.

\paragraph{Our Results.} We study the classic, fine-grained, and parameterized complexity of~\textsc{Multi-STC} and its two generalizations.
In a nutshell, we obtain
strong hardness results for \textsc{VL-Multi-STC} and \textsc{EL-Multi-STC}, showing that they cannot be
solved in~$2^{o(n^2)}$ time on $n$-vertex graphs when assuming the Exponential Time Hypothesis
(ETH)~\cite{IPZ01}. On the positive side, we show that previous fixed-parameter tractability
and kernelization results for STC~\cite{ST14,GHK+20,GK20} can be extended even to the most
general problem \textsc{EL-Multi-STC} when~$c$ is an additional parameter. In detail, we obtain the following results.

\begin{table}[t]
  \caption{An overview of the parameterized complexity results for the parameters number~$k$ of weak edges, number~$c$ of colors, and number~$k_1$ of weak edges for~$c=1$.}
  \centering
  \begin{tabular}{lccc}
    \toprule
    Parameter & ~~~~\textsc{Multi-STC}~~~~ & ~~~\textsc{VL-Multi-STC}~~~ & ~~~\textsc{EL-Multi-STC}~~~ \\
    \midrule
    $k$ & \multicolumn{3}{c}{FPT if $c\le 2$, NP-hard for~$k=0$ for all $c\ge 3$}\\
    $k_1$ & $4k_1$-vertex kernel (Corollary~\ref{Corollary: k1 kernel}) & \multicolumn{2}{c}{W[1]-hard (Theorem~\ref{VertList-W1-by-ko})}  \\
    \midrule
              & \multicolumn{3}{c}{$\Oh({(c+1)}^{k_1}\cdot (cm+nm))$ time (Theorem \ref{Theorem: (c+1)^ko Algorithm for EL-M-STC})}   \\
    $(c,k_1)$  & $4k_1$-vertex kernel (Corollary~\ref{Corollary: k1 kernel}) &  \multicolumn{2}{c}{no polynomial kernel (Corollary~\ref{Corollary: VL-Mult-STC no poly kernel})}   \\
    & &  \multicolumn{2}{c}{$2^{c+1}\ko$-vertex kernel (Corollary~\ref{Corollary: k1 kernel})}   \\\bottomrule
  \end{tabular}
  \label{tab:results}
\end{table}

First, we observe that previous results on the \textsc{Edge Coloring} problem give, for every fixed~$c$, a dichotomy of \textsc{Multi-STC} into NP-hard and polynomial-time solvable instances with respect to the maximum degree of the input graph. In particular, for all~$c\ge 3$, \textsc{Multi-STC} is NP-hard even if~$k=0$. 
\iflong For~\textsc{VL-Multi-STC} and \textsc{EL-Multi-STC}, we then show that even an algorithm that is
single-exponential in the number~$n$ of vertices of the input graph is unlikely. More
precisely, we show \else We then show \fi that, assuming the ETH, there is no~$2^{o(n^2)}$-time algorithm for
\textsc{VL-Multi-STC} and \textsc{EL-Multi-STC} even if~$k=0$ and~$c\in \Oh(n)$. \iflong This result is achieved by a compression of 3-CNF formulas~$\phi$ where each variable occurs in a constant number of clauses into graphs with~$\Oh(\sqrt{|\phi|})$ vertices. The NP-hardness of \textsc{Multi-STC} even if~$k=0$ implies that for all three problems, there is presumably no polynomial-time approximation algorithm. Furthermore, for \textsc{VL-Multi-STC} and \textsc{EL-Multi-STC} it is not even possible to compute an approximation within~$2^{o(n^2)}$~time unless ETH fails.  \fi

We then proceed to a parameterized complexity analysis \iflong for the three problems\fi; see Table~\ref{tab:results} for an overview. Since all variants are NP-hard
even if~$k=0$, we consider a structural parameter related to~$k$. This parameter,
denoted by~$k_1$, is the minimum number of weak edges needed in an STC-labeling for~$c=1$. 
Thus, if~$k_1$ is known, then we may immediately accept all instances with~$k \ge k_1$; in this sense one may assume~$k\le k_1$ for~\textsc{Multi-STC}.
For \textsc{VL-Multi-STC} and \textsc{EL-Multi-STC} this is not necessarily true since some lists might be empty which enforces weak edges.  

The
parameter~$k_1$ is relevant for two reasons: First, it allows us to determine to which extent the FPT algorithms for \textsc{STC} carry over to \textsc{Multi-STC},
\textsc{VL-Multi-STC}, and \textsc{EL-Multi-STC}.  
Second, $k_1$~has a structural interpretation: it is the vertex cover
number of the Gallai graph of the input graph~$G$. We believe that this parameterization is of
 independent interest and
might be useful for other 
problems. 
The specific results are as follows. We provide
an~$\Oh({(c+1)}^{k_1}\cdot (c|E|+|V|\cdot |E|))$-time algorithm for the most general
problem, \textsc{EL-Multi-STC}. We then use this algorithm to show that \textsc{Multi-STC}
is fixed-parameter tractable when parameterized by~$k_1$ alone. In contrast,
\textsc{VL-Multi-STC} and \textsc{EL-Multi-STC} parameterized by~$k_1$ alone are~W[1]-hard
as we show. Moreover, both problems are unlikely to admit a kernel that is polynomial
in~$c+\ko$. We do, however, obtain a kernel that is exponential in~$c$ and polynomial
in~$k$ by extending the~$4k_1$-vertex kernelization~\cite{GK20} from STC. More precisely,
we obtain a~$2^{c+1}\cdot \ko$-vertex kernel for \textsc{VL-Multi-STC} and
\textsc{EL-Multi-STC}. In the case of \textsc{Multi-STC} the kernelization gives a kernel with at most~$4k_1$ vertices, thus extending the linear-vertex kernel from~$c=1$ to arbitrary values of~$c$.  

This work is organized as follows. In Section~\ref{sec:prelim}, we present our notation and
specify the relation of STC and its variants to vertex coloring problems in Gallai graphs. This
will provide some first running-time upper bounds and explains why~$k_1$ is a natural
structural parameter. In Section~\ref{sec:hardness}, we provide the NP-hardness results and the
ETH-based lower bound for \textsc{EL-Multi-STC} and \textsc{VL-Multi-STC}. In
Section~\ref{sec:pc} we provide fixed-parameter tractability and intractability results,
including the kernelization algorithms.

\section{Preliminaries}
\label{sec:prelim}

\paragraph{Notation.}We consider undirected graphs~$G=(V,E)$ where~$n:=|V|$ denotes the number of vertices
and~$m:=|E|$ denotes the number of edges in~$G$. For a vertex~$v\in V$ we denote
by~$N_G(v):=\{u\in V\mid \{u,v\}\in E\}$ the \emph{open neighborhood of~$v$} and
  by~$N_G[v]:=N(v)\cup \{v\}$ the \emph{closed neighborhood of~$v$}. 
  For any two vertex sets $V_1,V_2 \subseteq V$, we let
$E_G(V_1,V_2) := \lbrace \lbrace v_1, v_2 \rbrace \in E \mid v_1 \in V_1, v_2 \in V_2 \rbrace$
\iflong denote the set of edges between $V_1$ and $V_2$. For any vertex set $V' \subseteq V$, we
let \fi $E_G(V') := E_G(V',V')$ be \iflong the set of edges between the vertices of $V'$. \fi We may omit
  the subscript~$G$ if the graph is clear from the context. The \emph{subgraph induced by a vertex
    set}~$S$ is denoted by~$G[S]:=(S,E_G(S))$. A subset~$S \subseteq V$ is called~\emph{vertex cover} in~$G$, if every edge~$e \in E$ has at least one endpoint in~$S$. In the~\textsc{Vertex Cover} problem, the input is an undirected graph~$G$ and an integer~$k$ and the question is, whether there exists a vertex cover of size at most~$k$ in~$G$.
%
%
  A \emph{proper vertex coloring} with $c$ colors for some~$c\in \mathds{N}$ is a mapping
  $a \colon V \to \{1, \dots, c\}$ such that there is no edge~$\{u, v\} \in E$ with
  $a(u) = a(v)$. 
  \iflong Throughout this work we call a $c$-colored STC-labeling $L$ \emph{optimal (for a graph $G$
    and lists $\Psi$)} if $L$ is $\Psi$-satisfying and the number of weak edges $|W_L|$ is
  minimal. \fi

\paragraph{Parameterized Complexity.} In parameterized complexity~\cite{Cyg+15,DF13,FG06,Nie06} we measure the running time of algorithms depending on the total input size~$n$ and a problem parameter~$k$. A problem is called \emph{fixed-parameter tractable (FPT)} if it can be solved in~$f ( k ) \cdot \text{poly}( n )$ time for some computable function~$f$. 
An important tool in the development of parameterized algorithms is \emph{problem kernelization}. A problem kernelization is a polynomial-time preprocessing of the input data by \emph{data reduction rules}. Given an instance~$I$ with parameter~$k$, the goal is to compute an equivalent instance~$I'$ with parameter~$k' \leq k$ in polynomial time where~$I'$ has size~$g(k)$ for some computable function~$g$. The instance~$I'$ is called~\emph{problem kernel} and~$g(k)$ is called the \emph{size of the kernel}. If~$g$ is a polynomial, then the problem kernelization is called~\emph{polynomial}.

A \emph{parameterized reduction} maps any instance~$( I, k )$ of some parameterized problem~$L$ in FPT time to an equivalent instance~$( I' , k' )$ of a parameterized problem~$L'$ such that~$k' \leq f ( k )$ for some computable funktion~$f$. If the reduction can be performed in polynomial time and~$f$ is a polynomial, the parameterized reduction is a \emph{polynomial parameter transformation}. If a problem is~W[1]-hard for a parameter~$k$, then it is assumed to be not fixed-parameter tractable for~$k$. If there is a parameterized reduction from a W[1]-hard problem~$L$ parameterized by~$k$ to a problem~$L'$ parameterized by~$k'$, then the problem~$L'$ parameterized by~$k'$ is W[1]-hard. Some problems that are fixed-parameter tractable do not admit a polynomial kernel unless $\badStuffHappens$. By using polynomial parameter transformations we can transfer these kernel lower bounds to other problems~\cite{BodlaenderJK14}. The
 \emph{Exponential Time Hypothesis (ETH)} is a standard complexity theoretical conjecture used to prove lower bounds. It implies that \textsc{3\nobreakdash-CNF-SAT} cannot be solved in~$2^{o(|\phi|)}$
  time where~$\phi$ denotes the input formula~\cite{IPZ01}.

\paragraph{Gallai Graphs, \boldmath$c$-Colorable Subgraphs, and their Relation to STC.}
 \textsc{Multi-STC} can be formulated in terms of so-called \emph{Gallai graphs}~\iflong \cite{Gal67,Le96,Sun91}\else \cite{Gal67}\fi. 
 \begin{definition}
   Given a graph $G=(V,E)$, the \emph{Gallai graph} $\tilde{G}=(\tilde{V},\tilde{E})$ of~$G$ is
   defined by $\tilde{V} := E$ and
   $\tilde{E}:= \{ \{e_1, e_2\} \mid e_1 \text{ and } e_2 \text{ form an induced }P_3 \text{ in
   }G\}$.
 \end{definition}
 The Gallai graph of an~$n$-vertex and~$m$-edge graph has~$\Oh(m)$ vertices and~$\Oh(mn)$
 edges. \iflong Gallai graphs do have restricted structure in the sense that not every graph is a Gallai graph of some other graph. However, for every graph~$H$, there is a Gallai
 graph which contains~$H$ as subgraph~\cite{Le96}. \fi
 For~$c=1$ (in other words for STC), \iflong the relation to Gallai graphs is as follows: A \else a \fi graph~$G=(V,E)$ has
 an STC-labeling with at most~$k$ weak edges if and only if its Gallai graph has a vertex cover
 of size at most~$k$~\cite{ST14}. This gives an~$\Oh(1.28^k + nm)$-time algorithm by using the
 current fastest algorithm for \textsc{Vertex Cover}~\cite{CKX10}.
 More generally, a graph~$G=(V,E)$ has a $c$-colored
 STC-labeling with at most~$k$ weak edges if and only if the Gallai~graph of~$G$ has a properly $c$-colorable induced
 subgraph on~$m-k$ vertices~\cite{ST14}.

 In the following, we extend this relation to \textsc{EL-Multi-STC} by considering list-colorings
 of the Gallai graph. The special cases \textsc{VL-Multi-STC}, \textsc{Multi-STC}, and STC
 nicely embed into the construction.
 First, let us define the problem that we need to solve in the Gallai graph formally. Given a
 graph $G=(V,E)$, we call a mapping~$\chi: V \rightarrow \{0,1,\dots,c\}$ a
 \emph{subgraph-$c$-coloring} if there is no edge~$\{u,v\} \in E$ with $\chi(u)=\chi(v)\neq 0$.
 Vertices $v$ with $\chi(v)=0$ correspond to deleted vertices. The \textsc{List-Colorable Subgraph}
 problem is \iflong{}now as follows.
 \needspace{5\baselineskip}
\begin{quote}
  \textsc{List-Colorable Subgraph}\\
  \textbf{Input}: An undirected graph~$G=(V, E)$ and
  integers~$c \in \mathds{N}$,~$k \in \mathds{N}$ and lists $\Gamma: V \rightarrow 2^{\{1, \dots, c\}}$.\\
  \textbf{Question}: Is there a subgraph-$c$-coloring $\chi:V \rightarrow \{0,1,\dots, c\}$ with $|\{v \in V \mid \chi(v)=0\}| \leq k$ and $\chi(w) \in \Gamma(w)\cup \{0\}$ for every $w \in V$?
\end{quote}
\else
, given an undirected graph~$G=(V, E)$, 
  integers~$c, k \in \mathds{N}$, and lists $\Gamma: V \rightarrow 2^{\{1, \dots, c\}}$, to decide whether there is  a subgraph-$c$-coloring $\chi:V \rightarrow \{0,1,\dots, c\}$ with~$|\{v \in V \mid \chi(v)=0\}| \leq k$ and $\chi(w) \in \Gamma(w)\cup \{0\}$ for every $w \in V$.
\fi

\iflong \textsc{EL-Multi-STC} and \textsc{List-Colorable Subgraph} have the following relationship.\fi

\begin{restatable}{proposition}{Gallaieq}
    \label{Proposition: Gallai equivalence}
An instance $(G,c,k, \Psi)$ of \textsc{EL-Multi-STC} is a yes-instance if and only if $(\tilde{G}, c, k, \Psi)$ is a yes-instance of \textsc{List-Colorable Subgraph}, where $\tilde{G}$ is the Gallai graph of $G$.
\end{restatable}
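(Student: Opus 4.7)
The plan is to establish a natural bijection between $c$-labelings of $G$ and mappings $\chi\colon E\to\{0,1,\dots,c\}$, and then verify that under this bijection each of the three defining conditions on one side corresponds exactly to the analogous condition on the other side. Concretely, given a $c$-labeling $L=(S^1_L,\dots,S^c_L,W_L)$ of $G$, define $\chi_L\colon E\to\{0,1,\dots,c\}$ by $\chi_L(e)=0$ if $e\in W_L$ and $\chi_L(e)=i$ if $e\in S^i_L$. Since $L$ is a partition of $E$, this is well-defined; conversely any $\chi$ yields a $c$-labeling by taking $W_L:=\chi^{-1}(0)$ and $S^i_L:=\chi^{-1}(i)$. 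Note that the vertex set of $\tilde{G}$ is $E$, so $\chi_L$ is indeed a vertex labeling of $\tilde{G}$.

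The first condition to check is the counting constraint: $|W_L|\le k$ iff $|\{v\in\tilde{V}\mid\chi_L(v)=0\}|\le k$, which is immediate from the definition.

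Next I would check the STC condition. By definition, $L$ fails to be an STC-labeling exactly when there exist edges $e_1=\{u,v\}\in S^i_L$ and $e_2=\{v,w\}\in S^i_L$ with $\{u,w\}\notin E$, i.e., when some induced $P_3$ in $G$ has both of its edges in some $S^i_L$. By the definition of the Gallai graph, such a pair $e_1,e_2$ is precisely an edge of $\tilde{G}$, so the STC condition on $L$ translates to: no edge of $\tilde{G}$ has both endpoints receiving the same nonzero color under $\chi_L$, which is exactly the subgraph-$c$-coloring condition.

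Finally, for the list condition, an edge $e\in E$ satisfies the $\Psi$-list property under $L$ iff $e\in W_L$ or $e\in S^\alpha_L$ for some $\alpha\in\Psi(e)$, which under the bijection is exactly $\chi_L(e)\in\Psi(e)\cup\{0\}$; since $e$ is a vertex of $\tilde{G}$ and the vertex lists of the \textsc{List-Colorable Subgraph} instance are taken to be $\Psi$ as well, $L$ is $\Psi$-satisfying iff $\chi_L$ respects the lists. Combining the three equivalences yields the desired biconditional. There is no real obstacle here: the proposition is essentially a restatement of the problem in the language of the Gallai graph, and the main thing to be careful about is noting that induced $P_3$s in $G$ are precisely the edges of $\tilde{G}$, so the STC constraint becomes the proper-coloring constraint restricted to non-deleted vertices.
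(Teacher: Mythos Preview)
Your proof is correct and follows essentially the same approach as the paper: both set up the natural bijection $L\leftrightarrow\chi_L$ between $c$-labelings of $G$ and $\{0,1,\dots,c\}$-valued maps on the vertices of $\tilde G$, and then verify that the weak-edge count, the STC condition, and the $\Psi$-list property correspond respectively to the zero-color count, the subgraph-$c$-coloring condition, and the list constraint. The paper organizes the argument as a forward and a backward implication rather than three separate equivalences, but the content is the same.
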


\begin{proof}
To prove the proposition we first describe how to transform~$c$-colored~$\Psi$-satisfying labeling~$L$ for~$G$ into a coloring~$\chi_L$ for~$\tilde{G}$ that satisfies~$\chi_L(v) \in \Psi(v) \cup \{0\}$ for every vertex~$v$ of~$\tilde{G}$ and vice versa, such that the number of weak edges under~$L$ and the number of vertices that receive color~$0$ under~$\chi_L$ are the same. Afterwards, we show that~$L$ is an STC-labeling if and only if~$\chi_L$ is a subgraph-$c$-coloring.

 For any $c$-colored labeling~$L$ we set~$\chi_L(e):= i$ for each edge in~$S^i_L$,~$1\le i\le c$,
  and~$\chi_L(e)=0$ for each edge in~$W_L$. By definition, the $c$-colored labeling~$\chi$
  is~$\Psi$-satisfying if and only if~$\chi_L$ satisfies the list constraints in the
  \textsc{List-Colorable Subgraph} instance, that is, $\chi_L(v)\in \Psi(v)\cup \{0\}$ for each
  vertex~$v$. Moreover, the number of weak edges in~$L$ is precisely the number of vertices
  in~$\tilde{G}$ that receive color~$0$. By symmetric arguments, each
  subgraph-$c$-coloring~$\chi$ that satisfies~$\Psi$ and has $k$~vertices~$v$ such that~$\chi(v)=0$ defines a
  $c$-colored labeling~$L_\chi$ of~$G$ that is~$\Psi$-satisfying and has~$k$ weak edges.

$(\Rightarrow)$ Let~$L$ be an STC-labeling. We show that for all adjacent vertices~$u$ and~$v$
  in~$\tilde{G}$ either ~$\chi_L(u)\neq \chi_L(v)$ or~$\chi_L(u)=0$ or~$\chi_L(v)=0$. Assume
  that~$\chi_L(u)\neq 0$ and~$\chi_L(v)\neq 0$. Then, the corresponding edges~$u$ and~$v$ in~$G$ are colored 
  with some strong colors~$S^i_L$ and~$S^j_L$. Since~$u$ and~$v$ are adjacent in~$\tilde{G}$,~$u$
  and~$v$ form a~$P_3$ in~$G$ and since~$L$ is a~$c$-colored~STC-labeling, we have~$i\neq
  j$. Thus,~$\chi(u)\neq \chi(v)$.

$(\Leftarrow)$   Let~$\chi$ be a subgraph-$c$-coloring. We show that~$L_{\chi}$ is an STC-labeling. Consider a pair of incident edges~$u$ and~$v$ that form a~$P_3$
  in~$G$. If~$\chi(u)=0$ or~$\chi(v)=0$, then one of the two edges is weak
  in~$L_\chi$. Otherwise, we have~$\chi(u)\neq \chi(v)$ because~$u$ and~$v$ are adjacent
  in~$\tilde{G}$. Thus,~$L_\chi$ assigns~$u$ and~$v$ to different strong colors. Hence,~$L_\chi$
  is an STC-labeling.
\end{proof}

The correspondence from Proposition~\ref{Proposition: Gallai equivalence} means that we can solve \textsc{EL-Multi-STC} by solving \textsc{List-Colorable Subgraph} on the Gallai graph of the input graph. To this end we give a running time bound for \textsc{List-Colorable Subgraph}. The algorithm for obtaining this running time is a straightforward dynamic program over subsets. Since we
are not aware of any concrete result in the literature implying this running time bound, we
provide a proof for the sake of completeness.
\begin{restatable}{proposition}{listcolorsubg}\label{prop:elbym}
  \textsc{List-Colorable Subgraph} can be solved in~$\Oh(3^{n}\cdot c^2 (n+m))$~time. \textsc{EL-Multi-STC} can be solved in~$\Oh(3^m\cdot c^2 mn)$ time.
\end{restatable}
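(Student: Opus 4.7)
The plan is to design a standard subset dynamic program for \textsc{List-Colorable Subgraph} and then pull back the running time to \textsc{EL-Multi-STC} via Proposition~\ref{Proposition: Gallai equivalence}. Concretely, I would define a table $f\colon 2^V \times \{0,1,\dots,c\} \to \{\text{true}, \text{false}\}$ where $f(S,i)$ is true if and only if the induced subgraph $G[S]$ admits a proper list-coloring using only colors from $\{1,\dots,i\}$ such that every vertex $v \in S$ receives a color in $\Gamma(v)$. The base cases are $f(\emptyset,i) = \text{true}$ for every $i \ge 0$ and $f(S,0) = \text{false}$ for every nonempty $S$. The recurrence is
\[
f(S,i) = \bigvee_{I} f(S\setminus I, i-1),
\]
where $I$ ranges over all subsets of $S$ that are independent in $G[S]$ and satisfy $i \in \Gamma(v)$ for every $v \in I$; such an $I$ represents the set of vertices colored $i$.

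Once the table is filled, the answer to the \textsc{List-Colorable Subgraph} instance $(G,c,k,\Gamma)$ is yes if and only if there exists $S \subseteq V$ with $|V \setminus S| \le k$ and $f(S,c) = \text{true}$; equivalently, I return yes iff the largest $|S|$ with $f(S,c) = \text{true}$ is at least $n-k$. Correctness follows directly from unrolling the recurrence: a proper list-coloring of $G[S]$ using $i$ colors exists iff its color class $i$ is some subset $I$ meeting the independence and list constraints and the remaining vertices $S \setminus I$ are properly list-colorable with $i-1$ colors.

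For the time bound, the dominant work is summing over all pairs $(S,I)$ with $I \subseteq S \subseteq V$, which by the binomial identity $\sum_{S \subseteq V} 2^{|S|} = 3^{|V|}$ contributes $3^n$ pairs, multiplied by $c+1$ for the color coordinate; for each pair, independence of $I$ in $G[S]$ and the list condition $i \in \Gamma(v)$ for all $v \in I$ are checkable in $\Oh(c(n+m))$ time after precomputing adjacency and list membership. This yields the claimed $\Oh(3^n \cdot c^2 (n+m))$ bound (with room to spare in the polynomial factors). For \textsc{EL-Multi-STC}, by Proposition~\ref{Proposition: Gallai equivalence} it suffices to solve \textsc{List-Colorable Subgraph} on the Gallai graph $\tilde G$, which has $\Oh(m)$ vertices and $\Oh(mn)$ edges, yielding $\Oh(3^m \cdot c^2\cdot mn)$ time (after spending $\Oh(mn)$ time to construct $\tilde G$).

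I do not expect a real obstacle here: the algorithm is textbook subset DP, and the only mild care needed is (i) ensuring the $3^n$ bound by iterating over pairs $S \supseteq I$ rather than naively over $(S,I)$ with $I$ arbitrary, and (ii) not double counting the polynomial factors when pulling the bound through the Gallai-graph reduction.
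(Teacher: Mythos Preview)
Your proposal is correct and matches the paper's own proof essentially line for line: the paper also defines a DP table indexed by subsets $S\subseteq V$ and color index $i$, with the same recurrence over independent sets $S'\subseteq S$ serving as color class $i$, bounds the work by counting ordered triples $(S',\,S\setminus S',\,V\setminus S)$ as $3^n$, and then transfers the bound to \textsc{EL-Multi-STC} via Proposition~\ref{Proposition: Gallai equivalence} and the $\Oh(m)$-vertex, $\Oh(mn)$-edge size of the Gallai graph. The only cosmetic difference is that the paper initializes at $i=1$ rather than $i=0$.
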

\begin{proof}
  We define a dynamic programming table~$D$ with entries of the type~$D[S,i]$
  where~$S\subseteq V$ and~$i\in \{1, \dots, c\}$. The aim is to fill~$D$ such that for all
  entries we have~$D[S,i]= \text{`true'}$ if there is a subgraph-$c$-coloring $\chi$ for~$G[S]$
  such that~$\chi(v)\in \{1,\ldots,i\}\cap \Gamma(v)$ for all~$v\in S$
  and~$D[S,i]=\text{`false'}$ otherwise. Then, the instance is a yes-instance if and only
  if~$D[S,c] =\text{`true'} $ for some~$S$ such that~$|S|\ge n-k$. 

  The table is initialized for~$i=1$ and each~$S\subseteq V$ by setting

$$D[S,1]:=
\begin{cases}
  \text{`true'} & \text{ if } S \text{ is an independent set } \wedge \forall v\in S: 1\in \Gamma(v),    \\
  \text{`false'} & \text{ otherwise.}
\end{cases}
$$

For~$i>1$, the table entries are computed by the recurrence
$$D[S,i]:=
\begin{cases}
  \text{`true'} & \text{ if $\exists S'\subseteq S$ such that } S' \text{ is an independent set } \\ &\hspace{9.5em} \wedge~\forall v\in S': i\in \Gamma(v) \\ & \hspace{9.5em} \wedge\; D[S\setminus S',i-1] = \text{`true'},      \\
  \text{`false'} & \text{ otherwise.}
\end{cases}
$$
The correctness proof is straightforward and thus omitted. The running time is dominated by the
time needed to fill table entries for~$i>1$ and can be seen as follows. For
each~$i\in \{2,\ldots ,c\}$ we consider all partitions of~$V$ into~$S'$,~$S\setminus S'$,
and~$V\setminus S$. These are~$3^{n}$ many. For each of them, we check
in~$\Oh (c\cdot (m+n))$~time whether~$S'$ is an independent set and whether~$i\in \Gamma(v)$
for all~$v\in S'$.

The running time for \textsc{EL-Multi-STC} follows from Proposition~\ref{Proposition: Gallai equivalence} and the fact that the Gallai graph of a graph~$G$ with~$n$ vertices and~$m$ edges has~$\Oh(m)$ vertices and~$\Oh(mn)$ edges.
\end{proof}

\section{Hardness Results}
\label{sec:hardness}

\subsection{NP-hardness of Multi-STC}
We first observe that \textsc{Multi-STC} is NP-hard for every fixed~$c$ even on bounded-degree graphs. It was
claimed that \textsc{Multi-STC} is NP-hard for every fixed~$c$ since in the Gallai graph this is exactly the NP-hard problem
\textsc{Odd Cycle Transversal} (in case of~$c=2$) or \textsc{Vertex $c$-Coloring} (in case of~$c \geq 3$)~\cite{ST14}. It is not known, however, whether these problems are NP-hard on Gallai graphs. 
Instead, the NP-hardness can be observed from hardness results for~\textsc{Edge Coloring}. In \textsc{Edge Coloring} one is given a graph~$G$ and a number of colors~$c$ and the question is whether we can assign the colors~$1, \dots, c$ to the edges such that no two incident edges receive the same color. This gives the following dichotomy of the complexity of \textsc{Multi-STC} on bounded-degree graphs.

\begin{theorem} \textsc{Multi-STC} exhibits the following complexity-dichotomy on bounded-degree graphs:
  \label{thm:allcNPh}
  
  \begin{enumerate}
   \item[a)] For~$c=1$, \textsc{Multi-STC} is NP-hard on graphs with maximum degree at least four and solvable in polynomial time when the maximum degree is at most three.
   \item[b)]
  For $c = 2$, \textsc{Multi-STC} is NP-hard on graphs with maximum degree at least three and solvable in polynomial time when the maximum degree is at most two.
   \item[c)]
  For every $c \geq 3$, \textsc{Multi-STC} is NP-hard on instances with maximum degree at least~$c$ even if~$k=0$ and it can be solved in polynomial time if the maximum degree is at most~$c-1$.
   \end{enumerate} 	
 \end{theorem}

\begin{proof}
Statement~$a)$ is a known result for~\textsc{STC}~\cite{KNP18}. For statement~$c)$, the NP-hardness follows from the classic result that~\textsc{Edge Coloring} is NP-hard for every fixed~$c$ even if the input graph is triangle-free~\cite{LG83}. Furthermore, due to Vizing's Theorem, every graph with maximum degree~$\Delta$ can be edge-colored with~$\Delta+1$ colors such that no two incident edges recieve the same color~\cite{V64}. Thus, instances with maximum degree at most~$c-1$ are trivial yes-instances for~\textsc{Multi-STC}. To show statement~$b)$ we reduce from~\textsc{Edge Coloring}, which is NP-hard even if~$c=3$ and the input graph is cubic and triangle-free~\cite{H81}.

Let~$I:=(G,c)$ be an instance of~\textsc{Edge Coloring} where~$c=3$ and~$G$ is a cubic and triangle-free graph. Note that~$G$ has an even number of vertices. We define~$J:=(G,2,k)$ with~$k:=\frac{n}{2}$ and show that~$I$ is a yes-instance of \textsc{Edge Coloring} if and only if~$J$ is a yes-instance of \textsc{Multi-STC}.

$(\Rightarrow)$ Let~$I$ be a yes-instance of \textsc{Edge Coloring}. Hence, one can assign the colors~$1$, $2$, and~$3$ to the edges of~$G$ in a way that no two incident edges receive the same color. Then, the fact that~$G$ is cubic implies that each color class is a perfect matching in~$G$. We let~$L:=(S^1_L, S^2_L, W_L)$ be a labeling where~$S^1_L$ contains the edges that are assigned with color~$1$, $S^2_L$~contains the edges that are assigned with color~$2$, and~$W_L$ contains the edges that are assigned with color~$3$. Note that~$|W_L|=\frac{n}{2}=k$. Furthermore, $L$ is an STC-labeling since no two incident edges in~$G$ recieved the same color. Consequently,~$J$ is a yes-instance of \textsc{Multi-STC}.

$(\Leftarrow)$ Let~$J$ be a yes-instance of \textsc{Multi-STC}. Then, there exists an STC-labeling~$L:=(S^1_L, S^2_L, W_L)$ for~$G$ such that~$|W_L| \leq \frac{n}{2}$. Note that the strong color classes~$S^1_L$ and~$S^2_L$ each form a matching in~$G$, since~$G$ is triangle-free. We next show that the edges in~$W_L$ also form a matching. Assume towards a contradiction that there are two edges in~$W_L$ that share an endpoint. Then,~$|W_L| \leq \frac{n}{2}$ implies that there is one vertex~$v$ that is not incident with some edge in~$W_L$. Since~$G$ is cubic, this implies that~$v$ is incident with two edges of the same strong color. This contradicts the fact that~$S^1_L$ and~$S^2_L$ are matchings in~$G$. Consequently, $W_L$ is a matching. Then, since the edges of~$G$ can be partitioned into three pairwise disjoint matchings,~$I$ is a yes-instance of \textsc{Edge Coloring}.
\end{proof}

Note that, for~$c=2$, \textsc{Multi-STC} parameterized by~$k$ is fixed-parameter tractable since we can solve \textsc{Odd Cycle Transversal} in the Gallai graph~$\tilde{G}$ which is fixed-parameter
tractable with respect to~$k$~\cite{RSV04}. Thus, the dichotomy from Theorem~\ref{thm:allcNPh} is also tight with regard to the complexity for instances with constant~$k$.

\subsection{Fine-Grained Complexity}
We now provide a stronger hardness result for \textsc{VL-Multi-STC} and \textsc{EL-Multi-STC}:
we show that they are unlikely to admit a single-exponential-time algorithm with respect to the number~$n$ of vertices. Thus, the\iflong\ simple\fi\ algorithm behind Proposition~\ref{prop:elbym}  is optimal in the sense that~$m$ cannot be replaced by~$n$\iflong\ in dense graphs\fi. 

We remark that for \textsc{List-Edge Coloring} an ETH-based lower bound of~$2^{o({|V|}^2)}$ has  been shown recently~\cite{KS18}. In \textsc{List-Edge Coloring} one is given a graph~$G$, a number~$c$ of colors, and a list~$\Psi(e)$ of possible colors for each edge~$e$. The question is, whether the colors~$1,2,\dots,c$ can be assigned to the edges of~$G$ such that no two incident edges receive the same color and each edge receives a color that is on its list. While \textsc{List-Edge Coloring} and \textsc{EL-Multi-STC} with~$k=0$ correspond if the input graph is triangle-free, the construction behind the lower bound of~$2^{o({|V|}^2)}$ contains triangles with edge lists that can not be easily modeled with vertex lists. We are not aware of any direct reduction from \textsc{List-Edge Coloring} to \textsc{VL-Multi-STC} that would transfers the desired lower bound to~\textsc{VL-Multi-STC}. 

Note that there is a natural reduction that transforms an instance of \textsc{List-Edge Coloring} into an equivalent instance where the input graph is triangle-free: Replace all edges with the gadget shown in Figure~\ref{Figure: Natural Reduction Gadget}. However, this reduction adds~$\Omega(cm)$ vertices and therefore does not imply the desired lower bound since the ETH reduction for \textsc{List-Edge Coloring} outputs a graph where the number of edges is quadratic in the number of~vertices~\cite{KS18}.

\begin{figure}
\begin{center}
\begin{tikzpicture}
\tikzstyle{knoten}=[circle,fill=white,draw=black,minimum size=5pt,inner sep=0pt]
\tikzstyle{bez}=[inner sep=0pt]

\node[knoten,label=$u$] (1)  at (0,-0.5) {};

\node[knoten, label=$v$] (2)  at (2,-0.5) {};

\node[knoten,label=$w_1$] (31) at (3,1) {};
\node[knoten,label=$w_2$] (32) at (3,0) {};
\node[bez]  at (3,-0.7) {$\vdots$};
\node[knoten, label=below:$w_{c-1}$] (33) at (3,-2) {};

\node[knoten,label=$x_1$] (41) at (5,1) {};
\node[knoten,label=$x_2$] (42) at (5,0) {};
\node[bez]  at (5,-0.7) {$\vdots$};
\node[knoten,label=below:$x_{c-1}$] (43) at (5,-2) {};

\node[knoten, label=$y$] (5)  at (6,-0.5) {};

\node[knoten, label=$z$] (6)  at (8,-0.5) {};

\draw[-, line width=1pt]  (1) to (2);

\draw[-, line width=1pt]  (2) to (31);
\draw[-, line width=1pt]  (2) to (32);
\draw[-, line width=1pt]  (2) to (33);

\draw[-, line width=1pt]  (31) to (41);
\draw[-, line width=1pt]  (31) to (42);
\draw[-, line width=1pt]  (31) to (43);
\draw[-, line width=1pt]  (32) to (41);
\draw[-, line width=1pt]  (32) to (42);
\draw[-, line width=1pt]  (32) to (43);
\draw[-, line width=1pt]  (33) to (41);
\draw[-, line width=1pt]  (33) to (42);
\draw[-, line width=1pt]  (33) to (43);

\draw[-, line width=1pt]  (5) to (41);
\draw[-, line width=1pt]  (5) to (42);
\draw[-, line width=1pt]  (5) to (43);

\draw[-, line width=1pt]  (5) to (6);
\end{tikzpicture}
\end{center}
\caption{A graph consisting of edges~$\{u,v\}$, $\{y,z\}$, $\{v,w_i\}$, $\{y,x_i\}$, and~$\{w_i,x_j\}$ for every~$i,j \in \{1, \dots, c-1\}$. It is easy to see that in every proper edge-labeling with colors~$1,\dots,c$, the edges~$\{u,v\}$ and~$\{y,z\}$ must receive the same color. The graph can thus be used as a gadget to transform an instance of~\textsc{Edge-List Coloring} into an equivalent triangle-free instance: we replace every edge~$e$ with list~$\Psi(e)$ by such gadget, set~$\Psi(\{u,v\}):=\Psi(\{y,z\}):=\Psi(e)$, and assign full lists to the other edges of the gadget.} \label{Figure: Natural Reduction Gadget}
\end{figure}

We provide a strong lower bound for~\textsc{VL-Multi-STC} that is based on a reduction from \textsc{3-SAT}. This reduction is inspired by a reduction used to show that~\textsc{Rainbow Coloring} cannot be solved in~$2^{o(n^{3/2})}$ time under the ETH~\cite{KLS18}. \textsc{Rainbow Coloring} is a mildly related problem where the input is a graph and an integer~$k$ and the question is, whether the edges can be colored with~$k$ distinct colors such that every pair of vertices is connected by a rainbow path, that is, a path where all edges on the path have distinct colors. We remark that for \textsc{Rainbow Coloring} another ETH-based lower bound of~$2^{o(m)}n^{\Oh(1)}$ has been shown recently~\cite{A17}. The compression of the variable part in our reduction works mostly analogously to the reduction to \textsc{Rainbow Coloring}. However, in \textsc{VL-Multi-STC} we have vertex lists that need to be defined carefully. For the clause part of the reduction, we use equitable colorings~\cite{HS70,KKKMS10} to achieve an even stronger compression and thus a lower bound with a quadratic function in the exponent for \textsc{VL-Multi-STC}.

\begin{restatable}{theorem}{ethbound}
\label{Theorem: ETH lower bound}
If the ETH is true, then \textsc{VL-Multi-STC} cannot be solved in $2^{o(|V|^2)}$ time even if restricted to instances with~$k=0$.
\end{restatable}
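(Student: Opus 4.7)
The plan is to reduce 3-CNF-SAT with~$N$ variables to \textsc{VL-Multi-STC} with~$k=0$, compressing the instance into a graph with only~$n=\Oh(\sqrt{N})$ vertices. A hypothetical $2^{o(n^2)}$-time algorithm would then decide 3-SAT in~$2^{o(N)}$ time, contradicting the ETH. After applying the sparsification lemma, we may assume the input formula~$\phi$ has~$\Oh(N)$ clauses with each variable appearing a constant number of times. I would partition the~$N$ variables into~$T:=\lceil\sqrt{N}\,\rceil$ groups~$X_1,\ldots,X_T$, each of size at most~$T$, and similarly partition the clauses into~$T$ blocks~$Y_1,\ldots,Y_T$ of~$\Oh(T)$ clauses each.

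The graph~$G$ would consist of one main vertex~$v_i$ per variable group together with~$\Oh(T)$ auxiliary vertices, including one vertex~$z_k$ per clause block, arranged in a bipartite-like structure. Each variable~$x_{i,j}$ of~$\phi$ is represented by a specific edge of~$G$ whose color encodes the truth value of~$x_{i,j}$. The color palette and the vertex lists are designed so that~$\Lambda(u)\cap\Lambda(v)$ contains exactly the two colors needed to encode the two truth values of the corresponding variable; by making these colors unique to the edge, STC's distinctness requirement on induced~$P_3$s is automatically compatible with every choice of truth assignment. For each clause block~$Y_k$, the list~$\Lambda(z_k)$ together with the~$P_3$s introduced by~$z_k$ would encode the joint constraint that all clauses in~$Y_k$ are satisfied; this idea is adapted from the~$2^{o(n^{3/2})}$-lower bound reduction for \textsc{Rainbow Coloring}~\cite{KLS18}, with the additional flexibility provided by color lists yielding a stronger~$2^{o(n^2)}$ compression factor.

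For correctness, a satisfying assignment~$\alpha$ of~$\phi$ would translate into an STC-labeling with no weak edges by coloring each variable-edge according to~$\alpha$ and extending canonically to the auxiliary edges; conversely, a~$\Lambda$-satisfying labeling with~$|W_L|=0$ would determine a truth assignment via its variable-edge colors, and the~$\Lambda(z_k)$-constraints would rule out any assignment that violates some clause in~$Y_k$. The main obstacle is the design of the clause gadget: vertex lists restrict entire stars of edges uniformly, and the STC rule yields only pairwise constraints between incident edges, so encoding an inherently ternary clause disjunction on~$\Oh(\sqrt{N})$ vertices requires a delicate interaction between the auxiliary lists, the pairwise distinctness forced at each~$z_k$, and the variable-edge colors, ensuring both global consistency of the induced truth assignment and satisfaction of every clause in the block. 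This simulation of ternary constraints by the combination of vertex lists and STC-forced distinctness is where the key technical difficulty lies.
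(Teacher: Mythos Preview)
Your high-level strategy---compress a sparsified 3-SAT instance into an $\Oh(\sqrt{N})$-vertex graph, encode each variable's truth value as the color of a designated edge, and use $k=0$---matches the paper. But the proposal has two genuine gaps that the paper resolves with substantial machinery you do not supply.

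First, you do not construct the clause gadget; you only name it as ``the main obstacle'' and describe what it must achieve. Your suggestion that a single vertex~$z_k$ with list~$\Lambda(z_k)$ can encode the conjunction of all $\Oh(\sqrt{N})$ clauses in block~$Y_k$ is not workable as stated: a vertex list constrains every incident edge to the same color set, and STC at~$z_k$ only forces pairwise distinctness among incident edges in induced $P_3$s. Neither mechanism yields a ternary disjunction per clause, let alone a conjunction of $\Oh(\sqrt{N})$ such disjunctions. The paper does \emph{not} encode clauses block-wise. Instead it introduces one dedicated edge per clause whose only admissible colors are the ``satisfying-literal'' colors of that clause; compression to $\Oh(\sqrt{N})$ vertices comes from reusing endpoints across many clauses via carefully chosen mappings $\up^{\mathcal C},\down^{\mathcal C}$ obtained from an equitable coloring of a clause-conflict graph.

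Second, you omit variable soundness. A variable may occur in several clauses, and each occurrence is tied to a different edge (or color pair). Nothing in your sketch forces these occurrences to receive a consistent truth value. The paper devotes an entire layer~$U^X$ of $\Oh(\sqrt{N})$ vertices (the ``variable-soundness gadget'') and auxiliary colors~$R_i$ to this: for every pair of occurrences $(r,r')$ of a variable~$x_i$, a four-edge configuration forces a contradiction if occurrence~$r$ is colored ``true'' while occurrence~$r'$ is colored ``false''. Without this, the reverse direction of your correctness argument fails: a labeling with $W_L=\emptyset$ need not determine a well-defined assignment.
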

\begin{proof}
  We give a reduction from \textsc{3-SAT} to \textsc{VL-Multi-STC} such
  that the resulting graph has $\mathcal{O}(\sqrt{|\phi|})$~vertices,
  where $\phi$ is the input formula and $|\phi|$ is the number of
  variables plus the number of clauses. By the Sparsification
  Lemma~\cite{IPZ01}, a~$2^{o(|\phi|)}$-time algorithm for
  \textsc{3-SAT} defeats the ETH and, hence, a~$2^{o(|V|^2)}$-time
  algorithm for \textsc{VL-Multi-STC} defeats the ETH as well.

  Below, we use $n$ for the number of variables in~$\phi$. We can
  furthermore assume that, in the formula~$\phi$, each variable occurs
  in at most four clauses, since arbitrary 3-CNF formulas can be
  transformed in polynomial time to an equivalent 3-CNF formula fulfilling
  this restriction while only increasing the formula length by a
  constant factor~\cite{Tovey84}. Observe that in such instances the
  number of clauses in $\phi$ is at most~$\frac{4}{3}n$.

Let $\phi$ be a 3-CNF formula with a set $X=\{x_1, \dots, x_n \}$ of $n$ variables and a set $\mathcal{C}:=\{C_1, \dots, C_m \}$ of $m \leq \frac{4}{3}n$ clauses. Let $C_j$ be a clause and~$x_i$ a variable occurring in $C_j$. We define the \emph{occurrence number} $\Omega(C_j,x_i)$ as the number of clauses in $\{C_1, C_2, \dots, C_j \}$ that contain~$x_i$. Note that~$\Omega(C_j,x_i)$ is only defined if~$x_i$ occurs in~$C_j$. Intuitively, $\Omega(C_j,x_i)=r$ means that the $r$th occurrence of variable~$x_i$ is the occurrence in clause~$C_j$. Since each variable occurs in at most four clauses, we have~$\Omega(C_j,x_i) \in \{1,2,3,4\}$.

We describe in three steps how to construct an equivalent instance~$(G=(V,E),c=9n+4,k=0,\Lambda)$ for \textsc{VL-Multi-STC} such that~$|V| \in \mathcal{O}(\sqrt{n})$. First, we describe the variable gadget. Second, we describe the clause gadget. In a third step, we describe how these two gadgets are connected. Before we present the formal construction, we give some~intuition.

The strong colors $1, \dots, 8n$ represent the truth assignments of the occurrences of the variables. Throughout this proof we refer to these strong colors as $T^r_i, F^r_i$ with $i \in \{1, \dots, n\}$ and $r \in \{1,2,3,4\}$. The idea is that a strong color $T^r_i$ represents a `true'-assignment and $F^r_i$ represents a `false'-assignment of the $r$th occurrence of a variable $x_i \in X$. The strong colors $8n+1, \dots, 9n+4$ are auxiliary strong colors which we need for the correctness of our construction. Throughout this proof we refer to these strong colors as~$R_1, \dots, R_n$ and~$Z_1,Z_2,Z_3,Z_4$. In the variable gadget, there are four distinct edges $e_1, e_2, e_3, e_4$ for each variable~$x_i$ representing the (at most) four occurrences of the variable $x_i$. We define vertex lists that ensure that every such edge~$e_r$ can only be labeled with the strong colors $T_i^r$ and~$F_i^r$. The coloring of these edges represents a truth assignment to the variable~$x_i$. In the clause gadget, there are $m$ distinct edges such that the coloring of these edges represents a choice of literals that satisfies $\phi$. The edges between the two gadgets make the values of the literals from the clause part consistent with the assignment of the variable part. The construction consists of five layers of vertices. In the variable gadget we have an upper layer, a middle layer, and a down layer ($U^X$,$M^X$ and $D^X$). In the clause gadget we have an upper and a down layer ($U^\cC$ and $D^\cC$). Figure~\ref{Figure ETH reduction construction appendix} shows a sketch of the construction.

\begin{figure}[t]
\begin{center}
\begin{tikzpicture}[scale=0.5]
\tikzstyle{knoten}=[circle,fill=white,draw=black,minimum size=5pt,inner sep=0pt]
\tikzstyle{sound}=[rectangle,fill=white,draw=black,minimum height=15pt, minimum width=42pt,inner sep=0pt]
\tikzstyle{bez}=[inner sep=0pt]
		\node[bez] () at (-10,0) {$U^X$\makebox[0pt][l]{ $\otimes$}};
		\draw (0,0) node[minimum height=1.1cm,minimum width=11cm,draw] {};
		\node[sound] (s1) at (-4,0) {};
		\node[sound] (s2) at (1.5,0) {};
		\node[sound] (s3) at (7.5,0) {};
		
		\node[bez] () at (-10,-4) {$M^X$\makebox[0pt][l]{ $\odot$}};
		\draw (0,-4) node[minimum height=1.1cm,minimum width=11cm,draw] {};
		\node[knoten] (g11) at (-2.6,-4) {};
		\node[bez] () at (-4.5,-4.3) {$\gamma^1_{\text{mid}^{X}(x_1)}$};
		\node[knoten] (g12) at (-2.2,-4) {};
		\node[knoten] (g13) at (-1.8,-4) {};
		\node[knoten] (g14) at (-1.4,-4) {};
		\node[bez] () at (0.5,-4.3) {$\gamma^4_{\text{mid}^{X}(x_2)}$};

		\node[knoten] (g31) at (5.4,-4) {};
		\node[bez] () at (7.3,-4.3) {$\gamma^4_{\text{mid}^{X}(x_3)}$};
		\node[knoten] (g32) at (5,-4) {};
		\node[knoten] (g33) at (4.6,-4) {};
		\node[knoten] (g34) at (4.2,-4) {};

		\node[bez] () at (-10,-8) {$D^X$\makebox[0pt][l]{ $\otimes$}};
		\draw (0,-8) node[minimum height=1.1cm,minimum width=11cm,draw] {};
		\node[knoten] (d1) at (-3.5,-8) {};
		\node[bez] () at (-5.3,-8.5) {$\delta_{\down^{X}(x_1)}$};
		\node[knoten] (d2) at (1.5,-8) {};
		\node[bez] () at (-0.3,-8.5) {$\delta_{\down^{X}(x_2)}$};
		\node[knoten] (d3) at (7,-8) {};
		\node[bez] () at (5.3,-8.5) {$\delta_{\down^{X}(x_3)}$};

		\node[bez] () at (-10,-12) {$U^{\mathcal{C}}$\makebox[0pt][l]{ $\otimes$}};
		\draw (0,-12) node[minimum height=1.1cm,minimum width=11cm,draw] {};
		\node[knoten] (e1) at (1.5,-12) {};
		\node[bez] () at (0,-12.5) {$\eta_{\text{up}^{\mathcal{C}}(C_j)}$};

		\node[bez] () at (-10,-16) {$D^{\mathcal{C}}$\makebox[0pt][l]{ $\otimes$}};
		\draw (0,-16) node[minimum height=1.1cm,minimum width=11cm,draw] {};
		\node[knoten] (t1) at (1.5,-16) {};
		\node[bez] () at (0,-16.3) {$\theta_{\down^{\mathcal{C}}(C_j)}$};

		\draw[very thick] (t1) to (e1);
		
		\draw[very thick, bend left] (e1) to (d1);
		\draw[very thick] (e1) to (d2);
		\draw[very thick, bend right] (e1) to (d3);
		
		\draw[very thick] (d1) to (g11);
		\draw[-] (d1) to (g12);
		\draw[-] (d1) to (g13);
		\draw[-] (d1) to (g14);
		
		\draw[-] (d2) to (g11);
		\draw[-] (d2) to (g12);
		\draw[-] (d2) to (g13);
		\draw[very thick] (d2) to (g14);
		
		\draw[very thick] (d3) to (g31);
		\draw[-] (d3) to (g32);
		\draw[-] (d3) to (g33);
		\draw[-] (d3) to (g34);
		
		\foreach \x in {-1,-0.6,...,1}{
          \draw[-, color=black!20] (g11) to ($(s1)+(1.3*\x,-0.2)$);
          \draw[-, color=black!20] (g12) to ($(s1)+(1.3*\x,-0.2)$);
          \draw[-, color=black!20] (g13) to ($(s1)+(1.3*\x,-0.2)$);
          \draw[-, color=black!20] (g14) to ($(s1)+(1.3*\x,-0.2)$);
          
          \draw[-, color=black!20] (g11) to ($(s2)+(1.3*\x,-0.2)$);
          \draw[-, color=black!20] (g12) to ($(s2)+(1.3*\x,-0.2)$);
          \draw[-, color=black!20] (g13) to ($(s2)+(1.3*\x,-0.2)$);
          \draw[-, color=black!20] (g14) to ($(s2)+(1.3*\x,-0.2)$);
          
          \draw[-, color=black!20] (g31) to ($(s3)+(1.3*\x,-0.2)$);
          \draw[-, color=black!20] (g32) to ($(s3)+(1.3*\x,-0.2)$);
          \draw[-, color=black!20] (g33) to ($(s3)+(1.3*\x,-0.2)$);
          \draw[-, color=black!20] (g34) to ($(s3)+(1.3*\x,-0.2)$);
      }
      	\node[bez] () at (-3.8,-6) {$T^1_1$};
      	\node[bez] () at (1,-6) {$T^4_2$};
      	\node[bez] () at (7,-6) {$F^4_3$};
      	\node[bez] () at (-3.3,-10) {$F^1_1$};
      	\node[bez] () at (2.1,-10) {$F^4_2$};
      	\node[bez] () at (6.6,-10) {$T^4_3$};
      	\node[bez] () at (4.5,-14) {$T^1_1 \in \{ T^1_1, F^4_2, T^4_3\}$};
     	\node[sound, color=black!10] (s1) at (-4,0) {};
		\node[sound, color=black!10] (s2) at (1.5,0) {};
		\node[sound, color=black!10] (s3) at (7.5,0) {};
\end{tikzpicture}
\end{center}
\caption{An example of the construction. The rectangles in~$U^X$ represent vertices~$\alpha^{(r,r')}_t$ with the same value of~$t$, $\otimes$ a clique, and $\odot$ an independent set. The edge~$\{\eta_{\text{up}^{\mathcal{C}}(C_j)}, \theta_{\down^{\mathcal{C}}(C_j)}\}$ represents a clause~$C_j=(x_1 \lor \overline{x_2} \lor x_3)$ with~$\Omega(C_j,x_1)=1$ and~$\Omega(C_j,x_2)=\Omega(C_j,x_3)=4$. The edge~$\{\gamma^1_{\text{mid}^{X}(x_1)}, \delta_{\down^{X}(x_1)} \}$ has strong color~$T^1_1$ which models an assignment where~$x_1$ is true, which satisfies~$C_j$. Note that, due to the compression, we may have~$\text{mid}(x_1)=\text{mid}(x_2)$ and therefore~$x_1$ and~$x_2$ may share the four middle vertices.}
\label{Figure ETH reduction construction appendix}
\end{figure}

\proofparagraph{The Variable Gadget.}
The vertex set of the variable gadget consist of an upper layer, a middle layer and a down layer. The vertices in the middle layer and the down layer form a \emph{variable-representation gadget}, where each edge between the two parts represents one occurrence of a variable. The vertices in the upper layer form a \emph{variable-soundness gadget}, which we need to ensure that for each variable either all occurrences are assigned `true' or all occurrences are assigned `false'. For an illustration of the variable-representation and the variable-soundness gadget for some variable~$x_i$ see~Figure~\ref{Figure Variable gadget}.

\begin{figure}[t]
\begin{center}
\begin{tikzpicture}[scale=0.8]
\tikzstyle{knoten}=[circle,fill=white,draw=black,minimum size=5pt,inner sep=0pt]
\tikzstyle{bez}=[inner sep=0pt]
		\node[knoten] (d) at (0,-1) {};
		\node[bez] () at (0,-1.4) {$\delta_t$};
		\node[knoten] (g1) at (-6,1) {};
		\node[bez] () at (-6,0.6) {$\gamma^1_{t'}$};
		\node[knoten] (g2) at (-2,1) {};
		\node[bez] () at (-2,0.6) {$\gamma^2_{t'}$};
		\node[knoten] (g3) at (2,1) {};
		\node[bez] () at (2,0.6) {$\gamma^3_{t'}$};
		\node[knoten] (g4) at (6,1) {};
		\node[bez] () at (6,0.6) {$\gamma^4_{t'}$};
		
		\node[knoten] (a12) at (-4,2.5) {};
		\node[bez] () at (-4,2.0) {$\alpha^{(1,2)}_t$};
		\node[knoten] (a21) at (-4,3.5) {};
		\node[bez] () at (-4,3.0) {$\alpha^{(2,1)}_t$};
		
		\node[knoten] (a23) at (0,2.5) {};
		\node[bez] () at (0,2.0) {$\alpha^{(2,3)}_t$};
		\node[knoten] (a32) at (0,3.5) {};
		\node[bez] () at (0,3.0) {$\alpha^{(3,2)}_t$};
		
		\node[knoten] (a34) at (4,2.5) {};
		\node[bez] () at (4,2.0) {$\alpha^{(3,4)}_t$};
		\node[knoten] (a43) at (4,3.5) {};
		\node[bez] () at (4,3.0) {$\alpha^{(4,3)}_t$};
		
		\node[knoten] (a13) at (-2,5) {};
		\node[bez] () at (-2,4.5) {$\alpha^{(1,3)}_t$};
		\node[knoten] (a31) at (-2,6) {};
		\node[bez] () at (-2,5.5) {$\alpha^{(3,1)}_t$};
		
		\node[knoten] (a24) at (2,5) {};
		\node[bez] () at (2,4.5) {$\alpha^{(2,4)}_t$};
		\node[knoten] (a42) at (2,6) {};
		\node[bez] () at (2,5.5) {$\alpha^{(4,2)}_t$};
		
		\node[knoten] (a14) at (0,7.5) {};
		\node[bez] () at (0,7) {$\alpha^{(1,4)}_t$};
		\node[knoten] (a41) at (0,8.5) {};
		\node[bez] () at (0,8) {$\alpha^{(4,1)}_t$};
		
		\draw[thick] (d) to (g1);
		\draw[-] (d) to (g2);
		\draw[-] (d) to (g3);
		\draw[thick] (d) to (g4);
		
		\draw[-,bend left=20] (g1) to (a12);
		\draw[-,bend left=20] (g1) to (a21);
		\draw[-,bend right=20]  (g2) to (a12);
		\draw[-,bend right=20] (g2) to (a21);
		
		\draw[-,bend left=20] (g2) to (a23);
		\draw[-,bend left=20] (g2) to (a32);
		\draw[-,bend right=20] (g3) to (a23);
		\draw[-,bend right=20] (g3) to (a32);
		
		\draw[-,bend left=20] (g3) to (a34);
		\draw[-,bend left=20] (g3) to (a43);
		\draw[-,bend right=20] (g4) to (a34);
		\draw[-,bend right=20] (g4) to (a43);

		\draw[-,bend left=30] (g1) to (a13);
		\draw[-,bend left=30] (g1) to (a31);
		\draw[-,bend right=30] (g3) to (a13);
		\draw[-,bend right=30] (g3) to (a31);
		
		\draw[-,bend left=30] (g2) to (a24);
		\draw[-,bend left=30] (g2) to (a42);
		\draw[-,bend right=30] (g4) to (a24);
		\draw[-,bend right=30] (g4) to (a42);
		
		\draw[-,bend left=40] (g1) to (a14);
		\draw[thick,bend left=40] (g1) to (a41);
		\draw[-,bend right=40] (g4) to (a14);
		\draw[thick,bend right=40] (g4) to (a41);

		\node[bez] () at (-3.2,-0.5) {$\{T^1_i, F^1_i\}$};
		\node[bez] () at (3.2,-0.5) {$\{T^4_i, F^4_i\}$};
		
		\node[bez] () at (-5.5,6) {$\{F^1_i, R_i\}$};
		\node[bez] () at (5.5,6) {$\{T^4_i, R_i\}$};
\end{tikzpicture}
\end{center}
\caption{The variable-representation and the variable-soundness gadget for one variable~$x_i \in X$ such that~$\down^X(x_i)=\text{up}^X(x_i)=t$ and~$\midd^X(x_i)=t'$ with the possible colors for the edges~$\{\delta_t, \gamma^1_{t'} \}$, $\{\delta_t, \gamma^4_{t'} \}$, $\{\alpha^{(4,1)}_t, \gamma^1_{t'} \}$, and~$\{\alpha^{(4,1)}_t, \gamma^4_{t'} \}$. Note that labeling~$\{\delta_t, \gamma^1_{t'} \}$ with the strong color~$F^1_i$ and labeling~$\{\delta_t, \gamma^4_{t'} \}$ with the strong color~$T^4_i$ causes a~$P_3$ with some strong color.}
\label{Figure Variable gadget}
\end{figure}

We start by describing the variable-representation gadget.~Let 
\begin{align*}
M^X &:= \{ \gamma^r_t \mid t \in \{1, \dots, \sn \}, r \in \{1,2,3,4\} \} &\text{be the set of middle vertices, and}\\
D^X &:= \{ \delta_t \mid t \in \{1,\dots, \sn +9 \} \} &\text{be the set of down vertices.}
\end{align*}
We add edges such that $D^X$ becomes a clique in $G$. To specify the correspondence between the variables in $X$ and the edges in the variable-representation gadget, we define below two mappings $\midd^X: X \rightarrow \{1, \dots, \sn \}$ and $\down^X: X \rightarrow \{1, \dots, \sn+9 \}$. Then, for each $x_i \in X$ we add four edges $\{\gamma^r_{\midd^X(x_i)}, \delta_{\down^X(x_i)} \}$ for $r \in \{1,2,3,4\}$. We carefully define the two mappings $\midd^X$, $\down^X$ and the vertex lists $\Lambda(v)$ for every $v \in M^X \cup D^X$ of the variable-representation gadget. 

Intuitively, the chosen truth assignment for each variable will be transmitted to a clause by edges between the variable and clause gadgets. To ensure that each such transmitter edge is used for exactly one occurrence of one variable, we first define the \emph{variable-conflict graph}~$H_{\phi}^X:=(X, \Confl^X)$  by $\Confl^X:= \{ \{x_i,x_j\} \mid x_i \text{ and }x_j \text{ occur in the same clause }C \in \mathcal{C}\}$, which we use to define $\midd^X$ and $\down^X$. Since every variable of $\phi$ occurs in at most four clauses, the maximum degree of $H^X_{\phi}$ is at most $8$. Hence, there is a proper vertex 9-coloring $\chi : X \rightarrow \{1,2,\dots,9\}$ for $H^X_{\phi}$ which we compute in polynomial time by a folklore greedy algorithm. We end up with $9$ color classes $\chi^{-1}(1), \dots, \chi^{-1}(9)$. Then, we partition each color class $\chi^{-1}(i)$ into $\frac{|\chi^{-1}(i)|}{\lceil \sqrt{n}\, \rceil}$ groups arbitrarily such that each group has size at most~$\lceil \sqrt{n}\, \rceil$. Let $s$ be the overall number of such groups and let $\mathcal{S}:=\{S_1, S_2, \dots, S_s\}$ be the family of all such groups of vertices in $H^X_{\phi}$ (each corresponding to a pair of a color $i \in \{1, \ldots, 9\}$ and a group in $\chi^{-1}(i)$). The following claim is directly implied by the definition of~$\mathcal{S}$ (for part~(b) observe that at most $\lceil \sqrt{n}\, \rceil$ new groups are introduced during the partitioning of the color classes).
\begin{cla} \label{Claim: Groupsizes}
For the family $\mathcal{S}:=\{S_1, S_2, \dots, S_s\}$ of groups of vertices in $H^X_{\phi}$, it holds that
\begin{enumerate}
\item[(a)] $|S_i| \leq \lceil \sqrt{n}\, \rceil$ for each $i \in \{1,\dots,s\}$, and
\item[(b)] $s \leq \lceil \sqrt{n}\, \rceil + 9$.
\end{enumerate}
\end{cla}

For any given $x_i \in X$ we define $\down^X(x_i):=j$ as the index of the group $S_j$ that contains~$x_i$. The mapping is well-defined since $\mathcal{S}$ forms a partition of the set of variables.

\begin{cla} \label{Claim: Different variable downs}
If $x_i, x_j \in X$ occur in the same clause $C \in \mathcal{C}$, then $\down^X(x_i) \neq \down^X(x_j)$.
\end{cla}
\begin{claimproof}
By definition, $x_i$ and $x_j$ are adjacent in~$H^X_{\phi}$. Hence,~$x_i$ and~$x_j$ are in different color classes and therefore elements of different groups of~$\mathcal{S}$.$\hfill \Diamond$
\end{claimproof}
Next, we define the mapping $\midd^X: X \rightarrow \{1,\dots, \sn \}$. To this end, consider the finite sequence $\Seq^n:= (\down^X(x_1), \down^X(x_2), \dots, \down^X(x_n)) \in \{ 1, \dots, \lceil \sqrt{n}\, \rceil +9\}^n$. We define $\midd^X(x_i)$ as the number of occurrences of $\down^X(x_1)$ in the partial sequence $\Seq^i:= (\down^X(x_1), \dots, \down^X(x_i))$. From Claim \ref{Claim: Groupsizes} (a) we conclude $\midd^X(x_i) \in \{1,2,\dots, \sn \}$ for every~$x_i \in X$.

\begin{cla} \label{Claim: representing edges not equal X}
Let $x_i, x_j \in X$ and $r \in \{1,2,3,4\}$. If $x_i \neq x_j$, then $$\{\gamma_{\midd^X(x_i)}^r, \delta_{\down^X(x_i)} \} \neq \{\gamma_{\midd^X(x_j)}^r, \delta_{\down^X(x_j)} \} \text{.}$$
\end{cla}

\begin{claimproof}
Without loss of generality, assume~$i<j$. Obviously, the claim holds if $\down^X(x_i) \neq \down^X(x_j)$. Let $\down^X(x_i) = \down^X(x_j)$. Then, there is at least one more occurrence of $\down^X(x_i)$ in the partial sequence $\Seq_1^j$ compared to $\Seq_1^i$. Therefore,~$\midd^X(x_i) \neq \midd^X(x_j)$.
$\hfill \Diamond$
\end{claimproof}

Thus we assigned a unique edge in~$E(M^X, D^X)$ to each occurrence of a variable in~$X$. Furthermore, the assigned edges of variables that occur in the same clause do not share an endpoint in~$D^X$ (Claim~\ref{Claim: Different variable downs}).

We complete the description of the variable-representation gadget by defining the vertex list~$\Lambda(v)$ for every~$v \in M^X \cup D^X$. We set
\begin{align*}
\Lambda(\gamma^r_t) &:=  \bigcup_{\substack{x_i \in X \\ \midd^X(x_i)=t}} \{T_i^r, F_i^r, R_i\} &\text{ for every } \gamma^r_t \in M^X \text{, and}\\
\Lambda(\delta_t) &:= \bigcup_{\substack{x_i \in X \\ \down^X(x_i)=t}} \{T_i^1,T_i^2,T_i^3,T_i^4,F_i^1,F_i^2,F_i^3,F_i^4 , Z_2\} &\text{ for every } \delta_t \in D^X\text{.}
\end{align*}

\begin{cla} \label{Claim: Edge Colors Variable Representer}
Let $x_i \in X$ and $r \in \{1,2,3,4\}$. Then, $\Lambda(\gamma^r_{\midd^X(x_i)}) \cap \Lambda(\delta_{\down^X(x_i)}) = \{T_i^r, F_i^r \}$.
\end{cla}

\begin{claimproof}
Let $\Lambda(i,r):=  \Lambda(\gamma^r_{\midd^X(x_i)}) \cap \Lambda(\delta_{\down^X(x_i)})$. Obviously, $T_i^r$ and~$F_i^r$ are elements of~$\Lambda(i,r)$. It remains to show that there is no other strong color $Y \in \Lambda(i,r)$.

\textbf{Case 1:} $Y=Z_2$. Then, $Z_2 \not \in \Lambda(\gamma^r_t)$ and it follows that~$Y \not \in \Lambda(i,r)$.

\textbf{Case 2:} $Y=R_j$ with $j \in \{1, \dots, n\}$. Then, $R_j \not \in \Lambda(\delta_t)$ and it follows that~$Y \not \in \Lambda(i,r)$.

\textbf{Case 3:} $Y=T^{r'}_{j}$ or~$Y=F^{r'}_{j}$ with~$r' \neq r$ and~$j \in \{1, \dots, n\}$. Then,~$Y \not \in \Lambda(\gamma^r_{\midd^X(x_i)})$ and it follows that~$Y \not \in \Lambda(i,r)$.

\textbf{Case 4:} $Y=T^{r}_{i'}$ or $Y=F^{r}_{i'}$ with $i' \neq i$. Assuming $T^r_{i'} \in \Lambda(i,r)$ it follows from the definition of $\Lambda$ that there is some variable $x_{i'} \neq x_i$ such that $\down^X(x_{i'}) = \down^X(x_i)$ and $\midd^X(x_{i'}) = \midd^X(x_i)$, which contradicts Claim \ref{Claim: representing edges not equal X}. Hence, $Y \not \in \Lambda(i,r)$.$\hfill \Diamond$
\end{claimproof}

Note that for each variable $x_i$ there are four edges $\{\{\gamma^r_{\midd^X(x_i)}, \delta_{\down^X(x_i)} \} \mid r \in \{1,2,3,4\}\}$ that can only be colored with the strong colors $T^r_i$ and $F^r_i$ representing the truth assignments of the four occurrences of variable $x_i$. We need to ensure that there is no variable $x_i$, where, for example, the first occurrence is set to `true' ($T^1_i$) and the second occurrence is set to `false' ($F^2_i$) in a~$\Lambda$-satisfying STC-labeling with no weak edges. To this end, we use a variable-soundness gadget, which we describe in the following.

Define 
$$U^X := \{ \alpha^{(r,r')}_t  \mid t \in \{1, \dots, \sn+9\}, (r,r') \in \{1,2,3,4\}^2, r \neq r' \}$$
to be the set of upper vertices. We add edges such that the vertices in $U^X$ form a clique in~$G$. To specify the correspondence between the variables and the edges in the variable-soundness gadget, we define below a mapping $\up^X: X \rightarrow \{1,2,\dots,\sn+9\}$. The main idea of the variable-soundness gadget is that for each variable $x_i \in X$ and each pair $\{r,r'\} \subseteq \{1,2,3,4\}$ there are four edges between the vertices $\gamma^r_i$, $\gamma^{r'}_i$ and the vertices $\alpha^{(r,r')}_t$, $\alpha^{(r',r)}_t$ of $U^X$ which can not all be strong in a $\Lambda$-satisfying STC-labeling if $\{\gamma^r_{\midd^X(x_i)}, \delta_{\down^X(x_i)}\}$ receives strong color~$T^r_i$ and $\{\gamma^{r'}_{\midd^X(x_i)}, \delta_{\down^X(x_i)}\}$ receives strong color $F^{r'}_i$. (Recall that we do not allow weak edges.) To this end, we assign a set of $12$ endpoints in $U^X$ to each variable $x_i$. We need to ensure in particular that two variables $x_i,x_j$ with $\midd^X(x_i)=\midd^X(x_j)$ do not use the same endpoints in $U^X$. We define $\up^X(x_i) := \down^X(x_i)$. The following is directly implied by Claim \ref{Claim: representing edges not equal X}.

\begin{cla} \label{Claim: Different variable ups}
Let $x_i, x_j \in X$ with $x_i \neq x_j$. If $\midd^X(x_i) = \midd^X(x_j)$, then $\up^X(x_i) \neq \up^X(x_i)$.
\end{cla}

We add the following edges between the vertices of $M^X$ and $U^X$: For every variable $x_i$, every $r \in \{1,2,3,4\}$, and every $r' \in \{1,2,3,4\} \setminus \{r\}$ we add the edges~$\{\alpha^{(r,r')}_{\up^X(x_i)} , \gamma^r_{\midd^X(x_i)}\}$, and~$\{\alpha^{(r,r')}_{\up^X(x_i)} , \gamma^{r'}_{\midd^X(x_i)}\}$.

We complete the description of the variable-soundness gadget by defining the vertex lists $\Lambda(v)$ for each $v \in U^X$. We set
\begin{align*}
\Lambda(\alpha^{(r,r')}_t) &:= \bigcup_{\substack{x_i \in X \\ \up^X(x_i)=t}} \{T_i^r, F_i^{r'},R_i, Z_1\} &\text{for every }\alpha^{(r,r')}_t \in U^X \text{.}
\end{align*}

\begin{cla} \label{Claim: Edge Colors Variable Soundness Alpha}
Let $x_i \in X$, let $r \in \{1,2,3,4\}$, and let $r' \in \{1,2,3,4\} \setminus \{r\}$. Then
\begin{enumerate}
\item[a)] $\Lambda(\alpha^{(r,r')}_{\up^X(x_i)}) \cap \Lambda(\gamma^r_{\midd^X(x_i)}) = \{T^r_i, R_i \}$, and
\item[b)] $\Lambda(\alpha^{(r,r')}_{\up^X(x_i)}) \cap \Lambda(\gamma^{r'}_{\midd^X(x_i)}) = \{F^{r'}_i, R_i \}$.
\end{enumerate}
\end{cla}

\begin{claimproof}
We first prove statement a). Let $\Lambda(i,r,r'):= \Lambda(\alpha^{(r,r')}_{\up^X(x_i)}) \cap \Lambda(\gamma^r_{\midd^X(x_i)})$. Clearly, $\Lambda(i,r,r')$ contains~$T_i^r$ and~$R_i$. It remains to show that there is no other strong color $Y \in \Lambda(i,r,r')$. Recall that $$\Lambda(\gamma^r_t) :=  \bigcup_{\substack{x_i \in X \\ \midd^X(x_i)=t}} \{T_i^r, F_i^r, R_i\}\text{.}$$
In the following case distinction we consider every possible strong color $Y \in \Lambda(\gamma^r_{\midd^X(x_i)})$.

\textbf{Case a.1:} $Y= R_j$ or $Y= T^r_j$ for some $j \neq i$. Then, there is a variable $x_j \neq x_i$ with $\midd^X(x_j)=\midd^X(x_i)$. It follows by Claim \ref{Claim: Different variable ups} that $\up^X(x_j) \neq \up^X(x_i)$ and therefore $R_j, T^r_j \not \in \Lambda(\alpha^{(r,r')}_{\up^X(x_i)})$. Hence, $Y \not \in \Lambda(i,r,r')$.

\textbf{Case a.2:} $Y= F^r_j$. Then, since $$\{F^p_t \mid p \in \{1,2,3,4\}, t \in \{1, \dots, n\} \} \cap \Lambda(\alpha^{(r,r')}_{\up^X(x_i)}) \subseteq \{F^{r'}_1, F^{r'}_2, \dots, F^{r'}_n \}\text{ and }r' \neq r$$ we conclude $F^r_j \not \in \Lambda(\alpha^{(r,r')}_{\up^X(x_i)})$. Hence, $Y \not \in \Lambda(i,r,r')$.

Next, we prove statement b) with analogous arguments. Let $\overline{\Lambda}(i,r,r'):= \Lambda(\alpha^{(r,r')}_{\up^X(x_i)}) \cap \Lambda(\gamma^{r'}_{\midd^X(x_i)})$. Clearly, $\{F_i^{r'},R_i\} \subseteq \Lambda(i,r,r')$. It remains to show that there is no other color~$Y \in \overline{\Lambda}(i,r,r')$.

\textbf{Case b.1:} $Y = R_j$ or $Y=F^{r'}_j$ for some $j \neq i$. Then, analogously to Case a.1 we conclude that $Y \not \in \overline{\Lambda}(i,r,r')$.

\textbf{Case b.2:} $Y = T_j^{r'}$. Then, since
$$\{T^p_t \mid p \in \{1,2,3,4\}, t \in \{1, \dots, n\} \} \cap \Lambda(\alpha^{(r,r')}_{\up^X(x_i)}) \subseteq \{T^{r}_1, T^{r}_2, \dots, T^{r}_n \}\text{ and }r' \neq r$$ we conclude $T^{r'}_j \not \in \Lambda(\alpha^{(r,r')}_{\up^X(x_i)})$. Hence, $Y \not \in \overline{\Lambda}(i,r,r')$. This completes the proof of Claim \ref{Claim: Edge Colors Variable Soundness Alpha}. $\hfill \Diamond$
\end{claimproof}

This completes the description of the variable gadget. We continue with the description of the clause gadget.

\proofparagraph{The Clause Gadget.} The clause gadget consists of an upper part and a lower part. Let $U^{\mathcal{C}} := \{ \eta_i \mid i \in \{1, \dots, 12  \lceil \sqrt{n}\, \rceil + 1 \}\}$ be the set of upper vertices and $D^{\mathcal{C}} := \{\theta_i \mid  i \in \{1, \dots, \lceil \sqrt{n}\, \rceil \}\}$ be the set of lower vertices. We add edges such that $U^{\mathcal{C}}$ and $D^{\cC}$ each form cliques in~$G$.

Recall that for some clause $C_j \in \cC$ and a variable $x_i$ occurring in $C_j$ the occurrence number~${\Omega(C_j,x_i)}$ is defined as the number of clauses in $\{C_1,C_2,\dots, C_j\}$ that contain~$x_i$. Below we define two mappings $\up^{\mathcal{C}}: {\mathcal{C}} \rightarrow \{1,2,\dots, 12\sn +1\}$, $\down^{\mathcal{C}}: {\mathcal{C}} \rightarrow \{1,2,\dots,\sn\}$, and vertex lists $\Lambda: V \rightarrow 2^{\{1,\dots,c\}}$. Then, for each clause $C_j \in {\mathcal{C}}$ we add an edge $\{ \eta_{\up^{\cC}(C_j)}, \theta_{\down^\cC (C_j)} \}$. Next, we ensure that this edge can only be labeled with the strong colors that match the literals in~$C_j$. This means, for example, if~$C_j=(x_1 \lor \overline{x_2} \lor x_3)$ we have $\Lambda (\eta_{\up^{\cC}(C_j)})\cap \Lambda(\theta_{\down^\cC (C_j)}) =\{  T^{\Omega(C_j,x_1)}_1, F^{\Omega(C_j,x_2)}_2, T^{\Omega(C_j,x_3)}_3 \}$.

As before, we need to ensure that each variable occurring in a clause has a unique edge between the clause and variable gadgets which transmits the variable's truth assignment to the clause. To achieve this, we define the \emph{clause-conflict graph} $H^ {\mathcal{C}}_{\phi} := ({\mathcal{C}}, \Confl^\mathcal{C})$ by 
\begin{align*}
\Confl^{\mathcal{C}}:= \{ \{ C_i, C_j \} \mid \text{ }&C_i  \text{ contains a variable } x_i \text{ and }
C_j \text{ contains a variable } x_j  \\
&\text{such that } \down^X(x_i)=\down^X(x_j) \} \text{.}
\end{align*}

Clauses that share a variable are one example for adjacent vertices in~$H_\phi^\cC$. Furthermore, due to the compression there may be distinct variables that are mapped to the same value under~$\text{down}^X$. Two distinct clauses containing these variables are another example for adjacent vertices in~$H_\phi^\cC$.

 However, from the fact that each variable occurs in at most four clauses in combination with Claim~\ref{Claim: Groupsizes}~a), it follows that the maximum degree of $H^{\mathcal{C}}_{\phi}$ is at most $12 \cdot \lceil \sqrt{n}\, \rceil$. Thus, there exists a proper vertex coloring~$\chi: {\mathcal{C}} \rightarrow \{1,2, \dots, 12 \cdot \lceil \sqrt{n}\, \rceil +1\}$ such that each color class $\chi^{-1}(i)$, $i \in \{1, \dots, 12 \cdot \lceil \sqrt{n}\, \rceil +1\}$, contains at most~$\lceil \frac{m}{12 \cdot \lceil \sqrt{n}\, \rceil +1} \rceil +1 \leq \lceil \sqrt{n}\, \rceil$ clauses~\cite{HS70}. Such coloring is known as \emph{equitable coloring} and since it has~$\Oh(\sqrt{n})$ colors, it can be computed in polynomial time~\cite{KKKMS10}.

For a given clause $C_i \in \mathcal{C}$ we define $\up^\cC(C_i):=j$ as the index of the color class $\chi^{-1}(j)$ that contains~$C_i$. The following claim provides a useful property for the clause gadget and can be shown with similar arguments as Claim~\ref{Claim: Different variable downs}.

\begin{cla} \label{Claim: Different clause ups}
If a clause $C_{j_1} \in \mathcal{C}$ contains a variable~$x_{i_1}$ and a clause $C_{j_2} \in \mathcal{C}$ contains a variable~$x_{i_2}$ such that $\down^X(x_{i_1})=\down^X(x_{i_2})$, then $\up^{\mathcal{C}}(C_{j_1}) \neq \up^{\mathcal{C}}(C_{j_2})$.
\end{cla}

\begin{claimproof}
By definition, $C_{j_1}$ and $C_{j_2}$ are adjacent in $H^ {\mathcal{C}}_{\phi}$. Hence, $C_{j_1}$ and $C_{j_2}$ are elements of different color classes and therefore $\up^{\mathcal{C}}(C_{j_1}) \neq \up^{\mathcal{C}}(C_{j_2})$. $\hfill \Diamond$
\end{claimproof}

Next, we define $\down^{\mathcal{C}}$ analogously to $\up^X$. To this end consider the finite sequence $\Seq^m= (\up^\cC(C_1), \up^\cC(C_2), \dots, \up^\cC(C_m))$ and define $\down^\cC(C_j)$ as the number of occurrences of $\up^\cC(C_j)$ in the finite sequence $\Seq^j:= (\up^\cC(C_1), \dots, \up^\cC(C_j))$. The fact that each color class contains at most $\lceil \sqrt{n}\, \rceil$ elements implies~$\down^\cC(C_j) \leq \lceil \sqrt{n}\, \rceil$. Intuitively, the following claim guarantees that distinct clauses correspond to distinct edges, which is an analogue statement to Claim \ref{Claim: representing edges not equal X}.

\begin{cla} \label{Claim: clause edges not equal}
Let $C_i, C_j \in \mathcal{C}$. If $C_i \neq C_j$, then $\{ \eta_{\up^{\cC}(C_i)}, \theta_{\down^\cC (C_j)}  \} \neq \{ \eta_{\up^{\cC}(C_j)}, \theta_{\down^\cC (C_j)}  \}$.
\end{cla}
\begin{claimproof}
Without loss of generality, let $i < j$. The claim obviously holds if $\up^\cC (C_i) \neq \up^\cC (C_j)$, so let~$\up^\cC (C_i) = \up^\cC (C_j)$. Then, there is at least one more occurrence of $\up^\cC(C_i)$ in the partial sequence~$\Seq^j_1$ compared to~$\Seq^i_1$. Therefore $\down^\cC (C_i) \neq \down^\cC (C_j)$. $\hfill \Diamond$
\end{claimproof}

We complete the description of the clause gadget by defining the vertex lists $\Lambda(v)$ for every $v \in U^{\mathcal{C}} \cup D^{\mathcal{C}}$. For a given clause $C_j \in  \mathcal{C}$ we define the \emph{color set} $\mathfrak{X}(C_j)$ and the \emph{literal color set} $\mathfrak{L}(C_j)$ of $C_j$ by
\begin{align*}
\mathfrak{X} (C_j) := & \text{ } \{ T^{\Omega(C_j,x_i)}_i, F^{\Omega(C_j,x_i)}_i \mid  x_i \text{ occurs in } C_j \} \text{, and}\\
\mathfrak{L} (C_j) := & \text{ } \{ T^{\Omega(C_j,x_i)}_i \mid  x_i \text{ occurs as a positive literal in } C_j \} \cup {}\\
 &\text{ } \{ F^{\Omega(C_j,x_i)}_i \mid  x_i \text{ occurs as a negative literal in } C_j \} \text{.}
\end{align*}

Note that $\mathfrak{L}(C_j) \subseteq \mathfrak{X}(C_j)$. The vertex lists for the vertices in $U^\cC \cup D^\cC$ are defined as
\begin{align*}
\Lambda(\eta_t) &:=  \bigcup_{\substack{C_j \in \cC \\ \up^\cC(C_j)=t}} \mathfrak{X}(C_j) \cup \{Z_3\} &\text{ for every } \eta_t \in U^\cC \text{, and}\\
\Lambda(\theta_t) &:=  \bigcup_{\substack{C_j \in \cC \\ \down^\cC(C_j)=t}} \mathfrak{L}(C_j) \cup \{Z_4\}  &\text{ for every } \theta_t \in D^\cC \text{.}
\end{align*}

\begin{cla} \label{Claim: Edge Colors Clauses}
Let $C_j \in \cC$. Then, $\Lambda(\eta_{\up^\cC(C_j)}) \cap \Lambda(\theta_{\down^\cC(C_j)}) = \mathfrak{L}(C_j)$.
\end{cla}

\begin{claimproof}
Let $\Lambda(j):= \Lambda(\eta_{\up^\cC(C_j)}) \cap \Lambda(\theta_{\down^\cC(C_j)})$.
Since $\mathfrak{L}(C_j) \subseteq \mathfrak{X}(C_j)$ it holds that~$\mathfrak{L}(C_j) \subseteq \Lambda(j)$. It remains to show that there is no other strong color $Y \in \Lambda(j) \setminus \mathfrak{L}(C_j)$.

\textbf{Case 1:} $Y \in \{Z_3, Z_4\}$. Then, since $Z_3 \not \in \Lambda(\theta_{\down^\cC(C_j)})$ and $Z_4 \not \in \Lambda(\eta_{\up^\cC(C_j)})$, it follows that $Y \not \in \Lambda(j)$.

\textbf{Case 2:} $Y \not \in \{Z_3, Z_4\}$. Assume towards a contradiction that~$Y \in \Lambda(j)$. From~$Y \in \Lambda(\theta_{\down^\cC(C_j)})$ it follows that there is a clause~$C_{j_1}$ with~$\down^\cC(C_{j_1})=\down^\cC(C_j)$ and~$Y \in \mathfrak{L}(C_{j_1})$. It holds that~$C_{j_1} \neq C_j$, since otherwise~$Y \in \mathfrak{L}(C_{j})$, which contradicts the fact that~$Y \in \Lambda(j) \setminus \mathfrak{L}(C_j)$. From~$Y \in \Lambda(\eta_{up^\cC(C_j)})$ it follows that there is a clause~$C_{j_2}$ with~$\up^\cC(C_{j_2})=\up^\cC(C_j)$ and~$Y \in \mathfrak{X}(C_{j_2})$. By the definition of~$\mathfrak{X}$ and~$\mathfrak{L}$ there exists a variable~$x_i$ that occurs in~$C_{j_1}$ and~$C_{j_2}$ such that~$Y=T_i^{\Omega(C_{j_1},x_i)}=T_i^{\Omega(C_{j_2},x_i)}$ or~$Y=F_i^{\Omega(C_{j_1},x_i)}=F_i^{\Omega(C_{j_2},x_i)}$. We conclude~$\Omega(C_{j_1},x_i)=\Omega(C_{j_2},x_i)$ and therefore~$C_{j_2}=C_{j_1}\neq C_j$. Then, the fact that~$\up^\cC(C_{j_1})=\up^\cC(C_j)$ and~$\down^\cC(C_{j_2})=\down^\cC(C_j)$ contradicts Claim~\ref{Claim: clause edges not equal} and therefore~$Y \not \in \Lambda(j)$. $\hfill \Diamond$
\end{claimproof}

\proofparagraph{Connecting the Gadgets.} We complete the construction of $G$ by describing how the vertices of the variable gadget and the vertices of the clause gadget are connected. The idea is to define edges between the vertices in $D^X$ and $U^{\mathcal{C}}$ that model the occurrences of variables in the clauses. 

For each~$C_j \in \cC$ we do the following: Let $x_{i_1}$, $x_{i_2}$, and $x_{i_3}$ be the variables that occur in $C_j$. We add the edges: $\{ \delta_{\down^X(x_{i_1})}, \eta_{\up^\cC(C_j)} \}$, $\{ \delta_{\down^X(x_{i_2})}, \eta_{\up^\cC(C_j)} \}$, and~$\{ \delta_{\down^X(x_{i_3})}, \eta_{\up^\cC(C_j)} \}$. The idea is that an edge~$\{ \delta_{\down^X(x_{i})}, \eta_{\up^\cC(C_j)} \}$ transmits the truth value of a variable~$x_i$ to a clause~$C_j$, where~$x_i$ occurs as a positive or negative literal. The following claim states that the possible strong colors for such an edge are only~$T^{\Omega(C_j,x_i)}_i$ and $F^{\Omega(C_j,x_i)}_i$, which correspond to the negated truth assignment of the~$\Omega(C_j,x_i)$-th occurrence of~$x_i$.

\begin{cla} \label{Claim: Edge Colors Connection}
Let $C_j \in \cC$ be a clause and let $x_i \in X$ be some variable that occurs in $C_j$. Then~$\Lambda(\delta_{\down^X(x_{i})}) \cap \Lambda(\eta_{\up^\cC(C_j)}) = \{T^{\Omega(C_j,x_i)}_i,F^{\Omega(C_j,x_i)}_i \}$.
\end{cla}

\begin{claimproof}
Let $\Lambda(i,j):= \Lambda(\delta_{\down^X(x_{i})}) \cap \Lambda(\eta_{\up^\cC(C_j)})$. Obviously, $\{T_i^{\Omega(C_j,x_i)}, F^{\Omega(C_j,x_i)}_i \} \subseteq \Lambda(i,j)$. It remains to show that there is no strong color $Y \in \Lambda(i,j) \setminus \{T_i^{\Omega(C_j,x_i)}, F^{\Omega(C_j,x_i)}_i \}$.

\textbf{Case 1:} $Y = Z_3$ or $Y= Z_2$. Since $Z_3 \not \in \Lambda(\delta_{\down^X(x_{i})})$ and $Z_2 \not \in \Lambda(\eta_{\up^\cC(C_j)})$, we have~$Y \not \in \Lambda(i,j)$.

\textbf{Case 2:} $Y = T_t^r$ or $Y = F_t^r$ with $t \neq i$ and $r \in \{1,2,3,4\}$. If~$Y \not \in \Lambda(\eta_{\up^\cC(C_j)})$, then obviously~$Y \not \in \Lambda(i,j)$. Thus, let $Y \in \Lambda(\eta_{\up^{\cC}(C_j)})$. Then, by the definition of~$\mathfrak{X}$, there is a clause $C_{j'}$ containing a variable $x_t \neq x_i$ with $\up^\cC(C_{j})= \up^\cC(C_{j'})$. If~$C_{j'}=C_j$, then Claim \ref{Claim: Different variable downs} implies~$\down^X(x_i)\neq\down^X(x_t)$ and thus~$Y \not \in \Lambda(\delta_{\down^X(x_{i})})$. Otherwise, if~$C_{j'} \neq C_j$, then Claim~\ref{Claim: Different clause ups} together with the fact that~$\up^\cC(C_{j})= \up^\cC(C_{j'})$ imply that~$\down^X(x_i)\neq\down^X(x_t)$. Consequently, $Y \not \in \Lambda(\delta_{\down^X(x_{i})}) \supseteq \Lambda(i,j)$.

\textbf{Case 3:} $Y = T_i^r$ or~$Y=F_i^r$ with~$r \neq \Omega(C_j,x_i)$. Obviously, ~$Y \in \Lambda(\delta_{\down^X(x_i)})$. Assume towards a contradiction that~$Y \in \Lambda(\eta_{\up^\cC(C_j)})$. Then, by the definition of the color set~$\mathfrak{X}(\cdot)$, there is a clause $C_{j'}$ containing $x_i$ such that $\up^\cC(C_{j'}) = \up^\cC(C_j)$ and $\Omega(C_{j'},x_i)=r\neq \Omega(C_j,x_i)$. It follows that $C_{j'} \neq C_j$ which contradicts Claim \ref{Claim: Different clause ups}. Hence, $Y \not \in \Lambda(\eta_{\up^\cC(C_j)}) \supseteq \Lambda(i,j)$.
 $\hfill \Diamond$
\end{claimproof}

This completes the description of the construction and basic properties of the \textsc{VL-Multi-STC} instance~$(G,9n+4,0,\Lambda)$. Note that~$G$ has~$\mathcal{O}(\sqrt{n})$ vertices. It remains to show the correctness of the reduction.

\proofparagraph{Correctness.} We show that there is a satisfying assignment for~$\phi$ if and only if there is a~$(9n+4)$-colored~$\Lambda$-satisfying STC-labeling~$L$ for~$G$ with strong color classes 
\begin{align*}
&S_L^{T^r_t},S_L^{F^r_t}, S_L^{R_t}, S_L^{Z_r} &\text{ for all }t \in \{1, \dots, n\} \text{ and } r \in \{1,2,3,4\} \text{,}
\end{align*}
and $W_L = \emptyset$. 

$(\Rightarrow)$ Let $A: X \rightarrow \{\text{`true'},\text{`false'}\}$ be a satisfying assignment for $\phi$. We describe to which strong color classes we add the edges of $G$ so that we obtain a $\Lambda$-satisfying STC-labeling.

First, we describe to which strong color classes we add the edges of the variable gadget. Formally, these are the edges in in~$E(U^X \cup M^X \cup D^X)$. Let $e:=\{\delta_{\down^X(x_i)}, \gamma^r_{\midd^X(x_i)} \}$ be an edge of the variable-representation gadget for some~$x_i \in X$ and~$r \in \{1,2,3,4\}$. We add $e$ to $S_L^{T^r_i}$ if~$A(x_i)=\text{`true'}$ or to $S_L^{F^r_i}$ if $A(x_i)=\text{`false'}$. In both cases, $e$ satisfies the $\Lambda$-list property by Claim~\ref{Claim: Edge Colors Variable Representer}. Next, let $e_1 := \{ \gamma_{\midd^X(x_i)}^r, \alpha_{\up^X(x_i)}^{(r,r')}\}$, and $e_2 := \{ \gamma_{\midd^X(x_i)}^{r'}, \alpha_{\up^X(x_i)}^{(r,r')}\}$ be two edges of the variable-soundness gadget for some $x_i \in X$, $r \in \{1,2,3,4\}$ and $r' \in \{1,2,3,4\} \setminus \{r\}$. We add $e_1$ to $S^{R_i}_L$ if $A(x_i)=\text{`true'}$ or to $S^{T_i^r}_L$ if~$A(x_i)=\text{`false'}$. Further, we add $e_2$ to~$S^{F_i^{r'}}_L$ if $A(x_i)=\text{`true'}$ or to~$S^{R_i}_L$ if $A(x_i)=\text{`false'}$. In each case,~$e_1$ and~$e_2$ satisfy the~$\Lambda$-list property by Claim~\ref{Claim: Edge Colors Variable Soundness Alpha}. For the remaining edges of the variable-gadget we do the following: We add all edges of $E(U^X)$ to $S^{Z_1}_L$ and all edges of $E(D_X)$ to $S^{Z_2}_L$. Obviously, this does not violate the $\Lambda$-list property.

Second, we describe to which strong color classes we add the edges of the clause gadget. Formally, these are the edges in~$E(U^\cC \cup D^\cC)$. Let $C_j \in \cC$ be a clause. Since $A$ satisfies $\phi$, there is some variable $x_i$ occurring in~$C_j$, such that the assignment $A(x_i)$ satisfies the clause $C_j$. Let $r := \Omega(C_j,x_i)$. We add the edge $\{ \eta_{\up^\cC(C_j)}, \theta_{\down^\cC(C_j)} \}$ to $S^{T_i^{r}}_L$ if $A(x_i)= \text{`true'}$ or to $S^{F_i^{r}}_L$ if $A(x_i)=\text{`false'}$. In both cases, the edge satisfies the $\Lambda$-list property by Claim \ref{Claim: Edge Colors Clauses}. For the remaining edges of the clause gadget we do the following: We add all edges of $E(U^\cC)$ to $S^{Z_3}_L$ and all edges of $E(D^\cC)$ to $S^{Z_4}_L$. Obviously, this does not violate the $\Lambda$-list property.

Finally, we describe to which strong color classes we add the edges between the two gadgets. Formally, these are the edges in~$E(D^X,U^\cC)$. Let $C_j \in \cC$ be a clause and let $x_i$ be some variable occurring in $C_j$. Let $r:= \Omega(C_j,x_i)$. We add the edge  $\{ \delta_{\down^X(x_i)}, \eta_{\up^\cC(C_j)} \}$ to $S^{F_i^{r}}_L$ if $A(x_i)=\text{`true'}$ or to $S^{T_i^{r}}_L$ if $A(x_i)= \text{`false'}$. This does not violate the~$\Lambda$-list property by Claim~\ref{Claim: Edge Colors Connection}.

We have now added every edge of $G$ to exactly one strong color class of $L$, such that $L$ is $\Lambda$-satisfying. It remains to show that there is no induced~$P_3$ containing two edges $\{u,v\}$ and $\{v,w\}$ from the same strong color class. In the following case distinction we consider every possible induced~$P_3$ on vertices $u,v$, and~$w$ where $v$ is the central vertex.

\textbf{Case 1:} $ v \in U^X$. Then, $v = \alpha_t^{(r,r')}$ for some $t \in \{1, \dots, \sn + 9\}$, $r \in \{1,2,3,4\}$ and $r' \in \{1,2,3,4\} \setminus \{r\}$. Note that the vertices in $U^X$ are not adjacent to vertices in $D^X$, $U^\cC$ and $D^\cC$. Thus, it suffices to consider the following subcases.

\textbf{Case 1.1:} $u \in U^X$. Then,~$\{u,v\} \in S^{Z_1}_L$. If~$w \in U^X$, then the vertices~$u,v$, and~$w$ do not form an induced~$P_3$, since~$U^X$ is a clique in~$G$. If~$w \not \in U^X$, then~$\{v,w\} \not \in S^{Z_1}_L$. Hence, there is no STC-violation.


\textbf{Case 1.2:}  $u,w \in M^X$. Then, there are variables~$x_i$ and~$x_j$ with~$\up^X(x_i) = \up^X(x_j) = t$ and~$u= \gamma^p_{\midd^X(x_i)}$,~$w= \gamma^{q}_{\midd^X(x_j)}$ for some~$p,q \in \{r,r'\}$. We need to consider the following subcases.

\textbf{Case 1.2.1:}  $x_i \neq x_j$. Then $i \neq j$. By Claim \ref{Claim: Edge Colors Variable Soundness Alpha} it holds without loss of generality that~$\Lambda(u) \cap \Lambda(v) \subseteq \{T^r_i, F^r_i, T^{r'}_i, F^{r'}_i, R_i \}$ and $\Lambda(w) \cap \Lambda(v) \subseteq \{T^r_j, F^r_j, T^{r'}_j, F^{r'}_j, R_j \}$. Since $L$ is $\Lambda$-satisfying, the edges $\{u,v\}$ and $\{v,w\}$ are elements of different strong color classes. Thus, there is no STC-violation.

\textbf{Case 1.2.2:} $x_i = x_j$. Then, $p \neq q$, since otherwise $u=v$. Without loss of generality, we have~$u = \gamma_{\midd^X(x_i)}^r$ and~$w =  \gamma_{\midd^X(x_i)}^{r'}$. If~$A(x_i)=\text{`true'}$, it follows that~$\{u,v\} \in S^{R_i}_L$ and~$\{v,w\} \in S^{F_i^{r'}}_L$. Otherwise, if~$A(x_i)=\text{`false'}$, it follows that~$\{u,v\} \in S^{T_i^{r'}}_L$ and~$\{v,w\} \in S^{R_i}_L$. In both cases, the edges~$\{u,v\}$ and~$\{v,w\}$ are elements of different strong color classes. Thus, there is no STC-violation.

\textbf{Case 2:} $v \in M^X$. Then $v = \gamma^r_t$ for some $t \in \{1,\dots,\sn\}$ and $r \in \{1,2,3,4\}$. Note that the vertices in $M^X$ are not adjacent to vertices in $U^{\mathcal{C}}$, $D^\cC$ and $M^X$. Thus, it suffices to consider the following subcases.

\textbf{Case 2.1:} $u,w \in U^X$ or $u,w \in D^X$. Then, since $U^X$ and $D^X$ are cliques in $G$, the vertices~$u,v$, and~$w$ do not form an induced~$P_3$ in $G$. Hence, there is no STC-violation.

\textbf{Case 2.2:} $ u \in U^X$ and $w \in D^X$. Then, there are variables~$x_i$ and~$x_j$ with $\midd^X(x_i)=\midd^X(x_j)=t$ and~$u \in \{\alpha^{(r,r')}_{\up^X(x_i)}, \alpha^{(r',r)}_{\up^X(x_i)} \}$,~$w=\delta_{\down^X(x_j)}$ for some~$r' \neq r$. We need to consider the following subcases.

\textbf{Case 2.2.1:} $x_i \neq x_j$. Then, $i \neq j$. Without loss of generality it holds by Claim \ref{Claim: Edge Colors Variable Soundness Alpha} that $\Lambda(u) \cap \Lambda(v) \subseteq \{T^r_i, F^r_i, T^{r'}_i, F^{r'}_i, R_i \}$ for some $r' \neq r$ and by Claim \ref{Claim: Edge Colors Variable Representer} that $\Lambda(v) \cap \Lambda(w) = \{T^r_j,F^r_j\}$. Since $L$ is $\Lambda$-satisfying, the edges $\{u,v\}$ and $\{v,w\}$ are elements of different strong color classes. Thus, there is no STC-violation.

\textbf{Case 2.2.2:}  $x_i = x_j$. Then, if $A(x_i)= \text{`true'}$ it follows that $\{u,v\} \in S_L^{R_i} \cup S_L^{F_i^r}$ and $\{v,w\} \in S_L^{T_i^r}$. If $A(x_i)=\text{`false'}$ it follows that $\{u,v\} \in S_L^{R_i} \cup S_L^{T_i^r}$ and $\{v,w\} \in S_L^{F_i^r}$. In both cases the edges $\{u,v\}$ and $\{v,w\}$ are elements of different strong color classes. Thus, there is no STC-violation.

\textbf{Case 3:} $v \in D^X$. Then $v = \delta_t$ for some $t \in \{1,\dots, \sn + 9 \}$. Note that the vertices in $D^X$ are not adjacent to vertices in $U^X$ and $D^\cC$. Thus, it suffices to consider the following subcases.

\textbf{Case 3.1:} $u \in D^X$. Then, $\{u,v\} \in S^{Z_2}_L$. If $w \in D^X$, then the vertices $u,v$, and~$w$ do not form an induced~$P_3$, since $D^X$ is a clique in $G$. If $w \not \in D^X$, then~$\{v, w\} \not \in S^{Z_2}_L$. Hence, there is no STC-violation.

\textbf{Case 3.2:} $u,w \in U^{\mathcal{C}}$. Then, the vertices $u,v$, and~$w$ do not form an induced $P_3$, since $U^\cC$ forms a clique.

\textbf{Case 3.3:} $u, w \in M^X$. By Claim~\ref{Claim: Edge Colors Variable Representer}, all edges $\{v,y\} \in E(\{v\},M^X)$ have distinct possible strong colors in~$\Lambda(v) \cap \Lambda(y)$. Since $L$ is $\Lambda$-satisfying, the edges~$\{u,v\}$ and~$\{v,w\}$ are elements of different strong color~classes.


\textbf{Case 3.4:} $u \in M^X$ and $w \in U^\cC$. Then,~$u= \gamma_{\midd^X(x_i)}^{r}$ for some $x_i \in X$ with $\down^X(x_i)=t$ and $r \in \{1,2,3,4\}$. Moreover,~$w = \eta_{\up^\cC(C_j)}$ for some clause~$C_j$ containing a variable $x_{i'}$ with~$\down^X(x_{i'})=t$. We need to consider the following subcases.

\textbf{Case 3.4.1:} $x_i \neq x_{i'}$. Then,~$i \neq i'$ and by Claim~\ref{Claim: Edge Colors Variable Representer} we have~$\Lambda(u) \cap \Lambda(v) = \{T^r_i, F^r_i\}$ and by Claim \ref{Claim: Edge Colors Connection} we have~$\Lambda(v) \cap \Lambda(w) = \{T^{r'}_{i'},F^{r'}_{i'}\}$ with~$r' = \Omega(C_j,x_{i'})$. Then, since~$L$ is~$\Lambda$-satisfying,~$\{u,v\}$ and~$\{v,w\}$ are not elements of the same strong color class.

\textbf{Case 3.4.2:} $x_i = x_{i'}$. Then, if $A(x_i)= \text{`true'}$ it follows that~$\{u, v \} \in S_L^{T_i^r}$ and~$\{v,w\} \in S_L^{F^{r'}_i}$ for some $r' \in \{1,2,3,4\}$. If $A(x_i)= \text{`false'}$, then it follows that~$\{u, v \} \in S_L^{F_i^r}$ and $\{v,w\} \in S_L^{T^{r'}_i}$ for some~$r' \in \{1,2,3,4\}$. In both cases $\{u,v\}$ and $\{v,w\}$ are elements of different strong color classes.

\textbf{Case 4:} $v \in U^\cC$. Then, $v= \eta_{t}$ for some $t \in \{1,\dots,12\sn +1\}$. Note that the vertices in $U^\cC$ are not adjacent to vertices in $U^X$ and $M^X$. Thus, it suffices to consider the following subcases.

\textbf{Case 4.1:} $u \in U^\cC$. Then, $\{u,v\} \in S^{Z_3}_L$. If $w \in U^{\cC}$, then the vertices $u,v$, and~$w$ do not form an induced $P_3$ since $U^X$ is a clique in $G$. If $w \not \in U^\cC$ it follows that~$\{v,w\} \not \in S^{Z_3}_L$. Hence, there is no STC-violation.

\textbf{Case 4.2:} $u,w \in D^X$ or $u,w \in D^\cC$. Then, the vertices $u,v$, and~$w$ do not form an induced $P_3$, since $D^X$ and $D^\cC$ form cliques in $G$.

\textbf{Case 4.3:} $u \in D^X$ and $w \in D^\cC$. Then, there is a clause $C_j$ with $\up^\cC(C_j)=t$ and a clause~$C_{j'}$ containing a variable $x_i$ with $\up^\cC(C_{j'})=t$ and $u=\delta_{\down^X(x_i)}$, $w=\theta_{\down^\cC(C_j)}$. We consider the following subcases.

\textbf{Case 4.3.1:} $C_j \neq C_{j'}$. Then, since $\up^\cC(C_j)=\up^\cC(C_{j'})$ it follows by Claim~\ref{Claim: Different clause ups} that~$C_j$ and~$C_{j'}$ do not share a variable. Hence,~$x_i$ does not occur in~$C_j$ and therefore~$T^{\Omega(C_{j'},x_i)}_i, F^{\Omega(C_{j'},x_i)}_i \not \in \mathfrak{L}(C_j)$. Thus, by Claims~\ref{Claim: Edge Colors Clauses} and~\ref{Claim: Edge Colors Connection} and the fact that~$L$ is~$\Lambda$-satisfying, the edges~$\{u,v\}$ and~$\{v,w\}$ are elements of different strong color classes.

\textbf{Case 4.3.2:} $C_j = C_{j'}$. Let $r:= \Omega(C_j,x_i)$. If $\{v,w\} \not \in S^{T^r_i}_L \cup S^{F^r_i}_L$, the edges $\{u,v\}$ and $\{v,w\}$ are elements of different color classes. Thus, there is no STC-violation. If $\{v,w\} \in S^{T^r_i}_L \cup S^{F^r_i}_L$ it follows by the construction of $L$ that $C_j$ is satisfied by the assignment $A(x_i)$. Without loss of generality assume that $x_i$ occurs as a positive literal in $C_j$. Then, $A(x_i)= \text{`true'}$. This implies~$\{v,w\} \in S^{T^r_i}_L$ and~$\{u,v \} \in S_L^{F_i^{r}}$. Hence, $\{u,v\}$ and $\{v,w\}$ are elements of different strong color classes.

\textbf{Case 5:} $v \in D^\cC$. Then, $v$ is not adjacent with any vertices in $U^X$, $M^X$ or~$D^X$. Hence, we need to consider the following cases.

\textbf{Case 5.1:} $u,w \in U^\cC$ or $u,w \in D^\cC$. Then, the vertices $u,v$, and~$w$ do not form an induced $P_3$ since $U^\cC$ and $D^\cC$ are cliques in $G$. 

\textbf{Case 5.2:} $u \in D^\cC$ and $w \in U^\cC$. Then, $\{u,v\} \in S^{Z_4}_L$ and $\{v,w\} \not \in S^{Z_4}_L$. Hence, there is no STC-violation.

This proves that $L$ is a $\Lambda$-satisfying STC-labeling for $G$ with no weak edges.

$(\Leftarrow)$ Conversely, let $L$ be a $(9n+4)$-colored $\Lambda$-satisfying STC-labeling for $G$. We show that $\phi$ is satisfiable. We define an assignment $A: C \rightarrow \{\text{`true'}, \text{`false'}\}$ by
\begin{align*}
A(x_i) := \begin{cases}
\text{`true'} & \text{if } \{\delta_{\down^X(x_i)}, \gamma^1_{\midd^X(x_i)}\} \in S^{T^1_i}_L \text{, and}\\
\text{`false'} & \text{if } \{\delta_{\down^X(x_i)}, \gamma^1_{\midd^X(x_i)}\} \in S^{F^1_i}_L \text{.}
\end{cases}
\end{align*}
The assignment is well-defined due to Claim \ref{Claim: Edge Colors Variable Representer}. The following claim states that, if one occurrence~$r \in \{1,2,3,4\}$ of some variable~$x_i$ that is assigned `true' (or `false', respectively), then so is the first occurrence of~$x_i$. We obtain this statement by using the variable-soundness gadget.

\begin{cla} \label{Claim: Correctness Soundness}
Let $x_i \in X$ and $r \in \{2,3,4\}$.
\begin{enumerate}
\item[a)] If $\{\delta_{\down^X(x_i)}, \gamma^r_{\midd^X(x_i)}\} \in S^{T^r_i}_L$, then $\{\delta_{\down^X(x_i)}, \gamma^1_{\midd^X(x_i)}\} \in S^{T^1_i}_L$.
\item[b)] If $\{\delta_{\down^X(x_i)}, \gamma^r_{\midd^X(x_i)}\} \in S^{F^r_i}_L$, then $\{\delta_{\down^X(x_i)}, \gamma^1_{\midd^X(x_i)}\} \in S^{F^1_i}_L$.
\end{enumerate}
\end{cla}
\begin{claimproof}
We show (a). Let $\{\delta_{\down^X(x_i)}, \gamma^r_{\midd^X(x_i)}\} \in S^{T^r_i}_L$. Consider the vertex $\alpha^{(r,1)}_{\up^X(x_i)}$. By Claim \ref{Claim: Edge Colors Variable Soundness Alpha} we have 
\begin{align*}
\Lambda(\alpha^{(r,1)}_{\up^X(x_i)}) \cap \Lambda(\gamma^r_{\midd^X(x_i)}) &= \{T^r_i,R_i\} \text{, and} \\
\Lambda(\alpha^{(r,1)}_{\up^X(x_i)}) \cap \Lambda(\gamma^1_{\midd^X(x_i)}) &= \{F^1_i,R_i\} \text{.}
\end{align*}
Note, that the vertices $\delta_{\down^X(x_i)}, \gamma^r_{\midd^X(x_i)}, \alpha^{(r,1)}_{\up^X(x_i)}$ form an induced~$P_3$ in $G$. Since~$L$ is a $\Lambda$-satisfying STC-labeling with no weak edges, it holds that~$\{\gamma^r_{\midd^X(x_i)}, \alpha^{(r,1)}_{\up^X(x_i)}\} \in S^{R_i}_L$. Then, since the vertices $\gamma^r_{\midd^X(x_i)}$, $\alpha^{(r,1)}_{\up^X(x_i)}$, and~$\gamma^1_{\midd^X(x_i)}$ form an induced~$P_3$, the same argument implies~$\{\alpha^{(r,1)}_{\up^X(x_i)}, \gamma^1_{\midd^X(x_i)} \} \in S^{F^1_i}_L$. Then, since $\Lambda(\delta_{\down^X(x_i)}) \cap \Lambda(\gamma^1_{\midd^X(x_i)}) = \{T^1_i,F^1_i\}$ by Claim \ref{Claim: Edge Colors Variable Representer} and the fact that $\delta_{\down^X(x_i)}$, \(\gamma^1_{\midd^X(x_i)}\), \(\alpha^{(r,1)}_{\up^X(x_i)}\) form an induced~$P_3$ it follows that~$\{\delta_{\down^X(x_i)}, \gamma^1_{\midd^X(x_i)}\} \in S^{T^1_i}_L$ as claimed.

Statement (b) can be shown with the same arguments by considering the vertex $\alpha^{(1,r)}_{\up^X(x_i)}$ instead~of~$\alpha^{(r,1)}_{\up^X(x_i)}$. $\hfill \Diamond$
\end{claimproof}

Next we use Claim \ref{Claim: Correctness Soundness} to show that every clause is satisfied by $A$. Let $C_j \in \cC$ be a clause. Then, there is an edge $e_1:=\{\eta_{\up^\cC(C_j)},\theta_{\down^\cC(C_j)}\} \in E$. By Claim~\ref{Claim: Edge Colors Clauses} we have $\Lambda(\eta_{\up^\cC(C_j)}) \cap \Lambda (\theta_{\down^\cC(C_j)}) = \mathfrak{L}(C_j)$. Since $L$ is $\Lambda$-satisfying, it follows that~$e_1 \in S^Y_L$ for some $Y \in \mathfrak{L}(C_j)$.

Consider the case $Y=T^r_i$ for some variable $x_i$ that occurs positively in $C_j$ and $r= \Omega(C_j,x_i)$. We show that $A(x_i)=\text{`true'}$. Since $x_i$ occurs in $C_j$ there is an edge $e_2:=\{\delta_{\down^X(x_i)},\eta_{\up^\cC(C_j)}\} \in E$ which can only be an element of the strong classes $S_L^{T^r_i}$ or $S_L^{F^r_i}$ due to Claim \ref{Claim: Edge Colors Connection}. Since $e_1$ and $e_2$ form an induced~$P_3$ and $L$ is an STC-labeling we have~$e_2 \in S_L^{F^r_i}$. The edge $e_3:=\{ \delta_{\down^X(x_i)},\gamma^r_{\midd^X(x_i)} \}$ forms an induced~$P_3$ with $e_2$ and can only be an element of the strong classes $S_L^{T^r_i}$ or $S_L^{F^r_i}$ by Claim \ref{Claim: Edge Colors Variable Representer}. Hence, $e_3 \in S_L^{T^r_i}$. By Claim \ref{Claim: Correctness Soundness} we may conclude $\{\delta_{\down^X(x_i)}, \gamma^1_{\midd^X(x_i)}\} \in S^{T^1_i}_L$ and therefore $A(x_i)=\text{`true'}$. Hence, $C_j$ is satisfied by $A$.

For the case $Y=F^r_i$ we can use the same arguments to conclude $A(x_i)=\text{`false'}$. Hence, $A$ satisfies every clause of $\phi$.
\end{proof}

Note that in the instance constructed in the proof of Theorem~\ref{Theorem: ETH lower bound}, every edge has at most three possible strong colors and~$c \in \Oh(n)$ where~$n$ is the number of variables in~$\phi$. This implies the following.

\begin{corollary}
If the ETH is true, then
\begin{enumerate}
\item[a)] \textsc{EL-Multi-STC} cannot be solved in~$2^{o(|V|^2)}$ time even if restricted to instances~$(G,c,k,\Psi)$ where~$k=0$ and~$\max_{e \in E}{|\Psi(e)|}=3$.
\item[b)] \textsc{VL-Multi-STC} cannot be solved in~$c^{o(|V|^2/\log c)}$ time even if\iflong{} restricted to instances where\fi~$k=0$.
\end{enumerate}
\end{corollary}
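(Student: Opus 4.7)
The plan is to transfer the ETH lower bound from Theorem~\ref{Theorem: ETH lower bound} on \textsc{VL-Multi-STC} to the two stated settings, by using the fact noted in the sentence preceding the corollary: in the reduction behind Theorem~\ref{Theorem: ETH lower bound}, the instance has $c = 9n+4 \in \Oh(n)$ strong colors (where $n$ is the number of variables in the original formula~$\phi$), the graph has $|V| \in \Oh(\sqrt{n})$ vertices, $k=0$, and for every edge $\{u,v\}$ the set $\Lambda(u) \cap \Lambda(v)$ of permitted strong colors has size at most three.

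For part~(a), I would define a map $\Psi \colon E \to 2^{\{1,\dots,c\}}$ from the $\Lambda$ of the instance constructed in Theorem~\ref{Theorem: ETH lower bound} by setting $\Psi(\{u,v\}) := \Lambda(u) \cap \Lambda(v)$ for each edge $\{u,v\} \in E$. Directly from the definitions of $\Lambda$- and $\Psi$-satisfying labelings, a $c$-colored STC-labeling is $\Lambda$-satisfying for $G$ if and only if it is $\Psi$-satisfying for $G$. Thus $(G,c,0,\Lambda)$ is a yes-instance of \textsc{VL-Multi-STC} iff $(G,c,0,\Psi)$ is a yes-instance of \textsc{EL-Multi-STC}, and by the stated bound on $|\Lambda(u) \cap \Lambda(v)|$ we have $\max_{e \in E} |\Psi(e)| \leq 3$. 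A $2^{o(|V|^2)}$-time algorithm for \textsc{EL-Multi-STC} on such instances would therefore solve the \textsc{VL-Multi-STC} instances produced by Theorem~\ref{Theorem: ETH lower bound} in time $2^{o(|V|^2)}$, contradicting the ETH.

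For part~(b), I would substitute the parameters of the reduction into the hypothetical running time. Since the produced graph has $|V| \in \Oh(\sqrt{n})$ and $c = 9n+4$, there are constants $C_1, C_2 > 0$ with $c \leq C_1 |V|^2$ and hence $\log c \leq 2 \log |V| + C_2$. A $c^{o(|V|^2/\log |V|)}$-time algorithm would therefore run in time
\[
c^{o(|V|^2/\log |V|)} \;=\; 2^{\log c \cdot o(|V|^2/\log |V|)} \;=\; 2^{\Oh(\log |V|) \cdot o(|V|^2/\log |V|)} \;=\; 2^{o(|V|^2)},
\]
again contradicting Theorem~\ref{Theorem: ETH lower bound}.

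The main thing to be careful about is the bound $|\Lambda(u) \cap \Lambda(v)| \leq 3$ for part~(a); this is not proved all at once in Theorem~\ref{Theorem: ETH lower bound} but follows by going through Claims~\ref{Claim: Edge Colors Variable Representer},~\ref{Claim: Edge Colors Variable Soundness Alpha},~\ref{Claim: Edge Colors Clauses}, and~\ref{Claim: Edge Colors Connection} for edges of $E(M^X,D^X)$, $E(U^X,M^X)$, $E(U^{\mathcal{C}},D^{\mathcal{C}})$, and $E(D^X,U^{\mathcal{C}})$ respectively, and by a short direct check for edges inside the cliques $U^X$, $D^X$, $U^{\mathcal{C}}$, $D^{\mathcal{C}}$, where the intersection only contains a single auxiliary color $Z_i$ together with at most two other strong colors. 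Once this observation is in place, both parts are essentially a substitution into the bound of Theorem~\ref{Theorem: ETH lower bound}.
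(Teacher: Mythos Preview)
Your approach matches the paper's, which simply notes that in the construction of Theorem~\ref{Theorem: ETH lower bound} every edge has at most three possible strong colors and $c \in \Oh(n)$, and then states the corollary; your write-up is in fact more detailed than the paper's one-sentence justification, and part~(b) is handled correctly.

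There is one small gap in your check for part~(a). For edges inside the clique $U^X$ between $\alpha^{(r,r')}_t$ and $\alpha^{(p,p')}_t$ with the \emph{same} index~$t$, the intersection $\Lambda(\alpha^{(r,r')}_t)\cap\Lambda(\alpha^{(p,p')}_t)$ contains $Z_1$ together with $R_i$ for \emph{every} variable $x_i$ with $\up^X(x_i)=t$, and there can be up to $\lceil\sqrt{n}\,\rceil$ such variables; so this intersection may exceed size~$3$. (The paper's own remark is imprecise on exactly this point.) The fix is immediate: in the \textsc{EL-Multi-STC} instance, set $\Psi(e):=\{Z_1\}$ for all $e\in E(U^X)$ rather than the full intersection. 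Since $Z_1$ occurs only in lists of $U^X$-vertices and the forward direction of the proof of Theorem~\ref{Theorem: ETH lower bound} colors every edge of $E(U^X)$ with $Z_1$ anyway, the equivalence with the \textsc{VL-Multi-STC} instance is preserved; and since restricting $\Psi$ to a subset of $\Lambda(u)\cap\Lambda(v)$ can only make the instance harder, the backward direction is immediate. For the other three cliques your claim is correct: distinct $\delta_t,\delta_s$ (respectively $\eta_t,\eta_s$ and $\theta_t,\theta_s$) have intersection exactly $\{Z_2\}$ (respectively $\{Z_3\}$, $\{Z_4\}$).
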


Since the lower bound holds for instances with~$k=0$, we also obtain a lower bound for approximation algorithms for \textsc{VL-Multi-STC} and \textsc{EL-Multi-STC}.

\begin{corollary}
If the ETH is true, then there exists no approximation algorithm for \textsc{VL-Multi-STC} that runs in time~$2^{o(n^2)}$.
\end{corollary}

\section{Parameterized Complexity}
\label{sec:pc}
Motivated by the strong hardness results from the previous section, we study the parameterized complexity of~\textsc{Multi-STC} and its list variants. The most natural parameter 
is the number~$k$ of weak edges. The case~$c=1$ (STC) is fixed-parameter tractable~\cite{ST14}. For~$c=2$, we also obtain an FPT algorithm: 
An STC-labeling with two strong colors corresponds to a proper two-coloring of the Gallai graph~$\tilde{G}$ of the input graph after deleting the vertices corresponding to weak edges.
Thus for \(c = 2\) \textsc{Multi-STC} reduces to \textsc{Odd Cycle Transversal} in~$\tilde{G}$ which is fixed-parameter tractable with respect to~$k$~\cite{RSV04}.
This extends to \textsc{EL-Multi-STC} with~$c=2$ by modifying the reduction slightly: After computing the Gallai graph, we add two vertex sets~$A$ and~$B$, each of size~$k+1$. Moreover, we add edges such that we obtain a biclique where~$A$ and~$B$ are the partite sets. We then connect each vertex that may only choose color~$1$ with every vertex of~$A$ and connect each vertex that may only choose color~$2$ with every vertex of~$B$.

In contrast, for every fixed $c \geq 3$, \textsc{Multi-STC} is NP-hard even if $k=0$~\cite{LG83}. Thus, FPT algorithms for\iflong\ the parameters $c$, $k$, or even the combined parameter\fi\ $(c,k)$ are unlikely. Instead, we study a structural parameter~$\ko$ that is related to~$k$. Informally,~$\ko$ is the solution size in the case where~$c=1$. Formally, the parameter is defined as follows.

\begin{definition}
Let $G=(V,E)$ be a graph with a $1$-colored STC-labeling $L=(S_L,W_L)$ \iflong such that there is no $1$-colored STC-labeling $L'=(S_{L'},W_{L'})$ for $G$ with $|W_{L'}| < |W_L|$.\else with a minimal number of weak edges.\fi{} Then we set $\ko = \ko(G) := |W_L|$.
\end{definition}
Equivalently, $\ko(G)$ is the size of a minimum vertex cover of the Gallai graph $\tilde{G}$ of~$G$~\cite{ST14}.

\subsection{A Fixed-Parameter Algorithm for $\mathbf{(c,\ko)}$}
We provide a simple FPT algorithm for \textsc{EL-Multi-STC} parameterized by~$(c,\ko)$\iflong{, which is the most general of the three problems\fi. The main idea of the algorithm is to solve \textsc{List-Colorable Subgraph} on the Gallai graph \iflong of the input graph which is equivalent due to Proposition~\ref{Proposition: Gallai equivalence}.\fi 

Let~$(G,c,k,\Psi)$ be an instance of \textsc{EL-Multi-STC}. The first step is to compute the Gallai graph~$\tilde{G}=(\tilde{V},\tilde{E})$ of~$G$ which has~$m$ vertices and at most~$nm$ edges.
  Observe that $(G,c,k,\Psi)$ is equivalent to the instance $(\tilde{G},c,k,\Psi)$ of \textsc{List-Colorable Subgraph} due to Proposition~\ref{Proposition: Gallai equivalence}.
  We describe an algorithm that solves~$(\tilde{G},c,k,\Psi)$ in~$\mathcal{O}((c+1)^{s} \cdot (|\tilde{V}|\cdot c + |\tilde{E}|))$ time, where~$s=\ko$ denotes the size of a minimum vertex cover of~\(\tilde{G}\).

 Let $S \subseteq \tilde{V}$ be a size-$s$ vertex cover of $\tilde{G}$, which can be computed in $\mathcal{O}(1.28^s+sn)$ time~\cite{CKX10}.
 Let $I:=\tilde{V} \setminus S$ denote the remaining independent set.
 We now compute whether $\tilde{G}$ has a subgraph-$c$-coloring~$a: \tilde{V} \rightarrow \{0,1,\dots, c\}$ with~$|\{ v \in \tilde{V} \mid a(v)=0\}| \leq k$.

 We enumerate all possible mappings $a_S:S \rightarrow \{0,1,\dots,c\}$.
 Observe that there are $(c+1)^s$ such mappings.
 For each $a_S$ we check whether $a_S(v) \in \Psi(v) \cup \{0\}$ for all $v \in S$.
 Furthermore, we check in $\mathcal{O}(|\tilde{V}| \cdot c + |\tilde{E}|)$~time whether $a_S$ is a subgraph-$c$-coloring for~$\tilde{G}[S]$.
 If this is not the case, then discard the current $a_S$.
 For every other choice of~$a_S$ we proceed as follows:

 We check whether it is possible to extend $a_S$ to a mapping $a: \tilde{V} \rightarrow \{0,1,\dots,c\}$ that is a subgraph-$c$-coloring for~$\tilde{G}$:
 For each vertex~$v \in I$ we check whether~$P_v := \Psi(v) \setminus \{a_S(w) \mid w \in N_{\tilde{G}}(v)\}$ is empty. If $P_v=\emptyset$ we set $a(v)=0$. Otherwise, we set $a(v)=p$ for some arbitrary $p \in P_v$. This can be done in $\mathcal{O}(|\tilde{V}| \cdot c + |\tilde{E}|)$~time. The resulting mapping $a:V \rightarrow \{0, 1, \dots, c\}$ is a subgraph-$c$-coloring for~$\tilde{G}$, since $a_S$ is a subgraph-$c$-coloring for $G[S]$ and every $v \in I$ has a color $a(v)$ distinct from all vertices in $N(v) \subseteq S$.
 Finally, we check whether the total number of vertices with $a(v)=0$ is at most~$k$.
 If so, then we accept and stop.
 Otherwise, we continue with the next mapping \(a_S\).
 If we did not accept for any of the enumerated mappings \(a_S\), then we reject.
 
  The overall running time of the algorithm is $\mathcal{O}((c+1)^{s} \cdot (nc + m))$.
 Recall that $\ko=s$, $|\tilde{V}|=m$ and $|\tilde{E}| \leq nm$.
 Therefore, we can solve \textsc{EL-Multi-STC} in $\mathcal{O}((c+1)^{\ko} \cdot (cm + nm))$ time.
 
 \begin{lemma}
 The algorithm described above is correct.
 \end{lemma}
 
\begin{proof}

 To see that the above algorithm is correct, observe first that it only accepts if it has found a subgraph-\(c\)-coloring for \(\tilde{G}\) with $|\{ v \in \tilde{V} \mid a(v)=0\}| \leq k$.
 For the other direction, assume that there is a subgraph-\(c\)-coloring~\(a^\star \colon \tilde{V} \to \{0,1,\dots, c\}\) with~$|\{ v \in \tilde{V} \mid a^\star(v)=0\}| \leq k$.
 One of the enumerated mappings \(a_S\) satisfies \(a_S = a^\star|_S\).
 For this mapping \(a_S\) the algorithm sets \(a(v) = 0\) for some vertex \(v \in I\) if and only if \(\Psi(v) \setminus \{a^\star(w) \mid w \in N_{\tilde{G}}(v)\}\) is empty.
 Thus, the number of vertices \(v \in \tilde{V}\) with \(a(v) = 0\) is at most $|\{ v \in \tilde{V} \mid a^\star(v)=0\}| \leq k$, as required.
\end{proof}

We obtain the following fixed-parameter tractability result.

\begin{restatable}{theorem}{elmultifpt}
\label{Theorem: (c+1)^ko Algorithm for EL-M-STC}
\textsc{EL-Multi-STC} can be solved in $\mathcal{O}((c+1)^{\ko} \cdot (cm + nm))$~time.
\end{restatable}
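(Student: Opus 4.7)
The plan is to leverage Proposition~\ref{Proposition: Gallai equivalence} and solve the equivalent \textsc{List-Colorable Subgraph} instance on the Gallai graph $\tilde{G}$. The key observation is that $\ko$ equals the vertex cover number of $\tilde{G}$, as noted after the definition of $\ko$. Thus, I would first construct $\tilde{G}$ (in $\Oh(nm)$ time, by iterating over all pairs of edges sharing an endpoint and checking whether they form an induced $P_3$) and compute a minimum vertex cover $C \subseteq V(\tilde G)$ of size $\ko$, using the $\Oh(1.28^{\ko} + nm)$-time algorithm of Chen, Kanj, and Xia, a running time that is dominated by the target bound.

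Next, I would branch over all mappings $C \to \{0,1,\dots,c\}$ that respect the vertex lists, giving at most $(c+1)^{\ko}$ branches. For a fixed branch, the remaining task is to extend the partial coloring of $C$ to a subgraph-$c$-coloring of all of $\tilde G$ with at most $k$ vertices colored $0$, which can be done very efficiently because $V(\tilde G)\setminus C$ is an independent set. Concretely, for each branch I would (a) verify in $\Oh(|\tilde E|)=\Oh(nm)$ time that no edge of $\tilde G[C]$ has two endpoints of the same non-zero color, and (b) for every $v \in V(\tilde G)\setminus C$, scan its neighbors in $C$ to collect the forbidden colors and then assign to $v$ an arbitrary color in $\Gamma(v)$ avoiding those forbidden colors, falling back to $0$ only if no such color exists. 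Since the outside vertices are pairwise non-adjacent, these choices are independent and greedy-optimal, i.e., they minimize the number of $0$-colored vertices subject to the fixed assignment on $C$. Aggregated over all $v$, step (b) takes $\Oh(\sum_v (\deg_{\tilde G}(v)+c)) = \Oh(nm + cm)$ time. Finally I would accept the branch iff the resulting coloring uses color $0$ on at most $k$ vertices.

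Correctness follows because any optimal extension of the partial coloring on $C$ to a subgraph-$c$-coloring can be obtained in this greedy way (independent set $\Rightarrow$ no interaction between outside vertices; an outside vertex is assigned $0$ only when forced), so the branch with the true optimum's restriction to $C$ yields the overall optimum. Combining a per-branch cost of $\Oh(cm + nm)$ with $(c+1)^{\ko}$ branches, plus the $\Oh(1.28^{\ko}+nm)$ preprocessing, gives the claimed bound.

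The main obstacle is not the algorithm itself but ensuring the per-branch running time is $\Oh(cm+nm)$ rather than something like $\Oh(c\cdot nm)$. The subtlety is in step (b): naively iterating over colors per neighbor of $v$ would cost $\Oh(c\cdot \deg(v))$ and blow up to $\Oh(c\cdot nm)$ summed over $v$; to obtain $\Oh(nm + cm)$ one must first spend $\Oh(\deg(v))$ to mark neighbor-used colors and then $\Oh(c)$ to pick an unmarked color from $\Gamma(v)$, with careful resetting of the marks (e.g.\ via a timestamp array) to avoid re-initialization cost per vertex.
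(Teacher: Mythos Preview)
Your proposal is correct and matches the paper's proof essentially step for step: reduce to \textsc{List-Colorable Subgraph} on the Gallai graph via Proposition~\ref{Proposition: Gallai equivalence}, compute a minimum vertex cover of size~$\ko$, branch over all $(c+1)^{\ko}$ colorings of the cover, and greedily extend to the independent set. Your discussion of the per-branch running time is in fact more careful than the paper's, which simply asserts the $\Oh(|\tilde V|\cdot c + |\tilde E|)$ bound without spelling out the marking trick.
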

Next, we conclude that \textsc{Multi-STC} parameterized by $\ko$ alone is fixed-parameter tractable. To this end we observe the following relationship between $c$~and~$\ko$.

\begin{restatable}{lemma}{cvsko}
\label{Lemma: c leq ko Multi-STC}
Let $G=(V,E)$ be a graph. For all $c, k \in \mathds{N}$ with $c > \ko(G)$ it holds that $(G,c,k)$ is a yes-instance for \textsc{Multi-STC}.
\end{restatable}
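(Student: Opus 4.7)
The plan is to reduce the statement to a chromatic-number bound on the Gallai graph $\tilde{G}$ of $G$. Since \textsc{Multi-STC} is the special case of \textsc{EL-Multi-STC} where $\Psi(e)=\{1,\dots,c\}$ for every edge~$e$, Proposition~\ref{Proposition: Gallai equivalence} tells us that $(G,c,k)$ is a Yes-instance of \textsc{Multi-STC} if and only if $\tilde{G}$ admits a subgraph-$c$-coloring $\chi$ with at most $k$ vertices mapped to~$0$ (and no list constraints). In particular, it suffices to exhibit, under the assumption $c>\ko$, a proper $c$-coloring of $\tilde{G}$, since such a coloring uses color~$0$ on no vertex and so certifies a Yes-instance for every $k\in\mathds{N}$.

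First, I would recall that, as noted right before the definition of $\ko$, the value $\ko$ is exactly the vertex cover number of~$\tilde{G}$. Let $S\subseteq \tilde V$ be a minimum vertex cover of $\tilde{G}$, so $|S|=\ko$, and let $I:=\tilde V\setminus S$, which is independent. I would then construct a proper coloring $\chi\colon\tilde V\to\{1,\dots,\ko+1\}$ of~$\tilde G$ by assigning each vertex of $S$ its own private color from $\{1,\dots,\ko\}$ and assigning every vertex of $I$ the single color $\ko+1$. This is clearly proper: edges inside $S$ are handled because all colors on $S$ are distinct, edges inside $I$ do not exist by independence, and edges between $S$ and $I$ use one color from $\{1,\dots,\ko\}$ and the color $\ko+1$.

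Finally, since $c>\ko$ we have $c\ge \ko+1$, so the coloring $\chi$ uses only colors in $\{1,\dots,c\}$ and witnesses a proper $c$-coloring of~$\tilde G$, i.e.\ a subgraph-$c$-coloring with empty zero-class. By Proposition~\ref{Proposition: Gallai equivalence}, this translates back to a $c$-colored STC-labeling $L$ of~$G$ with $W_L=\emptyset$, hence $|W_L|\le k$ for every $k\in\mathds{N}$, proving the lemma.

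The proof is essentially one line once the reduction to vertex coloring of the Gallai graph is in place; the only ``obstacle'' is simply to record the standard observation that any graph with vertex cover number $\ko$ is properly $(\ko+1)$-colorable. No case analysis or optimization is needed, so there is no substantive difficulty beyond invoking Proposition~\ref{Proposition: Gallai equivalence}.
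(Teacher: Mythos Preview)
Your proof is correct and is essentially the same argument as the paper's, just routed through the Gallai graph: the paper works directly on~$G$, taking an optimal $1$-colored STC-labeling with weak edges $e_1,\dots,e_{\ko}$, giving each~$e_i$ its own strong color~$i$ and putting all of~$S_L$ into color~$\ko+1$. Under the correspondence of Proposition~\ref{Proposition: Gallai equivalence} this is exactly your coloring of~$\tilde G$ (the weak edges form a minimum vertex cover of~$\tilde G$ and~$S_L$ is the complementary independent set), so the two proofs coincide.
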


\begin{proof}
Let $c> \ko$. Then there exists an STC-labeling $L=(S_L,W_L)$ for~$G$ with one strong color and $|W_L|=\ko$. Let $e_1,e_2, \dots, e_{\ko}$ be the weak edges of~$L$. We define a $c$-colored labeling $L^+ := (S^1_{L^+}, \dots, S^c_{L^+}, W_{L^+} )$ by
\begin{align*}
W_{L^+} := \emptyset \text{ and } S^i_{L^+} := \begin{cases}
      		\{e_i\} & \text{for } i \in \{1 ,\dots,\ko\} \text{,} \\
      		S_L    & \text{for } i = \ko +1 \text{,}\\
      		\emptyset & \text{for } i \in \{k+2, \dots, c \}  \text{.}
    	\end{cases}
\end{align*}
\looseness=-1
Since $c > \ko$, every edge of $G$ is labeled by $L^+$. Because $L$ is an STC-labeling, there is no induced~$P_3$ containing two edges from $S^{\ko+1}_{L^+} = S_L$. Moreover, since $|S^i_{L^+}| \leq 1$ for $i \neq \ko+1$, the labeling~$L^+$ satisfies STC. Since $|W_{L^+}| = 0$, it holds that $(G,c,k)$ is a yes-instance for \textsc{Multi-STC} for every~$k \in \mathds{N}$.
\end{proof}

Lemma~\ref{Lemma: c leq ko Multi-STC} implies an FPT algorithm for \textsc{Multi-STC} parameterized by~$\ko$ alone: Let $(G,c,k)$ be an instance of \textsc{Multi-STC}. If $c > \ko$, then $(G,c,k)$ is a yes-instance by Lemma \ref{Lemma: c leq ko Multi-STC}. We thus only need to consider instances with $c \leq \ko$. Replacing~$c$ by~$k_1$ in the running time bound of Theorem \ref{Theorem: (c+1)^ko Algorithm for EL-M-STC} then gives the following.

\begin{restatable}{corollary}{fptmultiko}
\label{Theorem: Multi-STC FPT ko}
\textsc{Multi-STC} can be solved in $\mathcal{O}((\ko+1)^{\ko} \cdot (\ko m + nm))$ time.
\end{restatable}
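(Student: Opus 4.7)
The plan is to combine Theorem~\ref{Theorem: (c+1)^ko Algorithm for EL-M-STC} with Lemma~\ref{Lemma: c leq ko Multi-STC} via a simple case distinction on whether $c > \ko$ or $c \le \ko$. First I would compute $\ko$ explicitly: by the correspondence between STC-labelings of~$G$ and vertex covers of the Gallai graph~$\tilde{G}$, the value $\ko$ equals the minimum vertex cover size of~$\tilde{G}$, which can be found in $\mathcal{O}(1.28^{\ko} + \ko m)$ time using the fastest known \textsc{Vertex Cover} algorithm~\cite{CKX10} on the $\Oh(m)$-vertex, $\Oh(nm)$-edge graph~$\tilde{G}$. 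This both yields the value $\ko$ needed for the case distinction and produces a minimum vertex cover~$S$ of~$\tilde{G}$ that can be reused in the subsequent algorithm.

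Next, if $c > \ko$, then by Lemma~\ref{Lemma: c leq ko Multi-STC} the instance $(G,c,k)$ is a Yes-instance, so we return ``Yes'' immediately. Otherwise we have $c \le \ko$. Since \textsc{Multi-STC} is the special case of \textsc{EL-Multi-STC} where $\Psi(e)=\{1,\dots,c\}$ for every edge~$e \in E$, we may invoke the algorithm of Theorem~\ref{Theorem: (c+1)^ko Algorithm for EL-M-STC} on the instance $(G,c,k,\Psi)$. Moreover, we can feed into that algorithm the vertex cover~$S$ that we already computed, so we do not pay twice for the vertex cover step.

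Combining the running times, we obtain $\mathcal{O}((c+1)^{\ko} \cdot (cm + nm))$ in the second case, which in view of $c \le \ko$ is bounded by $\mathcal{O}((\ko+1)^{\ko} \cdot (\ko m + nm))$; the first case is strictly cheaper. There is no real obstacle here: the only thing to check is that Theorem~\ref{Theorem: (c+1)^ko Algorithm for EL-M-STC} is applicable to \textsc{Multi-STC} as a special case and that the dichotomy based on $\ko$ is algorithmically usable, both of which are immediate from Proposition~\ref{Proposition: Gallai equivalence} and the explicit computation of~$\ko$.
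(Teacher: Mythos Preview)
Your proposal is correct and follows essentially the same approach as the paper: a case distinction on whether $c>\ko$ (trivially Yes by Lemma~\ref{Lemma: c leq ko Multi-STC}) or $c\le\ko$ (apply Theorem~\ref{Theorem: (c+1)^ko Algorithm for EL-M-STC} and substitute $c\le\ko$ into the running time). Your added detail about explicitly computing~$\ko$ via \textsc{Vertex Cover} on the Gallai graph and reusing the resulting cover is a reasonable elaboration that the paper leaves implicit.
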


\subsection{W[1]-hardness for Vertex-List Multi-STC parameterized by $\mathbf{\ko}$}
The fixed-parameter tractability of~\textsc{Multi-STC} parameterized by~$\ko$ alone relies on the relationship between~$c$ and~$\ko$ from Lemma~\ref{Lemma: c leq ko Multi-STC}. Unfortunately, Lemma~\ref{Lemma: c leq ko Multi-STC} does not hold for~\textsc{VL-Multi-STC}: Figure~\ref{Figure: c>k_1 with vertex lists} shows an example.

\begin{figure} 
\begin{center}
\begin{tikzpicture}
\tikzstyle{knoten}=[circle,fill=white,draw=black,minimum size=5pt,inner sep=0pt]
\tikzstyle{bez}=[inner sep=0pt]

\node[knoten] (u)  at (0,0) {};
\node[bez] at (0,0.5) {$\{1\}$};
\node[knoten] (v)  at (2,0) {};
\node[bez] at (2,0.5) {$\{1,2\}$};
\node[knoten] (w)  at (4,0) {};
\node[bez] at (4,0.5) {$\{1,2 \}$};
\node[knoten] (x)  at (6,0) {};
\node[bez] at (6,0.5) {$\{2 \}$};

\draw[-, line width=1pt]  (u) to (v);

\draw[-, line width=1pt]  (w) to (v);

\draw[-, line width=1pt]  (w) to (x);

\end{tikzpicture}
\end{center}
\caption{A graph~$G$ and vertex lists~$\Lambda$ of an instance~$I:=(G,c=2,k=0,\Lambda)$ of \textsc{VL-Multi-STC}. It is easy to see that~$I$ is a no-instance while~$k_1=1<2=c$.}
\label{Figure: c>k_1 with vertex lists}
\end{figure}

We now show that there is little hope to obtain fixed-parameter tractability for the list variants of~\textsc{Multi-STC} parameterized by~$\ko$ by giving a parameterized reduction from \textsc{Set Cover} which is defined as follows.
\begin{quote}
  \textsc{Set Cover}\\
  \textbf{Input}: A finite universe $U \subseteq \mathds{N}$, a family $\mathcal{F} \subseteq 2^U$, and an integer $t \in \mathds{N}$.\\
  \textbf{Question}: Is there a subfamily $\mathcal{F}' \subseteq \mathcal{F}$ with $|\mathcal{F}'| \leq t$ such that~$\bigcup_{F \in \mathcal{F}'} F = U$?
\end{quote}
More precisely, we reduce from \textsc{Set Cover} parameterized by the dual parameter~$|\mathcal{F}|-t$.
The W[1]-hardness of  \textsc{Set Cover} for this parameterization  follows from a classic reduction from \textsc{Independent Set} \cite{Karp72}. We provide it here for the sake of completeness.

\begin{restatable}{proposition}{setcoverwone}
  \label{SetCover-W1}
\textsc{Set Cover} parameterized by $|\mathcal{F}|-t$ is W[1]-hard.
\end{restatable}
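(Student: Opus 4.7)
The plan is to reduce from \textsc{Independent Set} parameterized by solution size~$k$, which is well-known to be W[1]-hard, via the standard textbook reduction between \textsc{Vertex Cover} and \textsc{Set Cover}, exploiting the complementarity between independent sets and vertex covers to turn the solution-size parameter into the dual parameter~$|\mathcal{F}|-t$.

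Given an instance~$(G=(V,E),k)$ of \textsc{Independent Set}, I would construct a \textsc{Set Cover} instance~$(U,\mathcal{F},t)$ by setting $U:=E$, introducing for each vertex~$v\in V$ the set $F_v:=\{e\in E\mid v\in e\}$ of edges incident to~$v$, letting $\mathcal{F}:=\{F_v\mid v\in V\}$, and putting $t:=|V|-k$. The construction is clearly polynomial-time. The parameter evaluates to $|\mathcal{F}|-t=|V|-(|V|-k)=k$, so this is a parameterized reduction as required.

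For correctness, I would use two standard equivalences. First, $G$ has an independent set of size at least~$k$ if and only if~$G$ has a vertex cover of size at most~$|V|-k$, since the complement of an independent set is a vertex cover and vice versa. Second, under the bijection $v\leftrightarrow F_v$, a set $S\subseteq V$ is a vertex cover of~$G$ if and only if $\{F_v\mid v\in S\}$ covers~$U=E$: every edge~$e=\{u,w\}$ is covered by~$F_v$ for some~$v\in S$ exactly when~$S$ contains at least one endpoint of~$e$. Chaining these equivalences yields that~$(G,k)$ is a Yes-instance of \textsc{Independent Set} if and only if~$(U,\mathcal{F},t)$ is a Yes-instance of \textsc{Set Cover}.

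There is no real obstacle in this argument; it is entirely routine. The only point worth double-checking is that the parameter of the target instance is bounded by a function of the source parameter, which holds with equality here, confirming this is a proper parameterized reduction and establishing W[1]-hardness of \textsc{Set Cover} parameterized by~$|\mathcal{F}|-t$.
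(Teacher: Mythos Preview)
Your proposal is correct and follows essentially the same approach as the paper: both reduce from \textsc{Independent Set} parameterized by solution size via the standard construction $U:=E$, $F_v:=\{e\in E\mid v\in e\}$, $t:=|V|-k$, and observe that $|\mathcal{F}|-t=k$. In fact, you give more detail on the correctness than the paper does, which only states the construction and the parameter calculation.
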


\begin{proof}
The \textsc{Independent Set} problem asks for a given graph $G=(V,E)$ and integer~\(s\) whether there is a subset $V' \subseteq V$ of size at least $s$ such that the vertices in $V'$ are pairwise non-adjacent in~$G$. It is known to be W[1]-hard when parameterized by~$s$~\cite{DF13}.

Let $(G=(V,E),s)$ be an instance of \textsc{Independent Set}. We construct a \textsc{Set Cover}-instance $(U,\mathcal{F},t)$ as follows. Set $U:=E$, $\mathcal{F}:=\{  F_v \mid v \in V \}$ with $F_v := \{\{v,u\} \mid u \in N(v)\}$ and $t:=|V|-s$. Note that $|\mathcal{F}|=|V|$, hence $|\mathcal{F}|-t=|V|-(|V|-s)=s$.
\end{proof}

\begin{restatable}{theorem}{vlwone}
    \label{VertList-W1-by-ko} \textsc{VL-Multi-STC} parameterized by $\ko$ is
    W[1]-hard, even if~$k=0$.
\end{restatable}

\begin{proof}
  We give a parameterized reduction from \textsc{Set Cover} parameterized by $|\mathcal{F}|-t$ which is W[1]-hard due to Proposition \ref{SetCover-W1}. For a given \textsc{Set Cover}-instance $(U, \mathcal{F}, t)$ we describe how to construct an equivalent \textsc{VL-Multi-STC}-instance $(G=(V,E),c,k,\Lambda)$ with $\ko \leq |\mathcal{F}| -t$ and $k=0$. Let~$\mathcal{F}=\{ F_1, \dots, F_{|\mathcal{F}|} \}$. Define the vertex set~$V$ of the input graph~$G$ by $V := U \cup Z \cup \{ a \}$ where \(a\) is a new vertex and $Z := \{z_i \mid i \in \{t+1, t + 2, \ldots, |\mathcal{F}| \} \}$.
  Define the edge set of $G$ by $E := E_U \cup E_{Ua} \cup E_{Za}$ with
  \begin{itemize}
  \item $E_{U} := \{\{v, w\} \mid v,w \in U \}$, 
  \item $E_{Ua} := \{ \{u, a\} \mid u \in U \}$, and
  \item     $E_{Za} := \{ \{z,a\} \mid z \in Z \}$.
\end{itemize}
Note that~$|E_{Za}|=|Z|=|\mathcal{F}|-t$ and that~$U$ is a clique.

We let $c := |\mathcal{F}|+1$ and define the lists $\Lambda$ as
\begin{align*}
\Lambda(v) := \begin{cases}
\{ i \mid v \in F_i \} \cup \{|\mathcal{F}|+1\}  & \text{if } v \in U \text{,}\\
\{ 1,2, \dots, |\mathcal{F}|\} & \text{if } v \not \in U \text{.}
\end{cases}
\end{align*}

The idea behind this construction is that the vertex $a$ selects sets from $\mathcal{F}$ by labeling the edges in $E_{Ua}$. The edges in $E_{Za}$ ensure, that there are exactly $t$ different strong colors left for the edges in $E_{Ua}$.

We first show that $\ko \leq |\mathcal{F}|-t$. Let $e_1, e_2 \in E$ be the edges of an induced $P_3$ in $G$. Since $U \cup \{a \}$ is a clique by construction, at least one of the edges $e_1$ or $e_2$ has one endpoint in $Z$, hence it belongs to $E_{Za}$. Since every $P_3$ in $G$ contains at least one edge from $E_{Za}$ it follows that defining $E_{Za}$ as weak edges and $E_U \cup E_{Ua}$ as strong edges yields an STC-labeling with one strong color. This labeling has~$|E_{Za}|=|\mathcal{F}|-t$ weak edges, hence~$\ko \leq |\mathcal{F}|-t$.

It remains to show that $(U, \mathcal{F}, t)$ has a solution $\mathcal{F}'$ of size $t$ if and only if $G$ has a $\Lambda$-satisfying STC-labeling $L=(S^1_L, \dots, S^{|\mathcal{F}|+1}_L, W_L)$ with~$W_L = \emptyset$.

$(\Rightarrow)$ Let $\mathcal{F}' \subseteq \mathcal{F}$ be a set cover of $U$ with $|\mathcal{F}'| = t$. Without loss of generality  (by renaming) let $\mathcal{F}' = \{ F_1, F_2, \dots, F_t\}$. We define an STC-labeling $L=(S^1_L, \dots, S^{|\mathcal{F}|+1}_L, \emptyset)$ as follows.
We start by defining the classes $S^i_L$ for each $i \in \{ t+1, t+ 2, \dots, |\mathcal{F}|+1\}$. We set
\begin{align*}
S^{i}_L := \{ \{a, z_i \} \} \text{ for every } i \in \{t+1, t+2, \ldots, |\mathcal{F}|\} \text{ and } S^{|\mathcal{F}|+1}_L := E_U \text{.}
\end{align*}
Note that $S^{t+1}_L \cup \dots \cup S^{|\mathcal{F}|+1}_L = E_U \cup E_{Za}$, so by defining the strong color classes $S^{t+1}_L, \dots, S^{|\mathcal{F}|+1}_L$ we have labeled all edges in $E_U \cup E_{Za}$. Before we continue with the definition of~$L$, we show that the definition of the strong classes~$S^{t+1}_L, \dots,S^{|\mathcal{F}|+1}_L$ does not violate the STC property and every edge in $E_U \cup E_{Za}$ satisfies the $\Lambda$-list property. Since $U$ is a clique by construction, there is no induced~$P_3$ containing two edges from $S^{|\mathcal{F}|+1}_L$ violating STC in $E_U$. Moreover, since all sets $S^{t+1}_L, \dots, S^{|\mathcal{F}|}$ contain exactly one edge, there is obviously no STC violation in $E_{Za}$. For every vertex $u \in U$ it holds that $|\mathcal{F}|+1 \in \Lambda(u)$, hence the $\Lambda$-list property is satisfied for every $e \in E_U$. Since \(\{1,2, \dots, |\mathcal{F}| \} = \Lambda(a) = \Lambda(z_{t+1}) = \dots = \Lambda(z_{|\mathcal{F}|})\), the edges in $E_{Za}$ also satisfy the $\Lambda$-list property.

We now label the edges in $E_{Ua}$ by defining the sets $S^1_L, \dots, S^t_L$. Recall that $\mathcal{F}' = \{F_1, \dots, F_t \}$ is a set cover of size $t$. We set
\begin{align*}
S^1_L := \{ \{u,a\} \mid u \in F_1 \} \text{ and } S^i_L~:=~\{ \{u,a\} \mid u \in F_i \setminus (F_1 \cup \dots \cup F_{i-1}) \}\text{ for each } i \in \{2, 3,\dots,t\}.
\end{align*}
Obviously, each edge of $E_{Ua}$ is an element of at most one of the sets $S^1_L, \dots, S^t_L$. Since $\mathcal{F'}$ is a set cover, we know that $F_1 \cup \dots \cup F_t = U$. Hence, $S^1_L, \dots ,S^t_L$ is a partition of~$E_{Ua}$. Since $U \cup \{ a\}$ forms a clique, no pair of edges in $E_{Ua}$ violates the definition of STC-labelings.
Moreover, since the edges in \(E_{Ua}\) receive different colors from the edges in \(E_{Za}\), no pair of edges from these two sets violates the definition of STC-labelings.
From the definition of $\Lambda$ we know that $\Lambda(a) = \{ 1, \dots, |\mathcal{F}| \}$ and for every $u \in U$ it holds that $i \in \Lambda(u)$ if $u \in F_i$. Hence, every edge in $E_{Ua}$ satisfies the $\Lambda$-list property. It follows that $L$ is a $c$-colored STC-labeling with $W_L = \emptyset$ such that every edge satisfies the $\Lambda$-list property under $L$.

$(\Leftarrow)$ Conversely, let $L~=~(S^1_L, \dots, S^{|\mathcal{F}|+1}_L, \emptyset)$ be a $c$-colored STC-labeling for $G$ such that every edge of $G$ satisfies the $\Lambda$-list property. We will construct a set cover $\mathcal{F}' \subseteq \mathcal{F}$ with $|\mathcal{F}'| \leq t$. We focus on the vertex $a$ and its incident edges. Those are exactly the edges of $E_{Ua} \cup E_{Za}$. Since there are no weak edges, we know that all those edges are elements of strong color classes $S^i_L$. Since $L$ is an STC-labeling and every pair of edges $e, e' \in E_{Za}$ forms a $P_3$, it follows by $|E_{Za}|= |\mathcal{F}|-t$ that the edges in \(E_{Za}\) are elements of $|\mathcal{F}|-t$ distinct color classes. Because there is no edge between the vertices of $U$ and $Z$, it also holds that there is no $e \in E_{Ua}$ that is an element of the same strong color class as some $e' \in E_{Za}$. Otherwise, $e$ and $e'$ form a $P_3$ with the same strong color which contradicts the fact that $L$ is an STC-labeling. It follows that the edges in $E_{Ua}$ are elements of at most $t$ distinct strong color classes, since $|\Lambda(a)|=|\mathcal{F}|$ and every edge of $G$ satisfies the $\Lambda$-list property under $L$. Without loss of generality (by renaming) we can assume that those strong color classes are $S^1_L, \dots, S^t_L$. Recall that $\mathcal{F}=\{F_1,F_2,\dots,F_{|F|}\}$. We define
\begin{align*}
\mathcal{F}' := \lbrace F_1, F_2, \dots, F_t \rbrace \text{.}
\end{align*}
Obviously, $|\mathcal{F}'|=t$. It remains to show that $\mathcal{F}'$ is a set cover.  The fact that all edges of $G$ satisfy the $\Lambda$-list property under $L$, implies that for every $u \in U$ there is a  $j \in \{1, \dots, t\}$ such that $j \in \Lambda(u)$. Since $\Lambda(u) = \{i \mid u \in F_i\} \cup \{|\mathcal{F}|+1\}$ for all $u \in U$ by construction, it follows that every $u \in U$ is an element of one of the sets $F_1, F_2, \dots, F_t$. Hence,~$U = F_1 \cup F_2 \cup \dots \cup F_t$ and, thus, $\mathcal{F}'$ is a set cover of size $t$.
\end{proof}

A closer look at the instance $(G,c,k,\Lambda)$ for \textsc{VL-Multi-STC} constructed from the  \textsc{Set Cover} instance~$(U, \mathcal{F}, t)$ in the proof of Theorem~\ref{VertList-W1-by-ko} reveals that $c=|\mathcal{F}|+1$ and $\ko \leq |\mathcal{F}| - t$. It follows that $c+\ko \leq 2|\mathcal{F}|+1$, so the construction is a polynomial parameter transformation from \textsc{Set Cover} parameterized by~$|\mathcal{F}|$ to \textsc{VL-Multi-STC} parameterized by $(c,\ko)$. Now, since \textsc{Set Cover} parameterized by $|\mathcal{F}|$ does not admit a polynomial kernel unless $\badStuffHappens$ \cite{DomLS14} we obtain the following.
\begin{corollary} \label{Corollary: VL-Mult-STC no poly kernel}
\textsc{VL-Multi-STC} parameterized by $(c, \ko)$ does not admit a polynomial kernel unless~$\badStuffHappens$.
\end{corollary}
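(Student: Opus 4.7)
The plan is to reuse the reduction already constructed in the proof of Theorem~\ref{VertList-W1-by-ko} and argue that it is in fact a polynomial-parameter transformation, so that a polynomial kernel for \textsc{VL-Multi-STC} with respect to $(c, \ko)$ would pull back to a polynomial kernel for \textsc{Set Cover} parameterized by $|\mathcal{F}|$, which is ruled out unless $\badStuffHappens$ by a known result of Dom, Lokshtanov, and Saurabh.

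First I would recall the standard framework: if there is a polynomial-time computable reduction from problem~$A$ with parameter~$\kappa_A$ to problem~$B$ with parameter~$\kappa_B$ that is correct and satisfies $\kappa_B \leq p(\kappa_A)$ for some polynomial~$p$, then a polynomial kernel for $(B, \kappa_B)$ yields one for $(A, \kappa_A)$. Hence it suffices to exhibit such a transformation from \textsc{Set Cover} parameterized by~$|\mathcal{F}|$ to \textsc{VL-Multi-STC} parameterized by~$(c, \ko)$, since \textsc{Set Cover} parameterized by~$|\mathcal{F}|$ admits no polynomial kernel unless $\badStuffHappens$~\cite{DomLS14}.

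Next I would revisit the construction in the proof of Theorem~\ref{VertList-W1-by-ko}. Given a \textsc{Set Cover} instance $(U, \mathcal{F}, t)$, it produces an equivalent \textsc{VL-Multi-STC} instance $(G, c, k, \Lambda)$ in polynomial time with $c = |\mathcal{F}| + 1$ and $k = 0$. Moreover, the labeling that puts every edge of $E_{za}$ weak and all remaining edges strong (with a single strong color) is a valid $1$-colored STC-labeling of~$G$, because every induced $P_3$ in~$G$ uses at least one edge of~$E_{za}$; this yields $\ko \leq |E_{za}| = |\mathcal{F}| - t$. Combining the two bounds, the combined parameter satisfies $c + \ko \leq 2|\mathcal{F}| + 1$, which is linear in the \textsc{Set Cover} parameter~$|\mathcal{F}|$.

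Since the reduction is polynomial time, correct (established in the proof of Theorem~\ref{VertList-W1-by-ko}), and has polynomially bounded output parameter, it is a polynomial-parameter transformation. The conclusion then follows from the aforementioned no-polynomial-kernel result for \textsc{Set Cover} parameterized by $|\mathcal{F}|$. No step is really an obstacle here; the only mildly delicate point is verifying that the upper bound $\ko \leq |\mathcal{F}| - t$ really is obtained from the existing construction without any additional modification, which is immediate from the clique structure of $U \cup \{a\}$ and the fact that every $P_3$ has an endpoint in~$Z$.
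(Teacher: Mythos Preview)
Your proposal is correct and follows essentially the same argument as the paper: observe that the reduction from the proof of Theorem~\ref{VertList-W1-by-ko} yields $c=|\mathcal{F}|+1$ and $\ko\le|\mathcal{F}|-t$, hence $c+\ko\le 2|\mathcal{F}|+1$, so it is a polynomial-parameter transformation from \textsc{Set Cover} parameterized by~$|\mathcal{F}|$, and then invoke~\cite{DomLS14}.
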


\subsection{On Problem Kernelization}
\label{sec:kernel}

Since \textsc{EL-Multi-STC} is a generalization of \textsc{VL-Multi-STC}, Corollary~\ref{Corollary: VL-Mult-STC no poly kernel} implies that there is no polynomial kernel for \textsc{EL-Multi-STC} parameterized by $(c,\ko)$ unless $\badStuffHappens$.
We now show that there is a $2^{c+1} \cdot \ko$-vertex kernel for \textsc{EL-Multi-STC}. To this end, we define a new parameter~$\tau$ as follows. Let $I:=(G,c,k,\Psi)$ be an instance of \textsc{EL-Multi-STC}. Then~$\tau:= |\Psi(E) \setminus \{\emptyset\}|$ is defined as the number of different non-empty edge lists occurring in the instance $I$. It clearly holds that~$\tau \leq 2^c-1$.

 For this kernelization we use \emph{critical cliques} and \emph{critical clique graphs}~\cite{Protti2009}.  These concepts were also used to obtain linear-vertex kernels for graph clustering problems parameterized by the number of edge modifications~\cite{Guo09,CM12} and for STC~\cite{GK20} parameterized by the number of weak edges~$k$. Since for STC we can assume that~$k=\ko$, the kernelization described here lifts this linear-vertex kernel for~\textsc{STC} to the more general~\textsc{EL-Multi-STC}.

\begin{definition} \label{Definition: Critical Cliques}
A \emph{critical clique} of a graph $G$ is a clique $K$ where the vertices of $K$ all have the same neighbors in $V \setminus K$, that is, \(\forall u, v \in K \colon N(u)\setminus K = N(v) \setminus K\), and $K$ is maximal under this property.
%
Given a graph $G=(V,E)$, let $\mathcal{K}$ be the collection of its critical cliques. The \emph{critical clique graph} $\mathcal{C}$ of~$G$ is the graph $(\mathcal{K},E_\mathcal{C})$ with
$\lbrace K_i, K_j \rbrace \in E_{\mathcal{C}} \Leftrightarrow \forall u \in K_i, v \in K_j: \lbrace u , v \rbrace \in E$ \text{.}
\end{definition}

For a critical clique $K$ we let $\mathcal{N}(K) := \bigcup_{K' \in N_\mathcal{C}(K)} K'$ denote the union of its neighbor cliques in the critical clique graph and $\mathcal{N}^2 (K) := \bigcup_{K' \in N_\mathcal{C}^2(K)} K'$ denote the union of the critical cliques at distance exactly two from~$K$. The critical clique graph can be constructed in $\mathcal{O}(n+m)$ time~\cite{Hsu}.

Critical cliques are an important tool for \textsc{EL-Multi-STC}, because every edge contained in some critical clique is not part of any induced $P_3$ in $G$. Hence, each such edge~$e$ is strong under any STC-labeling unless~$\Psi(e)=\emptyset$. In the following, we will distinguish between two types of critical cliques. We say that a critical clique $K$~is \emph{closed} if $\mathcal{N}(K)$ forms a clique in $G$ and that~$K$ is \emph{open} otherwise. The number of vertices in open critical cliques is at most $2\ko$~\cite{GK20}. The \iflong main reduction rule of this kernelization describes how to deal with large closed critical cliques. Before we give the concrete rules we provide a useful property of closed critical cliques. Informally, Lemma~\ref{Lemma: Type two neighbour monochrome} states that edges between a closed critical cliques and a neighbor vertex all behave the same if they have the same strong color~lists.

\begin{restatable}{lemma}{typetwoneighbor}
    \label{Lemma: Type two neighbour monochrome} Let $G=(V,E)$ be a
    graph, let $K$ be a closed critical clique in~$G$, and let
    $\Psi: E \rightarrow 2^{\{1,\dots,c\}}$ a mapping for some $c \in \mathds{N}$. Moreover, let~$v \in \mathcal{N}(K)$ and~$E' \subseteq E(\{v\}, K)$ such that all $e' \in E'$ have the
    same strong color list under~$\Psi$. Then, there is an optimal STC-labeling
    $L=(S^1_L, S^2_L, \dots, S^c_L, W_L)$ for $G$ and~$\Psi$ such that $E' \subseteq A$ for
    some $A \in \{ S^1_L, \dots, S^c_L, W_L \}$.
\end{restatable}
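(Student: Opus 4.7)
The plan is to start with an arbitrary optimal STC-labeling $L$ for $G$ and $\Psi$, and modify it so that all edges of $E'$ receive the same label, without increasing the number of weak edges or breaking the STC or $\Psi$-list properties. The key structural observation is that, because $K$ is a \emph{closed} critical clique, the induced~$P_3$s containing any edge $\{v,u\}$ with $u\in K$ are highly constrained. Indeed, since $K$ is critical we have $N_G(u)\setminus K = \mathcal{N}(K)$, and since $K$ is closed $\mathcal{N}(K)$ is a clique containing $v$; therefore $N_G[u]=K\cup\mathcal{N}(K)\subseteq N_G[v]$. Consequently, no $P_3$ of the form $\{v,u\},\{u,x\}$ is induced (there is no vertex of $N(u)$ outside $N[v]$), and every induced $P_3$ through $\{v,u\}$ must have its second edge of the form $\{v,x\}$ with $x\in N_G(v)\setminus(K\cup\mathcal{N}(K))$. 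Crucially, this set of ``partner edges'' depends only on $v$ and $K$, not on the particular $u\in K$.

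With this in hand I would split into two cases. If every edge of $E'$ lies in $W_L$, the lemma holds with $A=W_L$. Otherwise, pick any $e^*=\{v,u^*\}\in E'$ with $e^*\in S^i_L$ for some strong color $i$, and define $L'$ by removing every edge of $E'$ from its previous color class in $L$ and inserting all of $E'$ into $S^i_L$. Because all edges in $E'$ share the same strong color list under $\Psi$ and $i\in\Psi(e^*)$, each edge of $E'$ still satisfies its $\Psi$-list constraint, so $L'$ is $\Psi$-satisfying.

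To verify that $L'$ is an STC-labeling, observe that any induced $P_3$ whose labels differ between $L$ and $L'$ must contain at least one edge of $E'$. A pair of edges both in $E'$ share the endpoint $v$ and have their other endpoints in $K$, which is a clique, so such a pair is not an induced $P_3$. For a pair in which one edge $\{v,u\}\in E'$ and the other lies outside $E'$, the structural observation forces the outside edge to be $\{v,x\}$ with $x\in N_G(v)\setminus(K\cup\mathcal{N}(K))$. Since $e^*\in S^i_L$ satisfied STC in $L$ with all such $\{v,x\}$, no such partner edge carries color $i$ in $L$, and hence none does in $L'$ either. Thus no induced $P_3$ is monochromatic with color $i$ in $L'$, so $L'$ satisfies STC.

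Finally, changing an edge of $E'$ from $W_L$ to $S^i_{L'}$ strictly reduces the weak edge count, while changing it from $S^j_L$ with $j\neq i$ to $S^i_{L'}$ leaves the weak edge count unchanged; hence $|W_{L'}|\leq|W_L|$. Optimality of $L$ then forces equality, so $L'$ is also optimal and has $E'\subseteq S^i_{L'}$, proving the lemma. The main obstacle is the structural observation on the shape of induced $P_3$s through $\{v,u\}$; once that is established, the recoloring argument and the verification of the three properties (STC, $\Psi$-list, optimality) are all routine.
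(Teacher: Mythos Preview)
Your proposal is correct and uses essentially the same approach as the paper: both arguments hinge on the structural observation that every induced $P_3$ through an edge $\{v,u\}$ with $u\in K$ has its other edge in $E(\{v\},\mathcal{N}^2(K))$, a set independent of the choice of~$u$, so that recoloring edges of~$E'$ to match a strong edge $e^*\in E'$ cannot create a violation. The only difference is stylistic: the paper picks an optimal labeling maximizing $\max_i|S^i_L|$ and moves one edge at a time to derive a contradiction, whereas you recolor all of~$E'$ in one step; your direct construction is in fact slightly cleaner and sidesteps the need for the extremal choice.
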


\begin{proof}
Pick an optimal STC-labeling $L=(S^1_L, S^2_L, \dots, S^c_L, W_L)$. If~$E' \subseteq W_L$ nothing more needs to be shown. Otherwise, if $E' \not \subseteq W_L$, there exists an edge~$e\in E'$ with~$e \in S^i_L$ for some~$i \in \{1, \dots, c\}$.

We then define a labeling~$\hat{L} = (S^1_{\hat{L}}, S^2_{\hat{L}}, \dots, S^c_{\hat{L}}, W_{\hat{L}})$ by~$S^i_{\hat{L}} := S^i_L \cup E'$, $W_{\hat{L}} := W_L \setminus E'$ and~$S^j_{\hat{L}} := S^j_L \setminus E'$ for~$j \neq i$ and show that~$\hat{L}$ is a~$\Psi$-satisfying optimal STC-labeling.

Let~$\overline{e} \in E$ be an edge that forms an induced~$P_3$ with some edge in~$E'$. Since $K \cup \mathcal{N}(K)$ forms a clique by the definition of closed critical cliques, it follows that $\overline{e} \in E(\{v\}, \mathcal{N}^2(K))$. Note that $\overline{e}$ also forms an induced $P_3$ with $e$. Then, $\overline{e} \not \in S^i_{\hat{L}}$ since $L$ is an STC-labeling and $e \in S^i_L$. Hence, $\hat{L}$ does not violate STC. Moreover, it holds that~$|W_{\hat{L}}| \leq |W_L|$ and thus,~$\hat{L}$ is optimal.

From the definition of $E'$ and the fact that $e \in S^i_L$, we know
that $i \in \Psi(e')$ for all~$e' \in E'$. Hence, $\hat{L}$ is $\Psi$-satisfying.
\end{proof}

\begin{algorithm} [t]
\caption{EL-Multi-STC kernelization} \label{Algorithm: EL-Multi-STC kernel reduction}
\begin{algorithmic}[1]
\State \textbf{Input:} A graph $G=(V,E)$ graph and a closed critical clique~$K \subseteq V$ in $G$
\For {\textbf{each} $v \in \mathcal{N}(K)$}
\For {\textbf{each} $\psi \in \{ \Psi(e) \neq \emptyset \mid e \in E(\{v\},K) \}$}
\State $i := 0$
\For {\textbf{each} $w \in N(v) \cap K$}
\If {$\Psi(\{v,w\}) = \psi$}
\State Mark $w$ as \emph{important}  \label{Line: Mark as important}
\State $i:=i+1$
\EndIf
\If {$i = |E(\{v\},\mathcal{N}^2(K))|$}  \label{Line: important bound 1}
\State \textbf{break} \label{Line: important bound 2}
\EndIf
\EndFor
\EndFor
\EndFor
\State Delete all vertices $u \in K$ which are not marked as important from $G$
\State Decrease the value of $k$ by the number of edges $e$ such that~$e$ is incident with a deleted vertex $u$ and~$\Psi(e)= \emptyset$.
\end{algorithmic}
\end{algorithm}

We now may use Lemma~\ref{Lemma: Type two neighbour monochrome} in the following way.
We take a critical clique~\(K\) and for each vertex~\(v \in\mathcal{N}(K)\) and each color list we take an edge between \(v\) and \(K\) with that color list and we mark the other end of that edge.
Then we remove all vertices in \(K\) that are not marked, see Algorithm~\ref{Algorithm: EL-Multi-STC kernel reduction}.
The remaining vertices (and edges) represent the deleted vertices and using Lemma~\ref{Lemma: Type two neighbour monochrome} we can show that the deleted edges behave the same as the remaining ones.
This gives rise to the following reduction rule.

\begin{redrule} \label{Rule: T2 Cliques big kernel}
If $G$ has a closed critical clique $K$ with  $|K| > \tau \cdot |E(\mathcal{N}(K),\mathcal{N}^2(K))|$, then apply Algorithm~\ref{Algorithm: EL-Multi-STC kernel reduction} on $G$ and $K$.
\end{redrule}
\begin{restatable}{proposition}{algo}
    \label{Proposition: Kernel Algorithm}
    Rule~\ref{Rule: T2 Cliques big kernel} is safe and can be applied in polynomial time.
\end{restatable}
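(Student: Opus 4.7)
To prove Proposition~\ref{Proposition: Kernel Algorithm}, I need to show that Rule~\ref{Rule: T2 Cliques big kernel} can be applied in polynomial time and that it is safe. Polynomial-time application is immediate: the critical clique graph is computable in $\mathcal{O}(n+m)$ time, and Algorithm~\ref{Algorithm: EL-Multi-STC kernel reduction} performs $\mathcal{O}(|\mathcal{N}(K)|\cdot \tau \cdot |K|)$ constant-time iterations, plus a final linear scan to update $k$. Moreover, the algorithm strictly shrinks $G$ since the total number of marked vertices is bounded above by $\sum_{v \in \mathcal{N}(K)} \tau \cdot |E(\{v\}, \mathcal{N}^2(K))| = \tau \cdot |E(\mathcal{N}(K), \mathcal{N}^2(K))| < |K|$, by the precondition of the rule.

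Let $G'$ denote the reduced graph, $k'$ the updated budget, and $D \subseteq K$ the set of deleted vertices. The forward direction of safety is straightforward: for any $\Psi$-satisfying STC-labeling $L$ for $G$ with $|W_L|\le k$, the restriction $L|_{G'}$ is a $\Psi$-satisfying STC-labeling for $G'$. Since every edge $e$ with $\Psi(e)=\emptyset$ must be weak under any $\Psi$-satisfying $L$, the restriction removes at least $k-k'$ weak edges, giving $|W_{L|_{G'}}|\le k'$.

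The main difficulty is the backward direction: lifting a labeling $L'$ of $G'$ with $|W_{L'}|\le k'$ to a labeling $L$ of $G$ with $|W_L|\le k$. First I normalize $L'$ so that for every $v \in \mathcal{N}(K)$ and every nonempty list $\psi$, all edges in $E(\{v\}, K\setminus D)\cap \Psi^{-1}(\psi)$ share a common color $c_{v,\psi}$. Since these edge-sets are pairwise disjoint (each edge incident to $v$ belongs to at most one such set, determined by its list), a straightforward extension of Lemma~\ref{Lemma: Type two neighbour monochrome}'s argument -- choosing an optimal labeling that maximizes the potential $\sum_{v, \psi} \max_i |E(\{v\}, K \setminus D) \cap \Psi^{-1}(\psi) \cap S^i_{L'}|$ -- yields such a normalized $L'$ without increasing $|W_{L'}|$. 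I then extend $L'$ to $L$ by labeling each edge $\{u, u'\}$ inside $K$ with $u \in D$ weak if $\Psi(\{u, u'\}) = \emptyset$ and strong with an arbitrary permitted color otherwise, and labeling each edge $\{u,v\}$ with $u \in D$, $v \in \mathcal{N}(K)$ as $c_{v,\Psi(\{u,v\})}$ if $\Psi(\{u,v\})\ne\emptyset$ and weak otherwise; if no such $c_{v,\psi}$ exists, which can only occur when $|E(\{v\},\mathcal{N}^2(K))|=0$, pick any color in $\psi$.

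The critical step is to verify that $L$ satisfies the STC property. Since $K$ is closed, $K \cup \mathcal{N}(K)$ is a clique of $G$, so no induced $P_3$ of $G$ is centered at any $u \in D$. The only potentially conflicting $P_3$s involving a new edge $\{u,v\}$ thus have the form $\{u,v\}$-$\{v,x\}$ with $x \in N(v)\cap \mathcal{N}^2(K)$. Because Algorithm~\ref{Algorithm: EL-Multi-STC kernel reduction} marks at least one vertex $u'$ with $\Psi(\{u',v\})=\Psi(\{u,v\})$ whenever $|E(\{v\},\mathcal{N}^2(K))|>0$, the edge $\{u',v\}$ exists in $G'$ with color $c_{v,\Psi(\{u,v\})}$; a conflict $\{u,v\}$-$\{v,x\}$ in $L$ would then mirror a conflict $\{u',v\}$-$\{v,x\}$ in $L'$, contradicting that $L'$ is an STC-labeling. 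When $|E(\{v\},\mathcal{N}^2(K))|=0$, no such $P_3$ exists, so the arbitrary choice creates no conflict either. Finally, the only weak edges added in the extension are the edges with empty $\Psi$-list incident to $D$, of which there are exactly $k-k'$, yielding $|W_L|\le k' + (k-k') = k$ and completing the safety proof.
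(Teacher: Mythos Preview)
Your overall strategy matches the paper's, but there is a real gap in the backward direction. Your normalization via Lemma~\ref{Lemma: Type two neighbour monochrome} (and the potential-maximization variant of it) only guarantees that, for each pair $(v,\psi)$, the edges in $E(\{v\},K\setminus D)\cap\Psi^{-1}(\psi)$ lie in a single class $A\in\{S^1_{L'},\dots,S^c_{L'},W_{L'}\}$; it does not rule out $A=W_{L'}$. Concretely, if $\psi=\{i\}$ and some edge of $E(\{v\},\mathcal{N}^2(K))$ already carries strong color~$i$ under $L'$, then every edge of the $(v,\psi)$-set is forced to be weak, and Lemma~\ref{Lemma: Type two neighbour monochrome}'s ``move $e''$ into the class of $e'$'' move does nothing. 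In that situation your extension either assigns the new edge $\{u,v\}$ the label ``weak'' (and your final accounting ``the only weak edges added are those with empty list'' fails) or picks an arbitrary strong color in $\psi$ (which then conflicts with the $E(\{v\},\mathcal{N}^2(K))$ edge of that color). Your assertion that ``no such $c_{v,\psi}$ exists only when $|E(\{v\},\mathcal{N}^2(K))|=0$'' is precisely the point that is unjustified.

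The paper closes exactly this gap with an additional swap (its Claim~\ref{Claim: Strong neighbors}): because $u\in D$ was not marked by the algorithm, the $(v,\psi)$-set inside $K\setminus D$ has size at least $|E(\{v\},\mathcal{N}^2(K))|$; hence one can recolor all $(v,\psi)$-edges with some strong $i\in\psi$ and simultaneously push the conflicting edges of $S^i_{L'}\cap E(\{v\},\mathcal{N}^2(K))$ to weak, without increasing $|W_{L'}|$. This swap---not just Lemma~\ref{Lemma: Type two neighbour monochrome}---is what forces a \emph{strong} common color whenever $|E(\{v\},\mathcal{N}^2(K))|>0$ and is what makes your budget claim true. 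Once you insert this step (and observe that the swaps for different pairs $(v,\psi)$ only touch disjoint edge sets in $E(\mathcal{N}(K),K\setminus D)$), your argument becomes essentially the paper's.
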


\begin{proof} 
Let $(G=(V,E), c, k , \Psi)$ be an
    instance for \textsc{EL-Multi-STC} and let $K$ be a closed critical clique. We show that Algorithm~\ref{Algorithm: EL-Multi-STC kernel reduction} applied on $G$ and $K$ runs in
    $\mathcal{O}(n^3)$ time and produces an equivalent instance $(G'=(V',E'), c, k', \Psi')$
    for \textsc{EL-Multi-STC}.

Since $|\mathcal{N}(K)|, |N(v)\cap K| \leq n$ and $|\{ \Psi(e) \neq \emptyset \mid e \in E(\{v\},K) \}| \leq |K| \leq n$, the given algorithm clearly runs in $\mathcal{O}(n^3)$ time. It remains to show that the produced instance $I':=(G'=(V',E'), c, k', \Psi')$ is equivalent to $I:=(G=(V,E), c, k , \Psi)$. Let $D_V \subseteq V$ be the set of vertices that were deleted by Algorithm \ref{Algorithm: EL-Multi-STC kernel reduction}, let $D_E$ be the set of edges that are incident with some $v \in D_V$ and let $D^{\emptyset}_E \subseteq D_E$ be the set of edges $e \in D_E$ with $\Psi(e)=\emptyset$. We have
\begin{align*}
G'=(V \setminus D_V, E \setminus D_E) \text{; }  k'= k-|D_E^{\emptyset}| \text{; and } \Psi'= \Psi|_{E \setminus D_E} \text{.}
\end{align*}
We also define $K' := K\setminus D_V$ as the modified critical clique in $G'$.

$(\Rightarrow)$ Let $L=(S^1_L, S^2_L, \dots, S^c_L, W_L)$ be a $\Psi$-satisfying STC-labeling for $G$ such that $|W_L| \leq k$. We define a labeling $\hat{L}= (S^1_{\hat{L}},\dots,S^c_{\hat{L}},W_{\hat{L}})$ by $W_{\hat{L}}:=W_L \setminus D_E$ and $S_{\hat{L}}^i := S^i_L \setminus D_E$ for each $i \in \{1, \dots, c\}$. The fact that $L$ is $\Psi$-satisfying implies that $\hat{L}$ is $\Psi'$-satisfying. It also holds that
\begin{align*}
|W_{\hat{L}}| = |W_L \setminus D_E| = |W_L| - |W_L \cap D_E| \leq k - |D_E^{\emptyset}| = k' \text{,}
\end{align*}
since $D_E^{\emptyset} \subseteq W_L \cap D_E$. It remains to prove that $\hat{L}$ does not violate~STC. Assume there is an induced $P_3$ on vertices $u,v$, and~$w$ in~$V'$ with edges $\{ u, v \} ,\{v,w\} \in S^i_{\hat{L}}$ for some $1 \leq i \leq c$. It follows that $\{u,w\} \in D_E$, since $L$ is an STC-labeling. Then, by the definition of $D_E$, at least one of the vertices $u$ or $w$ was deleted by the algorithm. This contradicts the fact that $u,w \in V'=V \setminus D_V$. It follows that $\hat{L}$ is a $\Psi'$-satisfying STC-labeling for $G'$ with at most $k'$ weak edges.

$(\Leftarrow)$ Conversely, let $\hat{L}= (S^1_{\hat{L}},\dots,S^c_{\hat{L}},W_{\hat{L}})$ be a $\Psi'$-satisfying STC-labeling for $G'$ such that $|W_{\hat{L}}| \leq k - |D_E^{\emptyset}|$. We define a $\Psi$-satisfying STC-labeling $L$ for $G$, with $|W_L| \leq k$. We first show that if a vertex in~$\mathcal{N}(K')$ is connected to~$K'$ by many edges with the same strong color list, then these edges may receive the same strong color. To show the claim, consider a fixed vertex $v \in \mathcal{N}(K')$ and a set $K_v \subseteq K'$ such that all edges in $E(\{v\}, K_v)$ have the same strong color list $\psi \neq \emptyset$ under $\Psi'$. 

\begin{cla} \label{Claim: Strong neighbors}
If $|K_v| \geq |E(\{v\},\mathcal{N}^2(K'))|$, then we can assume that $E(\{v\}, K_v) \subseteq S^i_{\hat{L}}$ for some $1 \leq i \leq c$.
\end{cla}

\begin{claimproof}
Since $K'$ is a closed critical clique, we can assume by Lemma \ref{Lemma: Type two neighbour monochrome}, that either all edges in~$E(\{v\}, K_v)$ are weak or have the same strong color under $\hat{L}$. In the latter case, the claim is directly fulfilled. Thus, assume the first case holds, that is,~$E(\{v\}, K_v) \subseteq~W_{\hat{L}}$.
Note that, whenever an edge $e \in E(\{v\}, K_v)$ forms an induced $P_3$ with another edge $e'$, it follows that $e' \in E(\{v\},\mathcal{N}^2(K'))$. Let $i \in \psi$. We define a new labeling $P=(S^1_P, \dots, S^c_P, W_P)$ for $G'$ by
\begin{align*}
S^i_P &:= S^i_{\hat{L}} \cup E(\{v\}, K_v) \setminus E(\{v\},\mathcal{N}^2(K'))\text{,}\\
W_P &:= W_{\hat{L}} \setminus E(\{v\}, K_v) \cup (S^i_{\hat{L}} \cap E(\{v\},\mathcal{N}^2(K')))  \text{, and}\\
S^j_P &:= S^j_{\hat{L}} \text{ for all }j \neq i \text{.}
\end{align*}
From $|K_v| \geq |E(\{v\},\mathcal{N}^2(K'))|$ we conclude
\begin{align*}
|W_P| &= |W_{\hat{L}}| - |E(\{v\}, K_v)| +|S^i_{\hat{L}} \cap E(\{v\},\mathcal{N}^2(K'))|\\
& \leq |W_{\hat{L}}| - |K_v| + |E(\{v\},\mathcal{N}^2(K'))|\\
& \leq |W_{\hat{L}}| \text{.}
\end{align*}
Moreover, $P$ clearly is $\Psi'$-satisfying. It remains to show that $P$ is an STC-labeling, which means that there is no induced $P_3$ containing an edge $e \in E(\{v\},K_v) \subseteq S^i_P$ and another edge $e' \in S^i_P$. As mentioned above, the edges in $E(\{v\},K_v)$ only form an induced $P_3$ with edges from $E(\{v\},\mathcal{N}^2(K'))$. By the construction of $P$, no edge from $E(\{v\},\mathcal{N}^2(K'))$ belongs to $S^i_P$. Hence, $P$ is an STC-labeling.~$\hfill \Diamond$
\end{claimproof}

Let thus \(\hat{L}\) be such that the assumption of Claim~\ref{Claim: Strong neighbors} holds simultaneously for all eligible \(v \in \mathcal{N}(K')\) and \(K_v\).
We define the labeling $L$ for $G$ by extending $\hat{L}$. We set $W_L := W_{\hat{L}} \cup D_E^{\emptyset}$. Since $|W_{\hat{L}}| \leq k-|D_E^{\emptyset}|$, it holds that $|W_L| \leq k$. It remains to label all edges in $D_E \setminus D_E^{\emptyset}$. Let $u$ be some fixed vertex in $D_V$ and $v \in N(u)$ such that $\{u,v\} \not \in D_E^{\emptyset}$.

\textbf{Case 1:} If $v \in K$, then the edge $\{u,v\}$ is an edge between two vertices of a critical clique. Since~$\{u,v\} \not \in D_E^{\emptyset}$, there is some $i \in \Psi(\{u,v\})$. We add $\{u,v\}$ to $S^i_L$.
Clearly, $\{u,v\}$ satisfies the $\Psi$-list property.
Since $\{u,v\}$ is not part of any induced~$P_3$ this does not violate STC.

\textbf{Case 2:} If $v \in \mathcal{N}(K)$, then there is a set $Y \subseteq K'$ containing at least $|E(\{v\},\mathcal{N}^2(K))|$ vertices distinct from $u$ such that $\Psi(\lbrace v, y\rbrace) =  \Psi(\{v,u\})$ for every $y \in Y$. Otherwise, $u$ would have been marked as \emph{important} by Algorithm~\ref{Algorithm: EL-Multi-STC kernel reduction}, which contradicts the fact that $u \in D_V$. From Claim \ref{Claim: Strong neighbors}, we know that all edges in $E_{G'}(\{v\},Y)$ are elements of the same strong color class $S^i_{\hat{L}}$ for some $i \in \{1, \dots, c \}$. We set $S^i_L := S^i_{\hat{L}} \cup \{\{u,v\}\}$. Clearly, $\{u,v\}$ satisfies the $\Psi$-list property under $L$. Moreover, adding~$\{u,v\}$ to~$S^i_L$ does not violate STC: The only edges that form an induced~$P_3$ with~$\{u,v\}$ are the edges in $E(\lbrace v \rbrace, \mathcal{N}^2(K))$. Now, since~$E_{G'}(\{v\},Y) \subseteq S^i_{\hat{L}}$ and since~$\hat{L}$ is an STC-labeling for~$G'$, none of these edges is contained in $S^i_{\hat{L}}$.

Consequently, $L$ is a $\Psi$-satisfying STC-labeling for $G$ with~$|W_L|~\leq~k$.
\end{proof}

We now consider instances that are reduced regarding Rule~\ref{Rule: T2 Cliques big kernel}. The following upper bound of the size of closed critical cliques is important for the kernel result.

 \begin{restatable}{lemma}{ttwoupperbound}
  \label{Lemma: Upper Bound for T2 cliques}
  Let $(G,c,k,\Psi)$ be a reduced instance for \textsc{EL-Multi-STC}. For every closed critical
  clique~$K$ in $G$ it holds that
    $|K| \leq \tau \cdot |E(\mathcal{N}(K),\mathcal{N}^2(K))|$.
\end{restatable}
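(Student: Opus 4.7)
The plan is to prove the lemma by contraposition, deriving the bound directly from the definition of a reduced instance combined with the triggering condition of Rule~\ref{Rule: T2 Cliques big kernel}. Concretely, I would assume toward a contradiction that some closed critical clique $K$ in $G$ satisfies $|K| > \tau \cdot |E(\mathcal{N}(K),\mathcal{N}^2(K))|$; this is precisely the condition that triggers Rule~\ref{Rule: T2 Cliques big kernel}. Hence the rule is applicable to~$(G,c,k,\Psi)$, contradicting the hypothesis that the instance is reduced.

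To make the contradiction airtight, I would verify that applying Rule~\ref{Rule: T2 Cliques big kernel} to such a $K$ genuinely modifies the instance, i.e., that Algorithm~\ref{Algorithm: EL-Multi-STC kernel reduction} deletes at least one vertex of~$K$. This amounts to bounding the total number of vertices that the algorithm can mark as \emph{important}. Inspecting the nested loops of Algorithm~\ref{Algorithm: EL-Multi-STC kernel reduction}: for every fixed $v \in \mathcal{N}(K)$ and every fixed strong color list $\psi$ occurring on an edge of $E(\{v\},K)$, the break condition on lines~\ref{Line: important bound 1}--\ref{Line: important bound 2} guarantees that at most $|E(\{v\},\mathcal{N}^2(K))|$ vertices of $K$ are marked important in that iteration. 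Since by definition at most $\tau$ distinct non-empty lists occur among the edges incident with $v$ and $K$, summing over all such $\psi$ and over all $v \in \mathcal{N}(K)$ yields at most $\tau \cdot \sum_{v \in \mathcal{N}(K)} |E(\{v\},\mathcal{N}^2(K))| = \tau \cdot |E(\mathcal{N}(K),\mathcal{N}^2(K))|$ important vertices in total.

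Consequently, if $|K|$ were strictly larger than this quantity, then Algorithm~\ref{Algorithm: EL-Multi-STC kernel reduction} would delete at least one unmarked vertex from~$K$, producing a strictly smaller graph and thereby truly applying Rule~\ref{Rule: T2 Cliques big kernel}. This contradicts reducedness, establishing the claimed inequality. There is no real obstacle here beyond confirming the counting of important vertices and ensuring that the rule actually alters the instance; both are direct from the construction of Algorithm~\ref{Algorithm: EL-Multi-STC kernel reduction} and the definition of~$\tau$.
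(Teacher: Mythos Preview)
Your proposal is correct and follows essentially the same approach as the paper: both arguments bound the number of vertices marked \emph{important} by Algorithm~\ref{Algorithm: EL-Multi-STC kernel reduction} via the same per-vertex count $\tau \cdot |E(\{v\},\mathcal{N}^2(K))|$ and then sum over $v \in \mathcal{N}(K)$. Your contrapositive framing is slightly more explicit about why the rule would genuinely modify the instance, but the core counting is identical.
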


\begin{proof}
We prove the lemma by having a closer look at the vertices that were not deleted by Algorithm~\ref{Algorithm: EL-Multi-STC kernel reduction}. Note that the algorithm is applied on every closed critical clique $K$ with $|K| >  \tau \cdot |E(\mathcal{N}(K),\mathcal{N}^2(K))|$. Every vertex that was not marked as \emph{important} in Line~\ref{Line: Mark as important} of the algorithm is deleted from $G$. Note that there are at most $\tau$ possible images $\psi$ of $\Psi: E \rightarrow 2^{\{1,\dots,c\}}$. By Lines~\ref{Line: Mark as important},~\ref{Line: important bound 1}, and \ref{Line: important bound 2} it holds that for every $v \in \mathcal{N}(v)$ the algorithm marks at most $\tau \cdot |E(\{v\},\mathcal{N}^2(K))|$ vertices of $K$. Consequently, there are at most
\begin{align*}
\tau \cdot \sum_{v \in \mathcal{N}(K)} |E(\{v\},\mathcal{N}^2(K))| = \tau \cdot |E(\mathcal{N}(K),\mathcal{N}^2(K))|
\end{align*}
marked vertices in $K$, since $\{ E(\{v\},\mathcal{N}^2(K)) \mid v \in \mathcal{N}(v) \}$ forms a partition of $E(\mathcal{N}(K), \mathcal{N}^2(K))$. Hence, $|K| \leq \tau\cdot |E(\mathcal{N}(K),\mathcal{N}^2(K))|$ for every closed critical clique $K$ in $G$.
\end{proof}

\begin{restatable}{theorem}{kernelsize}
\label{Theorem: (k1,c) Kernel for EL-Multi-STC}
\textsc{EL-Multi-STC} admits a problem kernel with at most $(\tau+1)~\cdot~2\ko$ vertices.
\end{restatable}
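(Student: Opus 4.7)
The plan is to exhaustively apply Rule~\ref{Rule: T2 Cliques big kernel} until no further application is possible; each application is polynomial-time and preserves the answer by Proposition~\ref{Proposition: Kernel Algorithm}. In the resulting reduced instance I partition the vertex set as $V = V_O \cup V_C$, where $V_O$ is the union of all \emph{open} critical cliques and $V_C$ the union of all \emph{closed} critical cliques, and bound the two parts separately.

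First I would show $|V_O| \leq 2k_1$. For any vertex $w$ in an open critical clique $K$, the set $\mathcal{N}(K)$ is by definition not a clique, so it contains two non-adjacent vertices $u, u'$. Since both $u$ and $u'$ are adjacent to every vertex of $K$, the triple $u - w - u'$ is an induced $P_3$ in $G$. Fix any $1$-colored STC-labeling of $G$ that uses exactly $k_1$ weak edges; then this $P_3$ must contain a weak edge, and because $\{u, u'\} \notin E$ that weak edge is necessarily incident to $w$. Hence every vertex of $V_O$ is an endpoint of some weak edge, and $|V_O| \leq 2 k_1$.

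For the bound on $V_C$, the key structural observation I would establish is that two distinct closed critical cliques are never adjacent in the critical clique graph: otherwise a short twin-argument shows that their union would still be a clique of pairwise twins whose outside neighbors agree, contradicting the maximality in the definition of a critical clique. Consequently $\mathcal{N}(K) \subseteq V_O$ for every closed critical clique $K$, and so the bound $|K| \leq \tau \cdot |E(\mathcal{N}(K), \mathcal{N}^2(K))|$ provided by Lemma~\ref{Lemma: Upper Bound for T2 cliques} charges $|K|$ against edges anchored at vertices of $V_O$.

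The main obstacle is to turn these per-clique bounds into a global bound $|V_C| \leq \tau \cdot 2 k_1$. To do this I would fix an optimal $1$-colored STC-labeling with weak edge set $W$ of size $k_1$ and associate to each $(K, \{u,v\})$ with $\{u,v\} \in E(\mathcal{N}(K), \mathcal{N}^2(K))$, $u \in \mathcal{N}(K)$, and each $w \in K$, the induced $P_3$ $v - u - w$, whose covering weak edge must be incident to $u$. A case analysis on whether the covering weak edge lies in $E(\mathcal{N}(K), K)$ or in $E(\mathcal{N}(K), \mathcal{N}^2(K))$, combined with the uniqueness of the closed critical clique containing each $w \in V_C$ and the pairwise non-adjacency of closed critical cliques from the structural observation above, should allow one to amortize the contribution of each weak-edge endpoint across closed critical cliques. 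This yields $\sum_{K \text{ closed}} |K| \leq \tau \cdot 2 k_1$, which combined with $|V_O| \leq 2 k_1$ gives $|V| \leq (\tau + 1) \cdot 2 k_1$, as required.
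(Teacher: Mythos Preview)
Your overall plan coincides with the paper's: split the vertices into those lying in open and in closed critical cliques and bound each part separately. Your argument for $|V_O|\le 2k_1$ is correct, and so is your structural observation that two closed critical cliques are never adjacent (if $K_1,K_2$ were adjacent and both closed, any $v\in K_1$, $w\in K_2$ would satisfy $N[v]=N[w]$, contradicting maximality). The paper does not argue either point in detail; it simply cites~\cite{GK18} for both, after a thought-experiment in which every closed clique is shrunk by a factor~$\tau$ so as to reduce to the $\tau=1$ setting of that earlier paper.

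The genuine gap is your amortization for $|V_C|$. Summing the per-clique bounds from Lemma~\ref{Lemma: Upper Bound for T2 cliques} does \emph{not} give $2k_1$: one edge can lie in $E(\mathcal{N}(K),\mathcal{N}^2(K))$ for many closed~$K$. For a concrete witness, take $t\ge 3$ triangles glued at a single vertex~$c$. Each pair $\{a_i,b_i\}$ is a closed critical clique with $\mathcal{N}=\{c\}$ and $|E(\{c\},\mathcal{N}^2)|=2(t-1)$, so $\sum_K |E(\mathcal{N}(K),\mathcal{N}^2(K))|=2t(t-1)$, while $k_1=2(t-1)$; the sum is $\Theta(t)\cdot k_1$, not $O(k_1)$. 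Your sentence ``a case analysis \dots\ should allow one to amortize'' is exactly where the real work sits: you must explain why a single weak edge (or a single weak-edge endpoint) is charged by at most a bounded number of closed critical cliques, and the case split you indicate does not by itself prevent the overcounting above. Either carry out that charging in full, or---as the paper does---reduce to the $\tau=1$ bound established in~\cite{GK18} via the shrinking transformation.
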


\begin{proof} 
Let $(G=(V,E),c,k,\Psi)$ be a reduced instance for \textsc{EL-Multi-STC}. We show that~$|V| \leq (\tau+1)~\cdot~2\ko$.

The overall number of vertices in open critical cliques is at most $2\ko$~\cite{GK20}. Let~$K$ be some closed critical clique. We now transform the graph $G$ into a modified graph ${G}'$ in the following way. 
We replace every closed critical clique $K$ with a critical clique $K'$ such that $|{K}'|=\frac{|K|}{\tau}$. From Lemma~\ref{Lemma: Upper Bound for T2 cliques} we know that $|K| \leq \tau \cdot |E(\mathcal{N}(K),\mathcal{N}^2(K))|$  for every closed critical clique $K$ in $G$. Consequently,~$|{K'}| \leq |E(\mathcal{N}(K),\mathcal{N}^2(K))|$ for every closed critical clique ${K'}$ in ${G'}$. As shown previously, this implies that the overall number of vertices in closed critical cliques in ${G'}$ is at most~$2 \ko$~\cite[Proof of Theorem~1]{GK20}. Hence, the overall number of vertices in closed critical cliques in $G$ is at most $\tau \cdot 2 \ko$, which gives us
$|V| \leq 2k_1 + \tau \cdot 2 \ko = (\tau+1) \cdot 2\ko$.
\end{proof}

\iflong Recall that for \else For the last two statements of Theorem~\ref{Theorem: (c+1)^ko Algorithm for EL-M-STC}, recall that for \fi any \textsc{EL-Multi-STC} instance $(G,c,k,\Psi)$ we have~$\tau \leq 2^c -1$. Also, \textsc{Multi-STC} is the special case of \textsc{EL-Multi-STC} where every edge has the list $\{1,2,\dots,c\}$, and thus~$\tau=1$. These two facts imply the following.

\begin{corollary} \label{Corollary: k1 kernel}
\textsc{EL-Multi-STC} admits a problem kernel with at most $2^{c+1} \ko$ vertices. \textsc{Multi-STC} admits a problem kernel with at most $4 \ko$ vertices. 
\end{corollary}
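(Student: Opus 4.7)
The plan is to derive both bounds directly from Theorem~\ref{Theorem: (k1,c) Kernel for EL-Multi-STC}, which already provides a kernel of size at most $(\tau+1) \cdot 2\ko$, where $\tau := |\Psi(E) \setminus \{\emptyset\}|$ denotes the number of distinct non-empty edge lists in the instance. All of the work is in Theorem~\ref{Theorem: (k1,c) Kernel for EL-Multi-STC}; the corollary just instantiates $\tau$ in the two settings and simplifies.

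For \textsc{EL-Multi-STC}, I would observe that each edge list $\Psi(e)$ is a subset of $\{1,2,\dots,c\}$, and we only count non-empty ones, so $\tau \leq 2^c - 1$. Plugging this into the kernel size bound yields
\[
(\tau+1) \cdot 2\ko \leq 2^c \cdot 2 \ko = 2^{c+1} \ko,
\]
which is the claimed vertex bound. No additional rules are needed: the kernelization of Theorem~\ref{Theorem: (k1,c) Kernel for EL-Multi-STC} runs in polynomial time by Proposition~\ref{Proposition: Kernel Algorithm}, and we simply execute it.

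For \textsc{Multi-STC}, I would use the fact that \textsc{Multi-STC} is the special case of \textsc{EL-Multi-STC} with $\Psi(e) = \{1,2,\dots,c\}$ for every edge $e$. Thus the only non-empty list occurring in the instance is $\{1,\dots,c\}$, so $\tau = 1$, and the same theorem yields $(1+1) \cdot 2\ko = 4\ko$ vertices. The main (and only) subtlety to double-check is that interpreting a \textsc{Multi-STC} instance as an \textsc{EL-Multi-STC} instance preserves equivalence, which is immediate from the definitions of $\Psi$-satisfying labelings. Since everything else is already established, I do not foresee any genuine obstacle in this corollary.
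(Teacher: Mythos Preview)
Your proposal is correct and matches the paper's approach exactly: the paper likewise derives both bounds by plugging $\tau \le 2^c - 1$ (for \textsc{EL-Multi-STC}) and $\tau = 1$ (for \textsc{Multi-STC}) into the $(\tau+1)\cdot 2\ko$ bound of Theorem~\ref{Theorem: (k1,c) Kernel for EL-Multi-STC}.
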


\section{Conclusion}We have provided a first study of the computational complexity of
\textsc{Multi-STC} and two of its generalizations.  There are many interesting research
questions that can be pursued in future work. Most importantly, it is open whether
\textsc{Multi-STC} admits an algorithm with running time~$2^{\Oh(n)}$, even when~$k=0$. Let
us remark that an algorithm with such a running time is also open for \textsc{Edge Coloring}:
so far, it is only known that \textsc{List-Edge Coloring}, where a color list is given for each
edge, admits no~$2^{o(n^2)}$-time algorithm under the ETH. A first step to prove a lower
bound for \textsc{Edge Coloring} could be to prove that a vertex-list version of \textsc{Edge
  Coloring} admits no~$2^{o(n^2)}$-time algorithm as well. 
  
For \textsc{Multi-STC} with~$c=2$, there exists a polynomial kernel for the parameter~$k$: There is a parameter preserving reduction to \textsc{Odd Cycle Transversal}~\cite{ST14} which has a polynomial kernel~\cite{KW14}. Then, since \textsc{Multi-STC} with~$c=2$ is NP-hard and \textsc{Odd Cycle Transversal} is in NP, there exists a polynomial kernel for Multi-STC. However, an interesting open question is if one could think of any direct problem kernelization for the parameter~$k$.

A further direction of research could be to identify other applications of the structural
parameter~$k_1$ which is the vertex cover number of the Gallai graph. Even more generally, it
seems interesting to study which graph problems become tractable when the Gallai graph of the
input graph has a certain structure. To this end, one could further study relations between the
structure of a graph and the structure of its Gallai graph. For example, is it possible to
describe certain natural graph classes compactly via forbidden induced subgraphs of their
Gallai graphs?

\end{document}